\newtheorem{thm}{Theorem}
\newtheorem{lem}{Lemma}
\newtheorem{col}{Corollary}
\newtheorem{proof}{Proof}
\title{\textsc{Entanglement-Enhanced Classical Communication}}
  \author{David A. Herrera Mart\'i}
\renewcommand{\baselinestretch}{1}
\begin{document}
\maketitle
\graphicspath{{figuras/}}

\setcounter{secnumdepth}{3} \setcounter{tocdepth}{3}

\renewcommand{\baselinestretch}{1.2}

\begin{romanpages}
\begin{flushright}
\textsl{A mi madre.}

\textsl{En memoria de mi padre.}
\end{flushright}
\cleardoublepage

\tableofcontents 

\chapter*{Foreword}

This work was born from my desire to unify my two scientific backgrounds: physics and telecommunications. The study of Quantum Information Theory has given me the opportunity to complement these two subjects. The quantum theory predicts significant changes in our concept of computation and information. The conceptual jump from mathematical models to physical reality has outstanding consequences, such as new paradigm of complexity classes, in the case of computation, which allows for solving problems believed to be in NP, such as the \emph{Factoring Problem} or \emph{Discrete Logarithm Problem}, in polynomial time.

This thesis will be focused on the classical capacity of quantum channels, one of the first areas treated by quantum information theorists. The problem is fairly solved since some years. Nevertheless, this work will give me a reason to introduce a consistent formalism of the quantum theory, as well as to review fundamental facts about quantum non-locality and how it can be used to enhance communication. Moreover, this reflects my dwelling in the spirit of classical information theory, and it is intended to be a starting point towards a thorough study of how quantum technologies can help to shape the future of telecommunications.

Whenever it was possible, heuristic reasonings were introduced instead of rigorous mathematical proofs. This finds an explanation in that I am a self-taught neophyte in the field, and just about every time I came across a new concept, physical arguments were always more compelling to me than just maths. The technical content of the thesis is twofold. On one hand, a quadratic classification based on optimization programs that I devised for distinguishing entangled states is presented in Chapter 4. In second place, a less difficult yet I hope equally interesting technical part consists of versions of some proofs throughout the text.

Thanks are due to Teresa and to my family for their encouragement and support. All my love is for them. Almost all figures were possible thanks to the skills of Pedro Jim\'{e}nez. I also acknowledge pleasant conversations and support from all components of the GSIC group in Valencia.

My final words are for my father, suddenly deceased in late may of this year. He taught me the need for curiosity and joy. I owe him most of myself.

\bigskip
\begin{flushright}
Valencia, August 2008
\end{flushright}
\end{romanpages}

\part{Brief Review of Information Theory and Quantum Mechanics}

\chapter{A Mathematical Model for Communication}

Information Theory is mainly concerned about two issues. First one is to establish theoretical bounds to the achievable rates at which information can be compressed from a \emph{source} and conveyed through a \emph{channel}.  To this goal, achievability and converse theorems for different communication scenarios must be found. However, it is important to realize that these theorems are regardless of the complexity and delay of the codes that should attain the bounds. In second place, Information Theory is aimed at finding \emph{practical coding schemes} that perform close to theoretical limits. In this dissertation we will study exclusively first one of these two problems, to which Quantum Mechanics has endowed with a even richer variety of problems.

Most of this chapter is based on the texts \cite{cover2006eit}\cite{csiszar1982itc}\cite{gray1990eai}. Since this chapter is a review of basic concepts, results about stochastic processes and typicality will not be proved.

\section{What is Information?}

Before starting maybe one should face the question ``\emph{what is information?}". How should this ubiquitous and quasi-philosophical process be described mathematically? It seems natural to define information in terms of probability theory, for it is the mathematical framework that formally incorporates the concept of uncertainty about the future.

In the first half of the past century, several meaningful definitions arose, such as Fisher's information (which is a measure of the curvature of the probability distribution) or Hartley's function (the logarithm of the source's alphabet size), in the context of statistics and engineering\footnote{More general and deeper concepts such as R\'enyi's entropy or Kolmogorov's algorithmic complexity and their far reaching implications are not discussed here for the sake of conciseness}.

In 1948, guided by some reasonable assumptions, Shannon came out with \emph{entropy}, H, as a measure for information.Among others, his requirements were that:

\begin{enumerate}
  \item $H(\mathbf{p})$ be continuous on $\mathbf{p}$ ($\mathbf{1}^{T}\mathbf{p} = 1$)
  \item $H(\mathbf{p})$ should be, for $p_{i} = \frac{1}{n}$ a monotonic increasing function of n. This is equivalent to a normalization.
  \item If a choice is broken down into successive choices, the original entropy should be the weighted sum of individual values of the resulting entropies.
\end{enumerate}

It can be shown that the only function, up to a proportionality constant, satisfying these assumptions is\footnote{Throughout this thesis, logarithms will be taken in base 2.}:

\begin{equation}
    H(p) = -\sum^{n}_{i = 1} p_{i} \log p_{i}
\end{equation}

In fact, Shannon's entropy is the epigone of deeper concepts such as the \emph{relative entropy} (also known as Kullback-Leibler distance), or \emph{mutual information}. The relative entropy of two probability distributions is given by:

\begin{equation}
D(\mathbf{p} || \mathbf{q}) = \sum^{n}_{i=1} p_{i} \log \frac{p_{i}}{q_{i}}
\end{equation}

with $\mathbf{1}^{T}\mathbf{p} = \mathbf{1}^{T}\mathbf{q} = 1$, that is, $\mathbf{p}$ and $\mathbf{q}$  belong to the discrete probability simplex of dimension n, $\mathfrak{P}_{n}$. Although in general $D(\mathbf{p} || \mathbf{q})\neq D(\mathbf{q} || \mathbf{p})$, this quantity can be thought of as ``distance" between probability distributions.

Mutual information is the amount of information that a random variable contains about another random variable\footnote{Note that throughout this text, we will often interchange random variables for their induced probability distributions, and viceversa}. Consider X taking values in $\mathcal{X}$ and Y taking values in $\mathcal{Y}$, and let $I(X;Y)$ denote $I(\mathbf{p}^{X};\mathbf{q}^{Y})$, then their mutual information is:

\begin{equation}
    I(X;Y) = \sum^{n}_{i=1}\sum^{m}_{j=1} P^{XY}_{i,j} \log \frac{P^{XY}_{i,j}}{p^{X}_{i}q^{Y}_{j}}
\end{equation}

where $\mathbf{P}^{XY}$ is the joint probability distribution of both random variables. If they are independent, mutual information vanishes, which means that knowing the realization of one random variable does not give any clue about the other one.

In turn, Shannon's entropy is a special case of mutual information, being the information that a random variable contains about itself, $H(\mathbf{p}^{X}) = I(X;X)$. It will suffice to prove some properties of the relative entropy, because they can be straightforwardly extended to mutual information and entropy.

\begin{thm}\label{thm:relent}\textcolor[rgb]{0.00,0.00,1.00}{[Nonnegativity of relative entropy]}The relative entropy is positive semidefinite, $D(\mathbf{p} || \mathbf{q}) \geq 0$
\end{thm}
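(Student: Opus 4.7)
The plan is to recast $D(\mathbf{p}\|\mathbf{q})$ in a form amenable to Jensen's inequality, using the concavity of the logarithm. Specifically, I would write
\begin{equation}
D(\mathbf{p}\|\mathbf{q}) \;=\; \sum_{i=1}^{n} p_i \log\frac{p_i}{q_i} \;=\; -\sum_{i=1}^{n} p_i \log\frac{q_i}{p_i},
\end{equation}
where the sum is understood to range over the support of $\mathbf{p}$, with the usual conventions $0\log 0 = 0$ and $p_i \log(p_i/0) = +\infty$ when $p_i>0$; the degenerate case in which some $q_i=0$ while $p_i>0$ makes the statement trivially true.

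First I would note that $\log$ is concave on $(0,\infty)$, so Jensen's inequality applied to the random variable taking value $q_i/p_i$ with probability $p_i$ yields
\begin{equation}
\sum_{i=1}^{n} p_i \log\frac{q_i}{p_i} \;\leq\; \log\left(\sum_{i=1}^{n} p_i \cdot \frac{q_i}{p_i}\right) \;=\; \log\left(\sum_{i=1}^{n} q_i\right) \;=\; \log 1 \;=\; 0.
\end{equation}
Negating both sides gives exactly $D(\mathbf{p}\|\mathbf{q}) \geq 0$, which is the claim.

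For the equality case, I would invoke the strict concavity of $\log$: Jensen's inequality is saturated only when the ratios $q_i/p_i$ are constant on the support of $\mathbf{p}$. Combined with $\mathbf{1}^T\mathbf{p} = \mathbf{1}^T\mathbf{q} = 1$, this forces the constant to be $1$, hence $\mathbf{p}=\mathbf{q}$.

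The proof is essentially a one-liner once Jensen is invoked, so there is no serious obstacle; the only subtlety worth flagging is bookkeeping about the support of $\mathbf{p}$ and the conventions for $0$ and $\infty$, which is why I would make them explicit at the outset rather than at the end. An equivalent route would replace Jensen by the elementary bound $\ln x \leq x-1$, applied termwise to $x = q_i/p_i$; this avoids any appeal to measure-theoretic expectations and may read more naturally in an introductory chapter, but yields the same conclusion.
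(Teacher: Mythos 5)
Your proof is correct and follows essentially the same route as the paper: rewrite $D(\mathbf{p}\|\mathbf{q})$ as $-\sum_i p_i\log(q_i/p_i)$ and apply Jensen's inequality to the concave logarithm. The only cosmetic difference is that the paper restricts the sum to $\mathrm{Supp}(\mathbf{p})$ and notes explicitly that $\sum_{i\in\mathrm{Supp}(\mathbf{p})} q_i \leq 1$ (so the final step is an inequality, not an equality as written in your display); your support convention already covers this, but the last link in your chain should read $\leq \log 1$ rather than $= \log 1$.
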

\begin{proof} Let $A = Supp(\mathbf{p})$ be the support of $\mathbf{p}$. Then

\begin{eqnarray}
  D(\mathbf{p} || \mathbf{q}) &=& - \sum_{i\in A} p_{i} \log \frac{q_{i}}{p_{i}} \nonumber\\
   &\geq& - \log \sum_{i\in A} p_{i}\frac{q_{i}}{p_{i}} \nonumber\\
   &=& - \log \sum_{i\in A} q_{i}\nonumber
\end{eqnarray}

\begin{eqnarray}
   &\geq& - \log \sum^{n}_{i=1} q_{i}\nonumber\\
   &=& - \log 1 \nonumber\\
   &=& 0
\end{eqnarray}

the first inequality is a consequence of \emph{Jensen's inequality} for convex functions $E[f(\mathbf{p}^{X})] \geq f(E[\mathbf{p}^{X}])$. The second inequality comes from extending the range of the sum.
\end{proof}

\begin{thm}\textcolor[rgb]{0.00,0.00,1.00}{[Convexity of relative entropy]}The relative entropy is a convex function of the probability distributions $\mathbf{p}$ and $\mathbf{q}$
\end{thm}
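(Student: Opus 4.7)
The plan is to establish joint convexity in $(\mathbf{p},\mathbf{q})$, that is, for any two pairs of distributions $(\mathbf{p}_1,\mathbf{q}_1)$ and $(\mathbf{p}_2,\mathbf{q}_2)$ and any $\lambda \in [0,1]$,
\begin{equation}
D\bigl(\lambda\mathbf{p}_1+(1-\lambda)\mathbf{p}_2 \,\|\, \lambda\mathbf{q}_1+(1-\lambda)\mathbf{q}_2\bigr) \leq \lambda D(\mathbf{p}_1\|\mathbf{q}_1) + (1-\lambda) D(\mathbf{p}_2\|\mathbf{q}_2).
\end{equation}
My intended workhorse is the \emph{log-sum inequality}: for nonnegative $a_1,\dots,a_m$ and $b_1,\dots,b_m$,
\begin{equation}
\sum_{k=1}^{m} a_k \log \frac{a_k}{b_k} \;\geq\; \Bigl(\sum_{k=1}^{m} a_k\Bigr) \log \frac{\sum_k a_k}{\sum_k b_k}.
\end{equation}
Once this is in hand, the theorem follows by applying it index-by-index with the two-term choice $(a_1,a_2)=(\lambda p_{1,i},(1-\lambda)p_{2,i})$ and $(b_1,b_2)=(\lambda q_{1,i},(1-\lambda)q_{2,i})$, then summing over $i$. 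Each index contributes exactly the desired convexity inequality for its $i$-th term, and linearity of the sum closes the argument.

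To justify the log-sum inequality I would mimic the Jensen-based argument already used for Theorem~\ref{thm:relent}. Setting $\beta_k = b_k / \sum_j b_j$ and $r_k = a_k/b_k$, the left-hand side can be rewritten as $(\sum_k b_k)\sum_k \beta_k r_k \log r_k$; convexity of $f(t)=t\log t$ combined with Jensen's inequality gives $\sum_k \beta_k f(r_k) \geq f(\sum_k \beta_k r_k)$, which after substituting back is precisely the claimed bound. Degenerate cases ($a_k=0$ or $b_k=0$) are handled by the usual conventions $0\log 0 = 0$ and $0\log(0/0)=0$, with the inequality becoming vacuous or trivially valid whenever $b_k=0$ while $a_k>0$ (both sides are $+\infty$).

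The most delicate step will be verifying that the convex combination passes cleanly through the log-sum inequality at the level of individual indices — in particular, making sure the weights $\lambda$ and $(1-\lambda)$ cancel in exactly the right way so that the factor $\sum_k a_k = \lambda p_{1,i}+(1-\lambda) p_{2,i}$ on the right of the log-sum inequality reproduces the argument of the relative entropy on the left of the target inequality. An alternative route would be to verify convexity analytically by checking that the Hessian of $(p,q)\mapsto p\log(p/q)$ on $\mathbb{R}_{+}^{2}$ is positive semidefinite, but this is computationally heavier and obscures the information-theoretic content, so I will stick with the log-sum approach.
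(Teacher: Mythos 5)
Your proposal is correct and follows essentially the same route as the paper: both apply the log-sum inequality index-by-index to the two-term pairs $(\lambda p_{1,i},(1-\lambda)p_{2,i})$ and $(\lambda q_{1,i},(1-\lambda)q_{2,i})$ and then sum over $i$. The only difference is that you additionally supply a Jensen-based derivation of the log-sum inequality itself, which the paper simply cites.
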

\begin{proof} By the \emph{log sum inequality} $\sum^{n}_{i=1} a_{i} \log \frac{a_{i}}{b_{i}} \geq (\sum^{n}_{i=1} a_{i}) \log \frac{\sum^{n}_{i=1} a_{i}}{\sum^{n}_{i=1} b_{i}}$ \cite{cover2006eit}, we have that:

\begin{equation}
(\lambda p_{i} + (1-\lambda)p'_{i})\log\frac{\lambda p_{i} + (1-\lambda)p'_{i}}{\lambda q_{i} + (1-\lambda)q'_{i}}\leq\lambda p_{i}\log\frac{\lambda p_{i}}{\lambda p_{i}} + (1-\lambda)p'_{i}\log\frac{(1-\lambda)p'_{i}}{(1-\lambda)q'_{i}}
\end{equation}

with $\lambda \in [0,1]$. Summing over the index we get:

\begin{equation}
D(\mathbf{p} + (1-\lambda)\mathbf{p}'||\lambda \mathbf{q} + (1-\lambda)\mathbf{q}')\leq \lambda D(\mathbf{p}||\mathbf{q}) + (1 - \lambda)D(\mathbf{p}'||\mathbf{q}')
\end{equation}

\end{proof}

\begin{col}\textcolor[rgb]{0.00,0.00,1.00}{[Concavity of entropy]}Entropy is a concave function of $\mathbf{p}^{X}$
\end{col}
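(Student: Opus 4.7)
The plan is to reduce the concavity of $H$ to the convexity of relative entropy that was just established, by writing entropy as a constant minus a relative entropy against a fixed reference distribution. Concretely, I would take the uniform distribution $\mathbf{u}$ on $n$ letters with $u_i = 1/n$ and observe the identity
\begin{equation}
D(\mathbf{p}^{X} \| \mathbf{u}) = \sum_{i=1}^{n} p^{X}_{i} \log \frac{p^{X}_{i}}{1/n} = \log n - H(\mathbf{p}^{X}),
\end{equation}
so that $H(\mathbf{p}^{X}) = \log n - D(\mathbf{p}^{X} \| \mathbf{u})$.

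From here the result is almost immediate. The previous theorem asserts joint convexity of $D$ in the pair $(\mathbf{p},\mathbf{q})$, and specializing to the second argument being the fixed distribution $\mathbf{u}$ in both evaluations (i.e.\ setting $\mathbf{q} = \mathbf{q}' = \mathbf{u}$ in the convexity inequality) yields convexity of the map $\mathbf{p} \mapsto D(\mathbf{p} \| \mathbf{u})$. Flipping sign converts convexity into concavity, and adding the constant $\log n$ preserves concavity, so $\mathbf{p}^{X} \mapsto H(\mathbf{p}^{X})$ is concave on the simplex $\mathfrak{P}_{n}$.

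There is essentially no obstacle in this argument; the only point worth spelling out is the passage from joint convexity of $D(\cdot\|\cdot)$ to separate convexity in the first slot, which is the only ingredient besides the identity above. If one preferred a completely self-contained derivation that did not invoke the joint convexity theorem, an alternative would be to use the concavity of the scalar function $x \mapsto -x \log x$ on $[0,1]$ term by term, but routing the proof through relative entropy is more in the spirit of the preceding development and requires no additional calculation.
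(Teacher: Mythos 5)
Your proposal is correct and follows essentially the same route as the paper: both write $H(\mathbf{p}^{X}) = \log\|\mathcal{X}\| - D(\mathbf{p}^{X}\|\mathbf{u})$ with $\mathbf{u}$ the uniform distribution and then invoke the convexity of relative entropy from the preceding theorem. Your version is if anything slightly more complete, since you explicitly note the specialization $\mathbf{q}=\mathbf{q}'=\mathbf{u}$ needed to pass from joint convexity to convexity in the first slot, a step the paper leaves implicit.
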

\begin{proof} Consider the uniform distribution $\mathbf{u}^{X}=\frac{1}{\|\mathcal{X}\|}(1,1,..., 1)$. The relative entropy of distribution $\mathbf{p}^{X}$ with respect to $\mathbf{u}^{X}$ is:

$$D(p||u) = \sum^{n}_{i=1}p_{i}\log p_{i} - \sum^{n}_{i=1}p_{i}\log u_{i} = - H(\mathbf{p}^{X}) + \log \|\mathcal{X}\|$$

so we get:

\begin{equation}
H(\mathbf{p}^{X}) = \log \|\mathcal{X}\| - D(p||u)
\end{equation}

\end{proof}

It is easy to see from this corollary that entropy is upper bounded by the logarithm of the cardinality of the alphabet $H(p^{X}) \leq \log \|\mathcal{X}\| $.

A related important quantity is the \emph{conditional entropy} of a random variable Y given that the instantiation of X is known, i.e. the residual uncertainty about Y once we learn about X.

\begin{equation}
    H(Y|X=x_{i}) = - \sum^{m}_{j=1} q^{Y|X}_{j} \log q^{Y|X}_{j}
\end{equation}

Averaging over all possible outcomes of X:

\begin{equation}
    H(Y|X) = \sum^{n}_{i=1} p^{X}_{i} H(Y | X = x_{i}) = - \sum^{n}_{i=1} \sum^{m}_{j=1} P^{XY}_{i,j} \log q^{Y|X}_{j}
\end{equation}

Clearly there is a reduction in the uncertainty only if there exist a non-factorizable joint probability distribution. In other words, if the two random variables are independent, then $H(Y|X) = H(Y)$. By symmetry arguments one can easily find the relations:

\begin{equation}
    H(X) - H(X|Y) = I(X;Y) = H(Y) - H(Y|X)
\end{equation}

\begin{equation}
    I(X;Y)\leq\min\{H(X), H(Y)\}
\end{equation}

One useful property which makes use of the conditional entropy is the \emph{chain rule for entropy}. Let X, Y and Z be three random variables, then their joint entropy can be written:

\begin{equation}
    H(X,Y,Z) = H(X) + H(Y|X) + H(Z|X,Y)
\end{equation}

which is easily generalizable to any number of random variables.

The convexity of relative entropy has an important consequence for channel coding, as we will see:

\begin{thm}\textcolor[rgb]{0.00,0.00,1.00}{[Partial concavity of mutual information]}For fixed $\mathbf{p}^{Y|X}$, the mutual information is a concave function of $\mathbf{p}^{X}$
\end{thm}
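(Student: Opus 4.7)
The plan is to decompose the mutual information via the identity $I(X;Y) = H(Y) - H(Y|X)$ already established in this section, and then examine how each term depends on $\mathbf{p}^{X}$ once $\mathbf{p}^{Y|X}$ is held fixed. Showing that the first term is concave in $\mathbf{p}^{X}$ and the second is affine in $\mathbf{p}^{X}$ will immediately yield concavity of their difference.

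First I would note that the marginal $q^{Y}_{j} = \sum_{i} p^{X}_{i}\, p^{Y|X}_{j|i}$ is a linear, hence affine, function of $\mathbf{p}^{X}$, since the coefficients $p^{Y|X}_{j|i}$ are fixed. Because entropy is a concave function of its argument (the preceding corollary), and concavity is preserved under precomposition with an affine map, $H(Y)$ regarded as a function of $\mathbf{p}^{X}$ is concave. Next I would observe that $H(Y|X) = \sum_{i} p^{X}_{i}\, H(Y|X=x_{i})$, and each term $H(Y|X=x_{i}) = -\sum_{j} p^{Y|X}_{j|i}\log p^{Y|X}_{j|i}$ depends only on the fixed conditional distribution, hence is a constant with respect to $\mathbf{p}^{X}$. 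Therefore $H(Y|X)$ is a linear functional of $\mathbf{p}^{X}$. Combining the two observations, $I(X;Y) = H(Y) - H(Y|X)$ is a concave function minus an affine function of $\mathbf{p}^{X}$, and is therefore concave.

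I do not foresee any serious obstacle; the only point worth pausing on is the affine-composition argument, which one can verify directly from the definition of concavity if one prefers not to cite it as a standing fact. An alternative route would be to mimic the earlier proof of convexity of relative entropy by applying the log sum inequality directly to the summands defining $I(X;Y)$ at two mixtures of the input distribution, but the decomposition above is shorter and reuses material already at hand.
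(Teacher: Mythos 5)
Your proposal is correct and follows essentially the same route as the paper: both decompose $I(X;Y) = H(Y) - H(Y|X)$, argue that $H(Y)$ is concave in $\mathbf{p}^{X}$ because the output marginal is a linear image of $\mathbf{p}^{X}$ and entropy is concave, and note that $H(Y|X)$ is linear in $\mathbf{p}^{X}$ for fixed $\mathbf{p}^{Y|X}$. Your writeup is in fact slightly cleaner, since you express the linearity of the marginal directly as $q^{Y}_{j} = \sum_{i} p^{X}_{i}\, p^{Y|X}_{j|i}$ rather than through the paper's somewhat awkward Bayes'-rule quotient.
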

\begin{proof} From Bayes' rule:

\begin{equation}
\mathbf{q}^{Y} = \frac{\mathbf{p}^{Y|X}}{\mathbf{q}^{X|Y}}\mathbf{p}^{X}
\end{equation}

$\mathbf{q}^{Y}$ is a linear function of $\mathbf{p}^{X}$, thus $H(\mathbf{p}^{Y})$ is a concave function of $\mathbf{p}^{X}$. The mutual information can be expressed as:

\begin{equation}
I(X;Y) = H(\mathbf{p}^{Y}) - \sum^{n}_{i}\sum^{m}_{j} q^{Y|X}_{j} p^{X}_{i} \log q^{Y|X}_{j}
\end{equation}

The second term is a linear function of $\mathbf{p}^{X}$, hence, the whole expression is concave on $\mathbf{p}^{X}$.

\end{proof}

\section{Simplest Scenario for Communication}

In the simplest case of information transfer, at least three stages can be identified: the source of information (or transmitter), the channel over which messages are sent, and the sink (or receiver). The source is modeled as a probability space $(\Omega, \mathcal{A}_{\Omega},\mu)$. Typically, every outcome of the source will have to be processed in order to build a suitable message which can be sent over the channel. This is mathematically represented by a measurable function from the source's emitted messages to a given alphabet (usually a binary alphabet), and is practically called \emph{coding}. Conversely, in order to transmit the original information to the sink, similar functions ought to be defined on the alphabet of the received messages to the original alphabet (on the assumption that transmitter and receiver share the same language). This involves statistical estimation and is called \emph{decoding}.

The functions $f_{E}, g_{E}, f_{D}$ and $g_{D}$ are measurable functions, so $(\mathcal{W}, \mathcal{A}_{\mathcal{W}})$ and $(\mathcal{\hat{U}}, \mathcal{A}_{\mathcal{\hat{U}}})$ can be viewed as probabilizable spaces with probabilities induced by $p(W=w_{i}) = p(f_{E}^{-1}(w_{i})=u)$, and so on. This notion of inherited probability is fundamental because relative entropy is a function defined on probability simplices.

\begin{figure}[h]
\centering
  \includegraphics[scale=0.4]{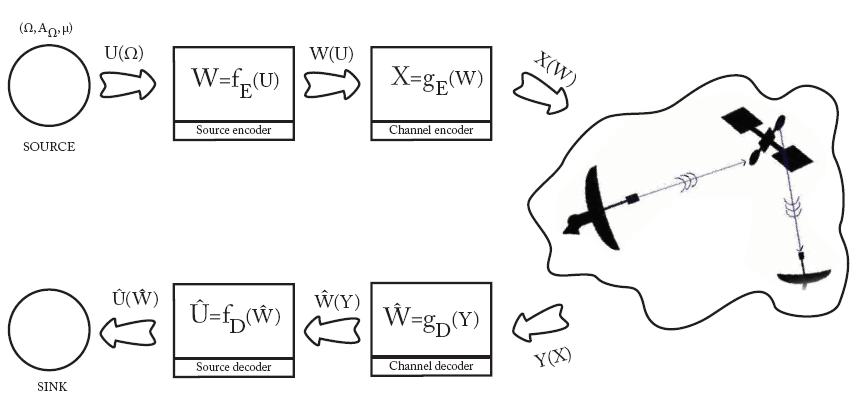}\\
  \caption{Simplest Scenario for Communication: In the sender-receiver scheme, the messages randomly emitted by the source are first compressed at the source encoder and then fed to the channel encoder. Channel encoder will map them to codewords resilient against channel noise, so that the original compressed message can be recovered reliably at the channel decoder. Source decoder will decompress the messages and deliver them to the receiver, or information sink}\label{fig1}
\end{figure}

Remarkably, the process of coding and decoding is absolutely deterministic. Choice is introduced at two levels, of different nature. First one is in the source itself, where the sample space could be \emph{whatever}, i.e. all the thoughts of a person talking on the phone. A random variable U, defined on $\Omega$ and taking values in $\mathcal{U}$ ($\|\mathcal{U}\|=n$) represents the physical resulting messages the that are emitted by the source, i.e. a series of phonemes which are a function of the thoughts of the person who talks. At a second stage, uncertainty is introduced in the channel, and is related to the noise (fading, interference, outages...) that every physical channel induces in an information carrier. In fact, a channel is represented by the tuple $(\mathcal{X},T_{Y|X},\mathcal{Y})$, where $\mathcal{X}$ and $\mathcal{Y}$ are the input and output alphabets, respectively, and $T_{Y|X}$\footnote{A generalized transition matrix would be of the form $T_{Y^{m}|X^{m-1}Y^{m-1}}$ . Here will refer only to discrete memoryless channels without feedback.} is a stochastic transition matrix such that $\mathbf{q}^{Y} = T_{Y|X}\mathbf{p}^{X}$.

Two main questions arise in this context, and that is all Information Theory is concerned about:

\begin{description}
  \item[Channel Capacity] What is the \emph{maximum} rate that can be achieved in sending information over a channel? This question is practically approached in the design of channel encoders-decoders. Later we shall see that
      $$R \leq \max_{p^{X}} I(X;Y)$$
      Error-correcting codes are mainly devoted to maximize this rate.
  \item[Rate-Distortion Theory] What is the \emph{minimum} rate at which one source can be compressed (that is, eliminate redundant parts of the source's outputs) while keeping received messages below a distortion threshold D?
      $$R \geq \min_{p^{\hat{U},U}:d(\hat{U}|U)\leq D} I(\hat{U};U) $$
      In the simplified case where the channel is noiseless, or whenever it is possible to estimate perfectly $\hat{U}$ (or just assume that  $d(\hat{U},U) = 0$), Rate-Distortion reduces to Lossless Data Compression:
      $$I(\hat{U};U) = I(U;U) = H(U)$$
      and the inequality becomes $R \geq H(\mathbf{p}^{U})$
\end{description}

\begin{figure}[h]
\centering
  \includegraphics[scale=0.4]{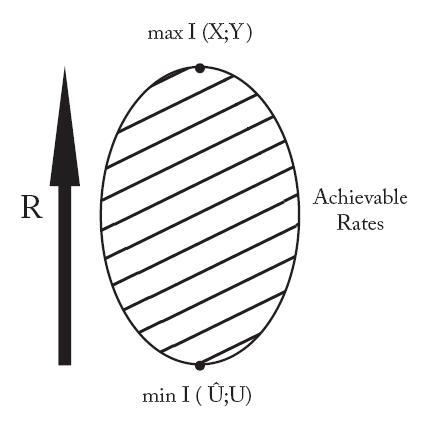}\\
  \caption{All information theory is concerned about: The rate of information transfer upper-bounded by the maximum capacity of inference of the receiver, which is related to the mutual information. On the lower part of the scheme we see that no rate is possible below those allowed for a given distortion threshold. For a discrete noiseless channel, the distortion can be taken to be zero and the lower bound reduces to the entropy of the source.}\label{fig2}
\end{figure}

There exist a nice duality between these two problems that can be appreciated when they are expressed as optimization programs \cite{chiang2004gpd}.

\section{Asymptotic Equipartition Property}

Typically sources will emit more than one output. Thus, we need to characterize them as \emph{stochastic processes} rather than as just random variables. Consider a source described by $(\Omega, \mathcal{A}_{\Omega},\mu)$ and $T:\Omega\rightarrow\Omega$ which plays the role of a time shift in the sample space. This is a dynamical system and one can derive a stochastic process from it $U_{j}(T^{j}\omega) = u_{j}, w\in\mathcal{A}_{\Omega}$.

If for all $w\in\mathcal{A}_{\Omega}$, we have that $\mu(T^{j}\omega) = \mu(\omega) = 1$ or $0$, then the source is ergodic and stationary, and Birkhoff's Theorem holds \cite{gray1990eai}:

\begin{equation}
    \lim_{m\rightarrow\infty}\frac{1}{m}\sum^{m}_{j=1} U_{j} \stackrel{\mathcal{P} = 1}{\longrightarrow} E[U_{1}] = \int U_{1} d\mu
\end{equation}

where $\mathcal{P} = 1$ denotes convergence \emph{with probability 1}. If we now consider the sequence $\{U_{j}\}^{m}_{j=1}$ and regard $\log\frac{1}{p^{U^{j}}} = -\sum^{m}_{j=1}\log p^{U_{j}|U^{j - 1}}$ as a random variable itself\footnote{$\mathbf{p}^{X}$ in boldface denotes a probability distribution, while $p^{X}$ will denote the probability of a particular occurrence of $X$,  $p(X=x)$}\footnote{$U^{j}$, with upper index, is a shorthand for the sequence $U_{1}U_{2}...U_{j}$}, function of $U^{j}$, then:

\begin{equation}
    \lim_{m\rightarrow\infty}-\frac{1}{m}\sum^{m}_{j=1} \log p^{U_{j}|U^{j - 1}}\stackrel{\mathcal{P} = 1}{\longrightarrow} E[-\log p^{U_{j}|U^{j - 1}}] = H(\mathcal{U})
\end{equation}

$H(\mathcal{U})$ is the \emph{entropy rate} of the stochastic process. It can be interpreted as the entropy of the last random variable in the sequence given all its past. Finally we get:

\begin{equation}
    \lim_{m\rightarrow\infty}-\frac{1}{m}\log p^{U^{m}} \stackrel{\mathcal{P} = 1}{\longrightarrow} H(\mathcal{U})
\end{equation}

Since convergence with probability 1 implies convergence \emph{in probability}, it is possible to write:

\begin{equation}
    \mathcal{P}(|-\frac{1}{m}\log p^{U^{m}} - H(\mathcal{U})|\leq \epsilon) \stackrel{m\rightarrow\infty}{\longrightarrow} 1, \forall\epsilon
\end{equation}

whence we obtain:

\begin{equation}
    2^{-m(H(\mathcal{U}) - \epsilon)} \leq p^{U^{m}} \leq 2^{-m(H(\mathcal{U}) + \epsilon)}
\end{equation}

Hence, for a fixed probability $\mathbf{p}^{U}$, the most likely sequences have an empirical entropy arbitrary close to the true entropy. Practically all probability mass will be localized at a proper subset of the set of all possible output sequences. This characteristic of the sequences, direct consequence of ergodicity, is called \emph{Asymptotic Equipartition Property} because as m (the length of the sequence) grows, most likely sequences tend to be grouped in a proper subset called the \emph{Typical Set} $\mathcal{T}^{m}_{\epsilon}$, whose cardinality is $2^{mH(\mathcal{U})-\epsilon} \leq ||\mathcal{T}|| \leq 2^{mH(\mathcal{U})+\epsilon}$, and gather almost all probability ($\mathcal{P}(\mathcal{T}) = 1-\epsilon$), whereas unlikely sequences tend to have a vanishing probability. Also, as m tends to infinity, all typical sequences become equally probable\footnote{This is in analogy with ensembles of statistical mechanics, where all points in phase space are assumed to be equally likely}.

For simplicity, we will restrict ourselves to stationary, independent, identically distributed (i.i.d.) processes, in which case the entropy rate takes the form:

\begin{equation}
    H(\mathcal{U}) = \lim_{m\rightarrow\infty}\frac{H(\mathbf{p}^{U^{m}})}{m} = H(\mathbf{p}^{U})
\end{equation}

which can be interpreted as the \emph{entropy per symbol} of m random variables. Finally we come to a weak version of the Asymptotic Equipartition Property. :

\begin{equation}
    \lim_{m\rightarrow\infty}-\frac{1}{m}\log p^{U^{m}} \stackrel{i. p.}{\longrightarrow} H(\mathbf{p}^{U})
\end{equation}

now convergence is in probability.

Similarly, it is possible to define \emph{jointly typical} sequences $(X^{m},Y^{m})$ with respect to a joint probability distribution $p^{X,Y}$ as the sequences for which:

\begin{equation}
    \lim_{m\rightarrow\infty}-\frac{1}{m}\log p^{X^{m}} \stackrel{i. p.}{\longrightarrow} H(\mathbf{p}^{X})
\end{equation}

\begin{equation}
    \lim_{m\rightarrow\infty}-\frac{1}{m}\log p^{Y^{m}} \stackrel{i. p.}{\longrightarrow} H(\mathbf{p}^{Y})
\end{equation}

\begin{equation}
    \lim_{m\rightarrow\infty}-\frac{1}{m}\log p^{X^{m}Y^{m}} \stackrel{i. p.}{\longrightarrow} H(\mathbf{p}^{XY})
\end{equation}

As before, in the asymptotical limit, only typical pairs will take place:

\begin{equation}
    \mathcal{P}(|-\frac{1}{m}\log p^{X^{m}Y^{m}} - H(\mathbf{p}^{XY})|\leq \epsilon)\stackrel{m\rightarrow\infty}{\longrightarrow} 1, \forall\epsilon
\end{equation}

\section{Shannon's Source and Channel Coding Theorems}

In his foundational paper \cite{shannon1948mtc}, Shannon laid the basements of Information Theory. He stated both problems above exposed (source and channel coding) and first offered a solution. For this, he used the concept of \emph{random coding}, which is not to be understood as random map between alphabets, but rather as a proof of existence of at least one coding scheme that attains the bound. However, his derivations were based on (weak) typicality and were only asymptotically optimal, therefore being of little interest until practical codes were found which performed close to the limit.

\subsection{Source Coding}

Consider a source that generates a random sequence of outputs $\{U_{j}\}^{m}_{j=1}$, and an encoding function:

$$f_{E} : U^{m} \rightarrow W(u^m)$$

with $W\in\mathcal{W} = \{1, 2, ..., 2^{mR}\}$. Here message $W$ is indexed by the instantiation of the sequence $u^m$. The cardinality of $\mathcal{W}$ will be $\|\mathcal{W}\| = 2^{mR}$, where $R$ is the rate of the code. Most commonly $\mathcal{W}\subseteq\{0,1\}^{*}$ and $W$ will be a sequence of bits. $\mathcal{W}$ is the codification of the source.

In order to quantify the \emph{fidelity} of the code one should follow one of following criteria:

\begin{itemize}
  \item $d(\hat{U},U)\leq \epsilon, \forall\epsilon$
  \item $\mathcal{P}(\hat{U}=U)\geq 1-\epsilon, \forall\epsilon$
\end{itemize}

We will use the second one, which is best suited for derivations based on weak typicality.

\begin{figure}[h]
\centering
  \includegraphics[scale = 0.4]{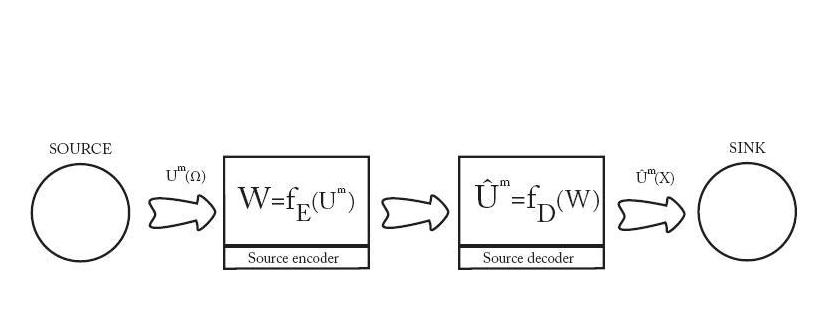}\\
  \caption{Source Coding: The sequences emitted by the source will typically have a redundant part, due to possible correlations between symbols or strings. These redundant parts don't contain much information and it is desirable to get rid of them so that no so many channel uses are required to transmit the source. This redundancy is quantified by the entropy rate of the source, since sequences of length $m$ are mapped (on average) to sequences of length $mH(U)$, which will be typically shorter.}\label{fig3}
\end{figure}

\begin{thm}\textcolor[rgb]{0.00,0.00,1.00}{[Lossless Source Coding]}An i.i.d. source $\{U_{j}\}^{m}_{j=1}$ can be reliably compressed (with vanishing probability of error) at a rate $R$ if and only if $R \geq H(\mathbf{p}^{U})$.
\end{thm}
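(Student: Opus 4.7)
The plan is to prove both directions using the Asymptotic Equipartition Property established in the previous section. The central object in both halves of the argument is the typical set $\mathcal{T}_{\epsilon}^{m}$, whose cardinality lies between $2^{m(H(\mathbf{p}^{U})-\epsilon)}$ and $2^{m(H(\mathbf{p}^{U})+\epsilon)}$ and whose total probability exceeds $1-\epsilon$ for $m$ large enough. Once these facts are in hand, achievability is essentially a counting exercise and the converse is a pigeonhole argument.

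For achievability (the ``if'' direction), I would fix $\epsilon>0$ and design an encoder $f_{E}$ that enumerates the elements of $\mathcal{T}_{\epsilon}^{m}$ with binary indices of length $\lceil m(H(\mathbf{p}^{U})+\epsilon)\rceil+1$, reserving a leading bit as a flag that distinguishes typical from atypical sequences; every atypical sequence is mapped to a single reserved ``error'' codeword. The decoder $g_{E}$ is the obvious inverse on typical sequences. Then $\mathcal{P}(\hat{U}\neq U)$ is at most the probability that the emitted sequence is atypical, which is bounded by $\epsilon$ via AEP. Since $\epsilon$ is arbitrary and the rate used is $H(\mathbf{p}^{U})+\epsilon+1/m$, every $R>H(\mathbf{p}^{U})$ is achievable, and taking $\epsilon\to 0$ yields the threshold $R\geq H(\mathbf{p}^{U})$.

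For the converse (the ``only if'' direction), I would suppose $R<H(\mathbf{p}^{U})$, fix $\epsilon>0$ with $H(\mathbf{p}^{U})-R-\epsilon>0$, and observe that any encoder of rate $R$ produces at most $2^{mR}$ distinct codewords, so the decoder can correctly recover only a set $\mathcal{R}\subseteq\mathcal{U}^{m}$ of cardinality $\leq 2^{mR}$. Splitting the success probability as the mass of $\mathcal{R}\cap\mathcal{T}_{\epsilon}^{m}$ plus the mass of $\mathcal{R}\setminus\mathcal{T}_{\epsilon}^{m}$, the second term is bounded by $\epsilon$ since atypical sequences collectively have vanishing probability, while the first term is bounded, using the per-sequence upper bound $2^{-m(H(\mathbf{p}^{U})-\epsilon)}$ from AEP, by $2^{mR}\cdot 2^{-m(H(\mathbf{p}^{U})-\epsilon)}=2^{-m(H(\mathbf{p}^{U})-R-\epsilon)}$, which tends to $0$ as $m\to\infty$. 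Therefore $\mathcal{P}(\hat{U}=U)\to 0$, contradicting reliability.

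The main obstacle is nothing conceptually deep but rather the careful choice of $\epsilon$ in the converse: it must be small enough that the exponent $H(\mathbf{p}^{U})-R-\epsilon$ remains strictly positive, so that the ``good'' contribution from typical sequences inside $\mathcal{R}$ decays exponentially. Everything else follows from the weak AEP established in the previous section, and the argument extends verbatim to any stationary ergodic source by replacing $H(\mathbf{p}^{U})$ with the entropy rate $H(\mathcal{U})$.
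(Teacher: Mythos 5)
Your proof is correct, but it takes a different route from the paper's in both directions. For achievability, you construct an explicit deterministic code by enumerating the typical set $\mathcal{T}^{m}_{\epsilon}$ with a flag bit, so the only error event is atypicality; the paper instead uses random coding (random assignment of sequences to the $2^{mR}$ codewords), so it must also control the collision event $E_{1}$ in which two typical sequences receive the same codeword, bounding it by $2^{m(H(\mathbf{p}^{U})+\epsilon)}2^{-mR}$. Your enumeration argument is simpler and constructive here, while the paper's random-binning argument is the one that generalizes to settings (e.g.\ distributed source coding) where explicit enumeration is unavailable. For the converse, you use a counting argument: any decoder can recover at most $2^{mR}$ sequences correctly, and the probability mass of any such set, split into its typical and atypical parts, vanishes when $R<H(\mathbf{p}^{U})$. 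This actually proves the \emph{strong} converse ($\mathcal{P}(\hat{U}=U)\rightarrow 0$ below entropy), which is stronger than what the paper establishes via Fano's inequality ($P_{e}$ bounded away from zero); the paper's Fano route, on the other hand, is the template it reuses for the channel-coding and quantum converses later in the text. Both of your arguments rest on exactly the AEP facts the paper develops, and your closing remark that the argument extends to stationary ergodic sources by replacing $H(\mathbf{p}^{U})$ with the entropy rate $H(\mathcal{U})$ is consistent with the paper's setup.
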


The proof is based on asymptotical expressions, so it will be optimal in the limit $n\rightarrow\infty$

\begin{proof}[$\Rightarrow$Proof of Achievability] An error occurs whenever one of the following events happen:

\begin{itemize}
  \item The sequence is not typical: $E_{0} = \{ U^{m}\not\in \mathcal{T}^{m}_{\epsilon} \}$
  \item A codeword is indexed by more than one typical sequence\footnote{This is a consequence of random coding: in choosing a code at random we risk of selecting a bad code.}: $E_{1} = \exists u'^{m}\neq U^{m}, u'^{m}\in\mathcal{T}^{m}_{\epsilon} : W(U^{m}) = W(u'^{m})$
\end{itemize}

Using the independence bound, the error probability is:

\begin{eqnarray}
  P_{e} &\leq& P\{E_{0}\} + P\{E_{1}\}\nonumber\\
  &\leq& \epsilon +  \sum_{U^{m}\in\mathcal{T}^{m}_{\epsilon}} \sum_{\substack{u'^{m}\in\mathcal{T}^{m}_{\epsilon}\\ u'^{m}\neq U^{m}}}p^{U^{m}} P\{ W(U^{m}) = W(u'^{m})\} \nonumber\\
  &\leq& \epsilon + \sum_{U^{m}} \sum_{u'^{m}\in\mathcal{T}^{m}_{\epsilon}} p^{U^{m}} 2^{-mR}\nonumber\\
  &=& \epsilon +  \sum_{u'^{m}\in\mathcal{T}^{m}_{\epsilon}} 2^{-mR} \sum_{U^{m}}p^{U^{m}}\nonumber\\
  &=& \epsilon +  \sum_{u'^{m}\in\mathcal{T}^{m}_{\epsilon}} 2^{-mR}\nonumber\\
  &\leq& \epsilon + 2^{m(H(\mathbf{p}^{U}) + \epsilon)} 2^{-mR}\nonumber
\end{eqnarray}

Second inequality is obtained using the independence bound and averaging over all typical sequences. Also, $P\{E_{0}\}\rightarrow 0$ as $m$ grows. Third inequality is obtained by enlarging the range of the sums. Note that, for equiprobable codewords, the likelihood that two are indexed by the same sequence is $2^{mR}$. Last inequality follows from typicality arguments. Thus, in the asymptotic limit where $m\rightarrow\infty$, if:

$$R \geq H(\mathbf{p}^{U}) + \epsilon$$

the probability of error vanishes.
\end{proof}

\begin{proof}[$\Leftarrow$Weak Converse] For codes with asymptotically vanishing probability of error, the rate must necessarily satisfy $R \geq H(\mathbf{p}^{U})$. To this aim, we will make use of \emph{Fano's inequality}, which relates the probability of error to the conditional entropy of a sequence $U^{m}$ given its associated codeword $X$. It can be easily derived, so we don't prove it here:

\begin{equation}\label{fano}
    H(U^{m}|W)\leq mP_{e}\log\|\mathcal{U}\| + 1 = m\epsilon_{m}
\end{equation}

where $\epsilon_{m} = P_{e}\log\|\mathcal{U}\| + \frac{1}{n}\rightarrow 0$ as $m$ grows.

\begin{eqnarray}
  mR &\geq& H(\mathbf{p}^{W}) \nonumber\\
  &=& I(U^{m};W) + H(W|U^{m}) \nonumber\\
  &=& I(U^{m};W) \nonumber\\
  &=& H(\mathbf{p}^{U^{m}}) - H(U^{m}|W) \nonumber\\
  &\geq& mH(\mathbf{p}^{U}) - m\epsilon_{m}\nonumber
\end{eqnarray}

First inequality comes from the upper bound of entropy. Since knowing $U^{m}$ eliminates the uncertainty about $W$, we have the second equality. In the second inequality we have used Fanno's inequality. The source is modeled by an i.i.d process so $H(\mathbf{p}^{U^{m}}) = mH(\mathbf{p}^{U})$.
\end{proof}

\subsection{Channel Coding}

A channel is characterized by a the tuple $(\mathcal{X},T_{Y|X},\mathcal{Y})$, where is $T_{Y|X}$ a map between the probability simplices corresponding to the input and output alphabet. While source coding is aimed at eliminating redundant parts of source's output (for this reason named data compression), the goal of channel coding is to introduce some redundancy in a controlled way, such that it helps to fight the errors induced by the channel, and is suitably called error-correction.

Let $g_{E}$ be a channel encoding function:

$$g_{m} : W \rightarrow X^m(w)$$

here $W\in\mathcal{W} = \{1, 2, ..., 2^{mR}\}$. Each codeword $X^{m}$ is indexed by a message as before, and usually $\mathcal{X}^{m}\subseteq\{0,1\}^{*}$.

The \emph{capacity} of a discrete memoryless channel without feedback is defined:

\begin{equation}\label{capacity}
    C = \max_{\mathbf{p}^{X}} I(X;Y)
\end{equation}

and it is an upper bound on the attainable rates at which communication can take place.

\begin{figure}[h]
\centering
  \includegraphics[scale=0.4]{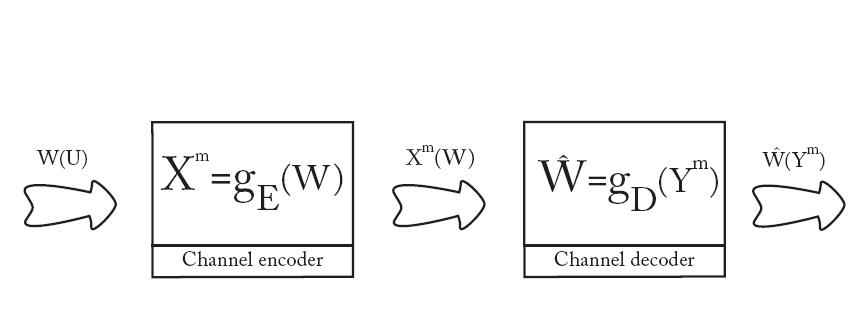}\\
  \caption{Channel Coding: At this stage, the compressed sequences are steered into larger sequences by means of introducing redundancy. The point is that, whereas the redundancy of the source's outputs was of little use, the overhead introduced by the channel encoder can be used to recover the original message even if it is corrupted by noise (but not too much). The mapping that the channel encoder performs receives the name of error-correcting code.}\label{fig4}
\end{figure}

\begin{thm}\textcolor[rgb]{0.00,0.00,1.00}{[Channel Coding]} A channel $(\mathcal{X},T_{Y|X},\mathcal{Y})$ can be used to transmit information reliably if and only if $R\leq C$
\end{thm}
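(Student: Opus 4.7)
The plan is to prove both directions with the standard random coding and Fano's inequality arguments, in parallel to the source coding proof already given. I would fix an input distribution $\mathbf{p}^{X}$ and use it both as the random coding distribution (for achievability) and to bound the mutual information of the induced joint distribution (for the converse), ultimately optimizing over $\mathbf{p}^{X}$ to recover $C$.

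For achievability ($R<C$), I would generate a codebook $\{X^{m}(w)\}_{w=1}^{2^{mR}}$ by drawing each codeword i.i.d.\ according to $\mathbf{p}^{X}$ (the maximizer in \eqref{capacity}), announce it to sender and receiver, and use \emph{jointly typical decoding}: on receiving $Y^{m}$, declare $\hat{W}=w$ iff $(X^{m}(w),Y^{m})\in\mathcal{T}^{m}_{\epsilon}$ is unique. The analysis splits the error into two events, mirroring the source-coding proof: $E_{0}=\{(X^{m}(W),Y^{m})\notin\mathcal{T}^{m}_{\epsilon}\}$, which vanishes by the joint AEP already stated in the excerpt; and $E_{i}=\{(X^{m}(i),Y^{m})\in\mathcal{T}^{m}_{\epsilon}\}$ for $i\neq W$. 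For the latter, since $X^{m}(i)$ is independent of $Y^{m}$ when $i\neq W$, the standard counting bound on $|\mathcal{T}^{m}_{\epsilon}|$ combined with the near-uniformity of typical marginals gives $P\{E_{i}\}\leq 2^{-m(I(X;Y)-3\epsilon)}$. Applying the union bound then yields
\begin{equation}
P_{e}\leq\epsilon+(2^{mR}-1)\,2^{-m(I(X;Y)-3\epsilon)},
\end{equation}
which tends to $0$ as $m\to\infty$ whenever $R<I(X;Y)-3\epsilon$. Choosing $\mathbf{p}^{X}$ to be the capacity-achieving input distribution concludes one direction. A standard expurgation step (discard the worst half of the codewords) upgrades the average error bound to a maximal one if desired.

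For the converse ($R>C$ impossible), I would follow the template of the source-coding converse, replacing the compressed message with the channel output $Y^{m}$ and invoking Fano's inequality \eqref{fano} in the form $H(W|Y^{m})\leq m\epsilon_{m}$ whenever $P_{e}\to 0$. The chain is
\begin{eqnarray}
mR &=& H(W)\nonumber\\
&=& I(W;Y^{m})+H(W|Y^{m})\nonumber\\
&\leq& I(X^{m};Y^{m})+m\epsilon_{m}\nonumber\\
&\leq& \sum_{j=1}^{m}I(X_{j};Y_{j})+m\epsilon_{m}\nonumber\\
&\leq& mC+m\epsilon_{m},
\end{eqnarray}
where the first inequality uses the data-processing property $I(W;Y^{m})\leq I(X^{m};Y^{m})$ (since $W\to X^{m}\to Y^{m}$ is a Markov chain), the second uses the memoryless, feedback-free nature of the channel to upper bound $I(X^{m};Y^{m})$ by the sum of per-letter mutual informations, and the last step uses the definition \eqref{capacity} of $C$ applied to each marginal $\mathbf{p}^{X_{j}}$. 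Dividing by $m$ and letting $m\to\infty$ gives $R\leq C$.

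The main obstacle is the bound $P\{E_{i}\}\leq 2^{-m(I(X;Y)-O(\epsilon))}$ for independent codewords: it requires the full joint AEP (including the cardinality estimate $|\mathcal{T}^{m}_{\epsilon}|\leq 2^{m(H(X,Y)+\epsilon)}$ together with lower bounds $p^{X^{m}}\geq 2^{-m(H(X)+\epsilon)}$ and analogously for $Y^{m}$), which the excerpt states only in convergence-in-probability form. I would therefore need to first invoke the stronger version of joint typicality giving those exponential estimates before the union-bound computation goes through. A subtler point in the converse is justifying the sub-additivity step $I(X^{m};Y^{m})\leq\sum_{j}I(X_{j};Y_{j})$, which follows from the memoryless assumption on $T_{Y|X}$ and the chain rule for entropy together with the fact that conditioning reduces entropy; everything else is a direct transcription of the source-coding converse.
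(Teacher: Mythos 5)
Your proposal is correct and follows essentially the same route as the paper: random coding with jointly typical decoding and the counting bound $P\{E_{i}\}\approx 2^{-m(I(X;Y)-O(\epsilon))}$ for achievability, and Fano's inequality combined with data processing and single-letterization of $I(X^{m};Y^{m})$ for the converse. Your added remarks about needing the exponential form of the joint AEP (rather than the convergence-in-probability statement) and the optional expurgation step are sensible refinements of details the paper glosses over, but they do not change the argument.
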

\begin{proof}[$\Rightarrow$Proof of Achievability]The probability of error, averaged over all possible codes $\mathcal{C}$, is:

\begin{eqnarray}
  P_{e} &=& \sum_{\mathcal{C}}p(\mathcal{C})P_{e}(\mathcal{C}) \nonumber\\
   &=& \sum_{\mathcal{C}}p(\mathcal{C}) 2^{-mR}\sum^{2^{mR}}_{w=1}\lambda_{w}(\mathcal{C})\nonumber\\
   &=& \sum_{\mathcal{C}}p(\mathcal{C}) \lambda_{1}(\mathcal{C})\nonumber\\
\end{eqnarray}

Here $\lambda_{w} = P\{\hat{W} \neq w | W = w\}$ is the conditional probability of error given that message $w$ was sent. A random choice of code $\mathcal{C}$ symmetrizes the probabilities. Thus we will only need to consider the error probability for one codeword. Consider the event:

$$E_{w} = \{(X^{m}(w),Y^{m})\in\mathcal{T}^{m}_{\epsilon}\}$$

that is, both sequences are jointly typical. There are about $2^{mH(X,Y)}$ such pairs of sequences. The probability of error can then be expressed as:

\begin{eqnarray}
  P_{e} &=& P\{\tilde{E}_{1}\cup\bigcup^{2^{mR}}_{i=2}E_{i}\} \nonumber\\
  &\leq& \epsilon + \sum^{2^{mR}}_{i=2}P\{E_{i}\} \nonumber\\
  &=& \epsilon + \sum^{2^{mR}}_{i=2} \frac{2^{mH(X|Y)\pm \epsilon}}{2^{mH(X)\pm \epsilon}} \nonumber\\
  &\leq& \epsilon + (2^{mR} - 1)2^{m(H(X|Y) - H(X) - 2\epsilon)} \nonumber\\
  &\leq& \epsilon + 2^{m(R - I(X;Y) - 2\epsilon)} \nonumber\\
\end{eqnarray}

$\tilde{E}_{1}$ is the complementary event of $E_{1}$, and its probability vanishes as $m$ grows. The second equality is obtained from joint typicality arguments: For a given output sequence $Y^{m}$, there are about $2^{mH(X|Y)}$ jointly typical input sequences $X^{m}$. Since there are about $2^{mH(X)}$ codewords, the probability that two different codewords are jointly typical with a received sequence is  $2^{-m(I(X;Y) \pm 2\epsilon}$. Thus, the error probability will tend to zero as long as $R < I(X;Y) + 2\epsilon$.
\end{proof}

\begin{proof}[$\Leftarrow$Weak Converse] Once again, we will make use of Fano's inequality (see \ref{fano}), but now the roles are somewhat interchanged:

\begin{equation}\label{fanno2}
H(W|Y^{m})\leq mP_{e}R + 1 = m\epsilon_{m}
\end{equation}

Assuming that the messages $W$ are equiprobable:

\begin{eqnarray}
  mR &=&  H(\mathbf{p}^{W})\nonumber\\
  &=& I(W;Y^{m}) + H(W|Y^{m}) \nonumber\\
  &\leq& I(X^{m};Y^{m}) + H(W|Y^{m}) \nonumber
\end{eqnarray}

\begin{eqnarray}
  &\leq& I(X^{m};Y^{m}) + m\epsilon_{m} \nonumber\\
  &=& H(\mathbf{p}^{Y^{m}}) - H(Y^{m}|X^{m}) + m\epsilon_{m}\nonumber\\
  &\leq& \sum^{m}_{i=1}[H(\mathbf{p}^{Y_{i}}) - H(Y_{i}|X_{i})] + m\epsilon_{m}\nonumber\\
  &=& \sum^{m}_{i=1} I(X_{i};Y_{i}) + m\epsilon_{m}\nonumber\\
  &\leq& mC + m\epsilon_{m}\nonumber
\end{eqnarray}

The first inequality comes from the fact that the information contained in $Y^{m}$ about $W$ should be less or equal to the information that $Y^{m}$ contains about $X^{m}$ since $X^{m}$ is a function of $W$. Second inequality comes from Fanno's inequality. Third one comes from the independence bound. The fourth one comes from the definition of capacity (\ref{capacity}), as the maximum attainable mutual information. So as $m$ grows, the probability of error goes to zero, $m\epsilon_{m}\rightarrow 0 $, and then we have that $R\leq C$

\end{proof}

\chapter{Quantum Mechanics as a Statistical Theory}

Physical Theories deal with observable features of Nature. For a theory to be accepted, it must be capable of predicting the outcomes of experiments and phenomena within its logical framework. Otherwise they are obliged to dwell the realm of mathematical games. This implies that any theory must account for measurements, that is, besides describing Nature, it must describe how we obtain knowledge from Nature. Since scientific theories rely on evidence for justification, this should be done on a statistical basis. Measurements are subject to statistical fluctuations, although several theories obviate this fact due to the invariant nature of their observations, such as astronomy. However, in general, any theory ought to include a complete \emph{statistical model} that allows to infer system properties from measurement outcomes.

A statistical model is a part of any theory, and it consists of:

\begin{description}
  \item[Preparations] This refers to the states of the systems under consideration, like the setup of an experiment, which in classical theories are directly related to a point in phase space.
  \item[Measurements] Procedures by which physicists glean information about the systems from obtained data, which are obviously correlated to its state.
\end{description}

Mathematically this pair is denoted $(\mathfrak{S},\mathfrak{M})$, where $\mathfrak{S}$ is the set of all possible preparations and $\mathfrak{M}$ is the set of all possible measurements on these preparations.

There may be, and there usually is, uncertainty associated to both preparations and measurements. What makes Quantum Mechanics \emph{different} is indeed which these kinds of uncertainty are.

Most contents of this chapter can be found in \cite{holevo1982paq}\cite{peres1995qtc}\cite{helstrom1968sts}\cite{nielsen2000qca}.

\section{Quantum Formalism}

In Quantum Mechanics a state is defined as an \emph{equivalence class of preparations}. This means that two states are to be considered equivalent if their preparations lead to parallel vectors in state space\footnote{Here, state space is a Hilbert space $\mathcal{H}$ where vectors $|\psi\rangle\in\mathcal{H}$ represent preparations. Two vectors are equivalent if they are parallel, that is, if they are the same up to a proportionality constant. For this reason, at a basic level we will identify states with rays in Hilbert space, rather than vectors.}

Quantum Mechanics arises classically as a probabilistic theory, due to a very fundamental property of sub-microscopic systems, known as the \emph{Superposition Principle}, by virtue of which a quantum system may find itself in a complex linear combination of states. This is the hallmark of Quantum Mechanics. This property, together with the definition of state in previous paragraph, leads to a statistical model where the set of preparations is \emph{strongly convex}, in contrast to classical statistical models, where they are just convex.

The outcome of a measurement will depend probabilistically on the respective weights of the superposed states. This demands that experimenters be able of obtaining statistical ensembles of the same state in order to contrast experimental data with theoretic predictions. This automatically leads to two different (but closely interrelated) notions of probability. First one is related to the fundamental behavior of the sub-microscopic world, and second one (somewhat more classical) concerns the distribution of ensembles.

\subsection{Set of States is Convex}

The need for both quantum uncertainties and classical ensembles is best met within the \emph{C*-Algebra} formalism. A C*-Algebra $\mathfrak{C}$ is a Banach\footnote{Loosely stated, a Banach space is a Hilbert space where orthogonality is not necessarily defined.} space with unit $\mathbf{1}$ and a *-involution such that:

\begin{equation}
    \|AB\| \leq\|A\|\|B\|
\end{equation}

\begin{equation}
    \|A\|^{2}=\|A^{*}\|^{2}=\|AA^{*}\|
\end{equation}

with $A,A^{*},B \in\mathfrak{C}$. $A^{*}=A^{\dagger}$ stands for the adjoint of $A$, meaning that the algebra is closed under the adjoint operation.

Every C*-Algebra can be seen as a *-subalgebra of the algebra of bounded operators on a Hilbert space $\mathcal{H}$, $\mathcal{B}(\mathcal{H})$ \cite{bratteli1996oaa}, so it inherits the inner product:

\begin{equation}
    \langle A,B\rangle = Tr(A^{\dagger}B)
\end{equation}

Now consider the algebra $\mathfrak{A}\subseteq\mathcal{B}(\mathcal{H})$. A state $\varrho$ is a positive linear functional on this subalgebra, that maps elements in the positive cone $\mathfrak{A}_{+}$ of $\mathfrak{A}$ to nonnegative real numbers. We will only consider those functionals that fulfil $\varrho(\mathbf{1})=1$, for reasons to become clear in a while.

Let $A\in\mathfrak{A}_{+}$ be a positive operator, then one can establish the one-to-one correspondence $\varrho(\mathbf{A})=Tr(\hat{\varrho} A)$, where $\hat{\varrho}\in\mathfrak{A}_{+}$ is a positive, self-adjoint operator of trace one, called the \emph{density operator}. The requirement that the operator have trace one is related to a probability normalization. We will subsequently identify $\varrho$ with $\hat{\varrho}$.

Density operators will play an role analogous to probability distributions in classical probability. Whereas the a probability simplex $\mathfrak{P}_{n}$ has only $n$ vertices, each corresponding to a distribution where all the probability mass is accumulated at just one outcome, density operators live in a strongly convex set, meaning that there is an infinite number of extremal points, as a consequence of the Superposition Principle. This is depicted in fig. \ref{fig5}, where the simplex for two classical outcomes is compared with the set of all possible quantum preparations of a two-states system.

\begin{figure}[h]
\centering
  \includegraphics[scale=0.35]{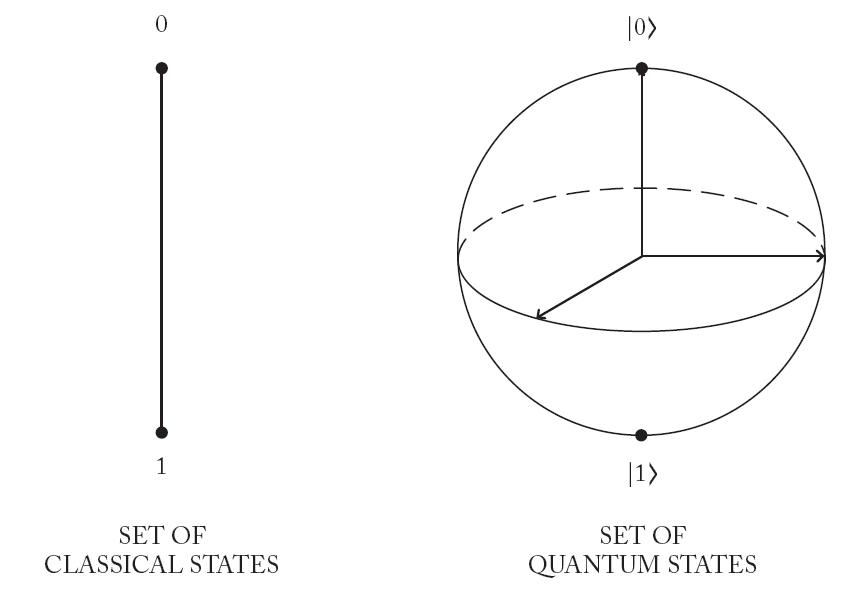}\\
  \caption{The set of quantum states is strongly convex: Classically, the set of states is given by the probability simplex $\mathfrak{P}_{2}$, any convex combination of the two attainable states, 0 and 1, remains a state. The fundamental axiom of Quantum Mechanics, the Superposition Principle, says that it is possible for a quantum bit to be in state $|0\rangle$, in state $|1\rangle$, or in a complex linear combination of both. This leads to a set of states where every superposition of states (in fact, there are infinitely many) must still be contained in the set. The set of quantum states thus is strongly convex, since there are infinitely many extremal points (living in a finite dimensional space). For a quantum bit, this set is called Bloch's ball, and the coordinates of each extremal point can be worked out from the relative phases of the pure states.}\label{fig5}
\end{figure}

The set of quantum preparations in previous figure receives the name of \emph{Bloch's ball}. Throughout this dissertation we will assume that $|0\rangle$ and $|1\rangle$ is our selected computational basis. This means that these vectors constitute a basis stable against decoherence and stand for the quantum counterpart of a bit. $|0\rangle$ and $|1\rangle$ can refer to the spin of a nucleus, the polarization of a photon, or to the state of a bistable atom. In any case, the number of degrees of freedom is two. Hence, it is the system's algebra that receives the name of \emph{qubit}, as a shorthand for quantum bit.

In general, a N-states system is described by an algebra of $N\times N$ matrices and $N^{2}$ are needed to build a basis. One suitable basis is:

\begin{equation}
    \varrho = \frac{1}{N}(\mathbf{1} + \sum^{N^{2}-1}_{i=1} r_{i}\sigma_{i})
\end{equation}

where $\{\sigma_{i}\}^{N^{2}-1}_{1}$ is some basis of self-adjoint, trace-free matrices, such that:

\begin{equation}
    \langle\sigma_{i}\sigma_{j}\rangle=\alpha\delta_{ij}
\end{equation}

\begin{equation}
    r_{i}=\frac{N}{\alpha}\langle\sigma_{i},\varrho\rangle
\end{equation}

So there is a mapping from density operators of dimension N to real vectors in $\mathbf{R}^{N^{2}-1}$. For $N=2$, this basis is the Pauli matrices and $Tr(\varrho^{2})=Tr(\varrho)=1$ if and only if $\|\mathbf{r}\|_{2} = 1$, i.e., rank one density operators lie in the boundary of Bloch's ball. A density operator having rank one is called a \emph{pure state}, and otherwise is called a \emph{mixed state}. Any mixed state can be expressed as a convex combination of pure states:

\begin{equation}
    \varrho_{mixed} = \sum_{j} q_{j}\varrho_{j}
\end{equation}

where $\mathbf{1}^{T}\mathbf{q} = 1$, and $\varrho_{j}=|j\rangle\langle j|$ are rank one density operators.

\subsection{Set of Measurements is Also Convex}

A measurement $\mathbf{M}\in\mathfrak{M}$ is an affine map from $\mathfrak{S}$ to the set $\mathfrak{P}_{\mathcal{U}}$ of all probability distributions in some probability space $(\mathcal{U},\mathcal{A}_{\mathcal{U}} , \mathbf{p}^{U})$:

$$\mathbf{M}: \mathfrak{S}\longrightarrow \mathfrak{P}_{\mathcal{U}}$$

Classically, this can reflect the statistical bias of a measuring apparatus or procedure, and amounts to a reshaping of the probability simplex. In the quantum world, a measurement is defined on a strongly convex set, where ``quantum probabilities" live, and takes values in a classical probability simplex, so forcedly some structure must be lost in a measurement process. This is sometimes called the \emph{wave-packet collapse}, or \emph{decoherence} (see section \ref{sec:decoherence}).

Consider the measurement $\mathbf{M}(\varrho)=\mathbf{p}^{U}$. We shall write $\mathbf{M} = \{M(u_{1}), M(u_{2}),\\ ..., M(u_{k})\}$ where each $M(u_{j})\in\mathfrak{A}_{+}$ is a positive operator, associated to an $u_{j}$ in $\mathcal{A}_{\mathcal{U}}$\footnote{These events need not be elementary events: as members of the $\sigma$-algebra $\mathcal{A}_{\mathcal{U}}$, they may in general be subsets of the sample space $\mathcal{U}$}. The probability of event $u_{j}$ is given by:

\begin{equation}
    p(u_{j}) = \langle\varrho, M(u_{j})\rangle
\end{equation}

A measurement is also called a \emph{Positive Operator Valued Measure} (POVM), since it relates a probability measure with an operator in the positive cone of the algebra $\mathfrak{A}$. A POVM has the following properties:

\begin{eqnarray}
  M(\emptyset) &=& 0 \\
  M(\mathcal{U}) &=& \mathbf{1} \\
  u_{i}&\subseteq& u_{j} \Longrightarrow M(u_{i})\leq M(u_{j})\\
  u_{i}&=&\bigcup_{j} u_{j} \Longrightarrow M(u_{i})=\sum_{j}M(u_{j})
\end{eqnarray}

Since it is required that the whole sample space be covered, i.e., $\bigcup^{k}_{j} u_{j} = \mathcal{U}$, then:

\begin{equation}
    \sum^{k}_{j=1}M(u_{j})=\mathbf{1}
\end{equation}

which ensures that probability is normalized $p(\mathcal{U}) = Tr(\varrho\mathbf{1}) = 1$. The $M(u_{j})$ constitute a \emph{resolution of the identity}.

Note that, even if $u_{i}\bigcap u_{j} = \emptyset$, in general it still may be the case that $M(u_{i})M(u_{j})\neq \delta_{ij}M(u_{j})$. In this case we have a non-orthogonal resolution of the identity, also known as a \emph{fuzzy measurement}.

Whenever $u_{i}\bigcap u_{j} = \emptyset \Rightarrow M(u_{i})M(u_{j})\ = \delta_{ij}M(u_{j})$ holds, we have a \emph{projective measurement}. This is justified because $M(u_{i})^{2}=M(u_{i})$ are projectors. Projective measurements are extremal points of $\mathfrak{M}$, and are also called \emph{von Neumann measurements}. The converse is in general not true: there can be non-orthogonal resolutions of the identity  at the extremal points of $\mathfrak{M}$. However, for qubits, where $\mathcal{U} =\{0,1\}$, the converse is true \cite{holevo1982paq} and one can say that a measurement is an extremal point of $\mathfrak{M}$ if and only if it is a projective measurement.

Observables are directly related to projective measurements through the \emph{Spectral Theorem}, which says that any self-adjoint operator $X$ admits the spectral representation:

\begin{equation}
    X = \sum_{u_{i}\in\mathcal{U}} u_{i}M(u_{i})
\end{equation}

where $M(u_{i})$ is an orthogonal resolution of the identity. $u_{i}$ constitute the spectrum of the observable and $M(u_{i})$ determine the eigenspace associated to each eigenvalue.

In a practical scope, it is not known how to implement a general non-orthogonal POVM, defined in a state space $\mathcal{H}_{1}$. However, Neumark's Theorem \cite{holevo1982paq} ensures that it is possible to simulate a POVM with a projective measurement defined in an extended space $\mathcal{H}_{1}\otimes \mathcal{H}_{A}$. The letter ``A" stands for ancilliary system.

\section{Von Neumann's Entropy }

Just as classical entropy is defined on a probability simplex, it is possible to define an entropy for quantum probability distributions, called the \emph{von Neumann's entropy}, which is defined on the set of quantum states:

\begin{equation}
    S(\varrho) = -Tr(\varrho\log\varrho)
\end{equation}

In the case of orthogonal states, it reduces to Shannon's entropy. In fact, many properties (but not all) of classical entropy still hold in the quantum case. We can derive them,as in previous chapter, from the more fundamental \emph{quantum relative entropy}:

\begin{equation}
    S(\varrho||\sigma) = Tr(\varrho\log\varrho - \varrho\log\sigma)
\end{equation}

\begin{thm}\textcolor[rgb]{0.00,0.00,1.00}{[Nonnegativity of quantum relative entropy]}The quantum relative entropy is positive semidefinite, $S(\varrho||\sigma) \geq 0$
\end{thm}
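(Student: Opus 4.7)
The plan is to reduce the quantum statement to its classical counterpart (Theorem \ref{thm:relent}) by exploiting the spectral decomposition of $\varrho$ and $\sigma$. First I would write $\varrho = \sum_{i} p_{i} |e_{i}\rangle\langle e_{i}|$ and $\sigma = \sum_{j} q_{j} |f_{j}\rangle\langle f_{j}|$, where $\{p_{i}\}$ and $\{q_{j}\}$ are nonnegative eigenvalues summing to one, and $\{|e_{i}\rangle\}, \{|f_{j}\rangle\}$ are orthonormal eigenbases of $\mathcal{H}$. Both $\log\varrho$ and $\log\sigma$ then have diagonal form in their respective bases.

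Next, I would introduce the overlap coefficients $P_{ij} = |\langle e_{i}|f_{j}\rangle|^{2}$. Because both $\{|e_{i}\rangle\}$ and $\{|f_{j}\rangle\}$ are complete orthonormal sets, the matrix $P$ is \emph{doubly stochastic}: $\sum_{i} P_{ij} = \sum_{j} P_{ij} = 1$. Evaluating the trace in the $\{|e_{i}\rangle\}$ basis yields
\begin{equation}
S(\varrho\|\sigma) \;=\; \sum_{i} p_{i}\log p_{i} \;-\; \sum_{i} p_{i} \sum_{j} P_{ij}\log q_{j}.\nonumber
\end{equation}

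Now I would apply Jensen's inequality to the concave function $\log$, exactly as in the proof of Theorem \ref{thm:relent}: for each $i$, $\sum_{j} P_{ij}\log q_{j} \leq \log\left(\sum_{j} P_{ij} q_{j}\right) =: \log r_{i}$. Double stochasticity of $P$ guarantees $\sum_{i} r_{i} = \sum_{j} q_{j} = 1$, so $\mathbf{r}$ is a legitimate classical probability distribution on the index set of $\varrho$'s eigenvalues. Assembling the pieces gives
\begin{equation}
S(\varrho\|\sigma) \;\geq\; \sum_{i} p_{i}\log\frac{p_{i}}{r_{i}} \;=\; D(\mathbf{p}\|\mathbf{r}) \;\geq\; 0,\nonumber
\end{equation}
the last inequality being the classical Theorem \ref{thm:relent}.

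The only real subtlety, and what I expect to be the main obstacle, is the domain of definition: one must handle the case in which some $q_{j} = 0$ while the corresponding overlap $P_{ij}$ with an $|e_{i}\rangle$ having $p_{i} > 0$ is nonzero, i.e.\ $\operatorname{supp}(\varrho)\not\subseteq\operatorname{supp}(\sigma)$. Exactly as in the classical case, this should be treated by convention: $S(\varrho\|\sigma)$ is then declared to be $+\infty$, making the inequality trivial, while if $\operatorname{supp}(\varrho)\subseteq\operatorname{supp}(\sigma)$ the sums can be restricted to the relevant eigenspaces and the argument above goes through verbatim. This is essentially Klein's inequality, and it makes manifest that quantum nonnegativity is inherited directly from classical nonnegativity combined with the double stochasticity forced by unitary change of basis.
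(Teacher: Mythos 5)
Your proposal is correct and follows essentially the same route as the paper's own proof: diagonalize both operators, encode the basis overlap in the doubly stochastic matrix $P_{ij}=|\langle e_{i}|f_{j}\rangle|^{2}$, apply Jensen's inequality to the concave logarithm, and reduce to the nonnegativity of the classical relative entropy $D(\mathbf{p}\|\mathbf{r})$ with $r_{i}=\sum_{j}P_{ij}q_{j}$. Your explicit treatment of the support condition and of the normalization $\sum_{i}r_{i}=1$ is slightly more careful than the paper's, but the argument is the same.
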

\begin{proof} It is possible to find a diagonalization for each density operator, $\varrho = \sum_{i} p_{i}|i\rangle\langle i|$ and $\sigma = \sum_{j} q_{j} |j\rangle\langle j|$, then:

\begin{eqnarray}
  S(\varrho||\sigma) &=&  \sum_{i} p_{i}[\log p_{i} - \sum_{j}\log q_{j}\|\langle i|j\rangle\|^{2}]\nonumber\\
   &=&  \sum_{i} p_{i}[\log p_{i} - \sum_{j}D_{i,j}\log q_{j}]\nonumber\\
   &\geq&  \sum_{i} p_{i}[\log p_{i} - \log \sum_{j}D_{i,j}q_{j}]\nonumber\\
   &=&  \sum_{i} p_{i}\log\frac{p_{i}}{ \sum_{j}D_{i,j}q_{j}}\nonumber\\
   &=&  \sum_{i} p_{i}\log\frac{p_{i}}{r_{i}}\nonumber\\
   &\geq&  0 \nonumber
\end{eqnarray}

The diagonalizations need not be equal, thus the possible overlap between the states must be accounted for. This overlap is encoded in a doubly stochastic matrix $D_{ij} = \|\langle i|j\rangle\|^{2} \geq 0$. The first inequality comes from a slight variation of Jensen's inequality and the concavity of the logarithm. The last inequality is a just a property of the classical relative entropy (see Theorem \ref{thm:relent}).
\end{proof}

\begin{thm}\label{thm:subadd}\textcolor[rgb]{0.00,0.00,1.00}{[Subadditivity of von Neumann's Entropy]} For a global system $\varrho^{AB}$ the joint entropy satisfies $S(\varrho^{AB})\leq S(\varrho^{A})+S(\varrho^{B})$ with equality if and only if both systems are uncorrelated.
\end{thm}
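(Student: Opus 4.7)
The plan is to reduce subadditivity to the nonnegativity of quantum relative entropy, which was just established as the previous theorem. Specifically, I would apply that theorem to the pair $(\varrho^{AB}, \varrho^{A}\otimes\varrho^{B})$, where $\varrho^{A}=\mathrm{Tr}_{B}(\varrho^{AB})$ and $\varrho^{B}=\mathrm{Tr}_{A}(\varrho^{AB})$ are the reduced density operators. The strategy is to show that $S(\varrho^{AB}\|\varrho^{A}\otimes\varrho^{B})$ equals exactly $S(\varrho^{A})+S(\varrho^{B})-S(\varrho^{AB})$, so that its nonnegativity is the desired inequality.

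First I would expand the definition:
\begin{equation}
S(\varrho^{AB}\|\varrho^{A}\otimes\varrho^{B}) = \mathrm{Tr}\bigl(\varrho^{AB}\log\varrho^{AB}\bigr) - \mathrm{Tr}\bigl(\varrho^{AB}\log(\varrho^{A}\otimes\varrho^{B})\bigr).
\end{equation}
The first term is simply $-S(\varrho^{AB})$. For the second term, the key algebraic step is the identity
\begin{equation}
\log(\varrho^{A}\otimes\varrho^{B}) = \log\varrho^{A}\otimes\mathbf{1}^{B} + \mathbf{1}^{A}\otimes\log\varrho^{B},
\end{equation}
which follows from functional calculus on commuting operators (the two factors act on distinct tensor slots and hence commute). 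Inserting this and using the cyclicity of the partial trace together with the defining property $\mathrm{Tr}_{B}(\varrho^{AB})=\varrho^{A}$ (and symmetrically for $B$), each term collapses to an expression on a single factor, yielding $-S(\varrho^{A})-S(\varrho^{B})$.

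Combining these evaluations gives
\begin{equation}
S(\varrho^{AB}\|\varrho^{A}\otimes\varrho^{B}) = S(\varrho^{A}) + S(\varrho^{B}) - S(\varrho^{AB}),
\end{equation}
and the nonnegativity theorem immediately implies $S(\varrho^{AB})\leq S(\varrho^{A})+S(\varrho^{B})$. For the equality condition, I would appeal again to the previous theorem: tracking the Jensen step in its proof shows that $S(\varrho\|\sigma)=0$ forces $\varrho=\sigma$, so equality here is equivalent to $\varrho^{AB}=\varrho^{A}\otimes\varrho^{B}$, i.e.\ the two subsystems are uncorrelated.

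The main obstacle I anticipate is the logarithm-of-a-tensor-product identity: it is routine but deserves justification, since it rests on the fact that $\varrho^{A}\otimes\mathbf{1}^{B}$ and $\mathbf{1}^{A}\otimes\varrho^{B}$ commute and on continuity of the $\log$ functional calculus. A secondary subtlety is the support condition needed for $\log(\varrho^{A}\otimes\varrho^{B})$ to be well defined; this is handled by restricting to the support of $\varrho^{AB}$, which is automatically contained in the support of the product (since $\ker\varrho^{A}\otimes\mathcal{H}^{B}$ and $\mathcal{H}^{A}\otimes\ker\varrho^{B}$ are annihilated by $\varrho^{AB}$), so the relative entropy is finite and the argument goes through without modification.
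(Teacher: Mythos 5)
Your proposal follows exactly the route the paper takes: apply the nonnegativity of quantum relative entropy to the pair $(\varrho^{AB},\varrho^{A}\otimes\varrho^{B})$, use $\log(\varrho^{A}\otimes\varrho^{B})=\log\varrho^{A}\otimes\mathbf{1}^{B}+\mathbf{1}^{A}\otimes\log\varrho^{B}$, and collapse via partial traces, with the equality case read off from $S(\varrho^{AB}\|\varrho^{A}\otimes\varrho^{B})=0$. It is correct, and your extra care about the support condition and the justification of the tensor-log identity only tightens what the paper states informally.
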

\begin{proof} As a consequence of the nonnegativity of the quantum relative entropy, we can write:

$$D(\varrho||\sigma) = -S(\varrho) - Tr(\varrho\log\sigma)\geq 0$$

Taking $\varrho = \varrho^{AB}$ and $\sigma=\varrho^{A}\otimes\varrho^{B}$, we obtain:

\begin{eqnarray}
  S(\varrho^{AB}) &\leq& - \varrho^{AB}\log(\varrho^{A}\otimes\varrho^{B})\nonumber\\
  &=& - \varrho^{AB}(\log\varrho^{A} + \log\varrho^{B}) \nonumber\\
  &=& - \varrho^{A}\log\varrho^{A} - \varrho^{B}\log\varrho^{B}\nonumber\\
  &=& S(\varrho^{A}) + S(\varrho^{B})
\end{eqnarray}

To see that the bound is tight if and only if the $\varrho^{AB}=\varrho^{A}\otimes\varrho^{B}$, one need only consider the relative entropy $S(\varrho^{AB}||\varrho^{A}\otimes\varrho^{B})$.
\end{proof}

\begin{thm}\textcolor[rgb]{0.00,0.00,1.00}{[Concavity of von Neumann's entropy]} Von Neumann's entropy is a concave function of $\varrho$
\end{thm}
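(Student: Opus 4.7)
The plan is to reduce concavity to the subadditivity theorem (Theorem \ref{thm:subadd}) that was just established, using a classically-flagged bipartite construction. Given a convex combination $\varrho = \sum_{i} \lambda_{i}\varrho_{i}$ with $\lambda_{i}\geq 0$ and $\sum_{i}\lambda_{i}=1$, I would introduce an auxiliary system $B$ with orthonormal basis $\{|i\rangle\}$ and form
\begin{equation}
\varrho^{AB} = \sum_{i}\lambda_{i}\,\varrho_{i}\otimes|i\rangle\langle i|.
\end{equation}
The rationale is that the classical flag on $B$ records \emph{which} state in the ensemble we are looking at, so partial tracing recovers exactly the mixture on $A$ while preserving enough structure on $AB$ to compute its entropy directly.

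Next, I would compute the marginals. Tracing out $B$ yields $\varrho^{A}=\sum_{i}\lambda_{i}\varrho_{i}=\varrho$, and tracing out $A$ yields the diagonal state $\varrho^{B}=\sum_{i}\lambda_{i}|i\rangle\langle i|$ with entropy $S(\varrho^{B})=H(\boldsymbol{\lambda})$. For the joint entropy, I would exploit the block-diagonal structure in the $B$ basis: the eigenvalues of $\varrho^{AB}$ are precisely $\{\lambda_{i}\mu_{i,k}\}$, where $\{\mu_{i,k}\}_{k}$ are the eigenvalues of $\varrho_{i}$. A short computation then gives
\begin{equation}
S(\varrho^{AB}) = H(\boldsymbol{\lambda}) + \sum_{i}\lambda_{i}\,S(\varrho_{i}).
\end{equation}

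Finally, I would invoke subadditivity $S(\varrho^{AB})\leq S(\varrho^{A})+S(\varrho^{B})$ from Theorem \ref{thm:subadd}. Substituting the expressions above, the $H(\boldsymbol{\lambda})$ terms on both sides cancel, leaving $\sum_{i}\lambda_{i}S(\varrho_{i})\leq S\bigl(\sum_{i}\lambda_{i}\varrho_{i}\bigr)$, which is precisely concavity.

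The one step that requires genuine care (rather than symbol pushing) is the identity for $S(\varrho^{AB})$. Because the $\varrho_{i}$ generally do not commute, one cannot diagonalize them simultaneously, but the orthogonality of the flags $|i\rangle\langle i|$ on $B$ makes the \emph{global} state block-diagonal regardless, so its spectrum really is the disjoint union of the scaled spectra of the $\lambda_{i}\varrho_{i}$. Once that observation is in hand, the rest is an immediate application of subadditivity, mirroring in spirit the classical corollary for Shannon entropy derived earlier from convexity of the relative entropy.
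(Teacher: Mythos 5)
Your proposal is correct and follows essentially the same route as the paper: introduce an ancilla $B$ carrying an orthogonal classical flag, compute $S(\varrho^{AB}) = H(\boldsymbol{\lambda}) + \sum_{i}\lambda_{i}S(\varrho_{i})$, and apply subadditivity (Theorem \ref{thm:subadd}) so that the $H(\boldsymbol{\lambda})$ terms cancel. Your explicit justification of the joint-entropy identity via the block-diagonal structure is actually slightly more careful than the paper's computation, which implicitly diagonalizes each $\varrho_{i}$ without emphasizing that the flags are what make the global spectrum decompose.
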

\begin{proof}To prove this result, we will make use of a spurious system B. Consider the joint state:

$$\varrho^{AB} = \sum_{i}p_{i}\varrho^{A}_{i}\otimes|i\rangle\langle i|^{B}$$

Its von Neumann's entropy is:

\begin{eqnarray}
  S(\varrho^{AB}) &=&  S(\sum_{i}p_{i}\varrho^{A}_{i}\otimes|i\rangle\langle i|^{B})\nonumber\\
  &=& S(\sum_{i}p_{i}(\sum_{j}\lambda^{j}_{i}|j\rangle\langle j|^{A})\otimes|i\rangle\langle i|^{B}) \nonumber\\
  &=& -\sum_{i,j}p_{i}\lambda^{j}_{i}\log\lambda^{j}_{i} - \sum_{i,j}p_{i}\lambda^{j}_{i}\log p_{i}\nonumber\\
  &=& -\sum_{i,j}p_{i}\lambda^{j}_{i}\log\lambda^{j}_{i} - \sum_{i}p_{i}\log p_{i}\nonumber\\
  &=& \sum_{i}p_{i}S(\varrho^{A}_{i}) + H(\mathbf{p})\nonumber
\end{eqnarray}

where $\sum_{j}\lambda^{j}_{i}|j\rangle\langle j|^{A}$ is a diagonalization of $\varrho^{A}_{i}$. Note that $\sum_{j}\lambda^{j}_{i} = 1$. On the other hand, the entropies of the separated systems are:

$$S(\varrho^{A}) = S(\sum_{i}p_{i}\varrho^{A}_{i})$$

$$S(\sum_{i}p_{i}|i\rangle\langle i|^{B}) = H(\mathbf{p})$$

Making use of the subadditivity property proved before, we have that:

\begin{eqnarray}
  \sum_{i}p_{i}S(\varrho^{A}_{i}) + H(\mathbf{p}) &\leq& S(\sum_{i}p_{i}\varrho^{A}_{i}) + H(\mathbf{p})\nonumber\\
  \sum_{i}p_{i}S(\varrho^{A}_{i}) &\leq& S(\sum_{i}p_{i}\varrho^{A}_{i}) \nonumber
\end{eqnarray}

thus, the von Neumann's entropy is convex.
\end{proof}

\section{Classical Information and Quantum Information are Not the Same}

The difference between bits and qubits is more fundamental than just terminology. Whereas classical bits are symbolic representations of the information stored in a physical system (i.e. modulated waves, or the orientation of the magnetic cells in a hard drive...), qubits are to be \emph{identified} with physical systems, or with their algebra at least. Quantum information is more general than classical information, since the symbolic representation of information arises in the special case where only orthogonal states are considered.

\subsection{Classical Information through Decoherence}\label{sec:decoherence}

Since Quantum Mechanics supersedes classical theories, it is expected that classical probability can as well be represented in the language of density operators. Consider a probability distribution $\mathbf{p}^{X}$ over $\mathcal{X}$ ($\|\mathcal{X}\| = n$), then its operator counterpart can be written as:

\begin{equation}\label{decoher}
    \varrho^{X} = \sum^{n}_{i = 1} p_{i}|i\rangle\langle i|
\end{equation}

This density operator belongs to the algebra of diagonal matrices. In the qubit case, this algebra is the set of density operators in the segment that passes through the poles in Bloch's ball (see fig. \ref{fig5}).

Quantum states can be described by density operators having off-diagonal terms, which are responsible for quantum interferences, and this is directly related to the fact that the set of states is strongly convex. How classical properties arise from quantum-mechanical laws is a itself a topic of intense research and receives the name of \emph{Environmental Decoherence} \cite{zeh1996ad}\cite{joos1999eed}. In the information-theoretic context of this thesis, it suffices to say that in the measuring process that both the measured system and the measuring apparatus evolve together in time (according to some interaction Hamiltonian) into a \emph{preferred diagonal basis}, induced by the interaction of the measuring apparatus and their environment \cite{zurek1981pbq}.

Let $\varrho^{S}=\sum_{j}q_{j}\varrho_{j}$, $\varrho^{A}_{0}$ and $\varrho^{E}_{0}$ be the initial states of a system, an apparatus and their environment, respectively. In a first step the system and the apparatus become correlated, so that observing the apparatus will give us information about the system. In a second step, the apparatus is let alone to evolve along with its environment. This process is depicted in fig. \ref{fig6}:

\begin{equation}\label{usa}
    \mathbb{U}_{SA}(\sum_{j}q_{j}\varrho_{j}\otimes\varrho^{A}_{0})\mathbb{U}^{\dagger}_{SA} = \sum_{j}q'_{j}\varrho_{j}\otimes\varrho^{A}_{j}\rightarrow
\end{equation}

\begin{equation}\label{uae}
    \rightarrow \mathbb{U}_{AE}(\sum_{j}q'_{j}\varrho_{j}\otimes\varrho^{A}_{j}\otimes\varrho^{E}_{0})\mathbb{U}^{\dagger}_{AE} =
\sum_{j}q''_{j}\varrho_{j}\otimes\varrho^{A}_{j}\otimes\varrho^{E}_{j}
\end{equation}

The measuring apparatus and the environment become rapidly correlated, and the off-diagonal terms in the system's density operator are swept away. Provided that the environment remains in a pure state and that $\langle\varrho^{E}_{i}\varrho^{E}_{j}\rangle\approx\delta_{ij}$\footnote{This assumption basically comes from the fact that we don't observe quantum interferences between macroscopic states}, tracing out the environment and the apparatus (see section \ref{sec:marginal}) leaves us with a classical probability simplex:

\begin{equation}\label{decoher2}
    \varrho^{SA}\approx\sum_{j}q''_{j}|j\rangle\langle j|\otimes\varrho^{'A}_{j}
\end{equation}

\begin{figure}[h]
\centering
  \includegraphics[scale=0.35]{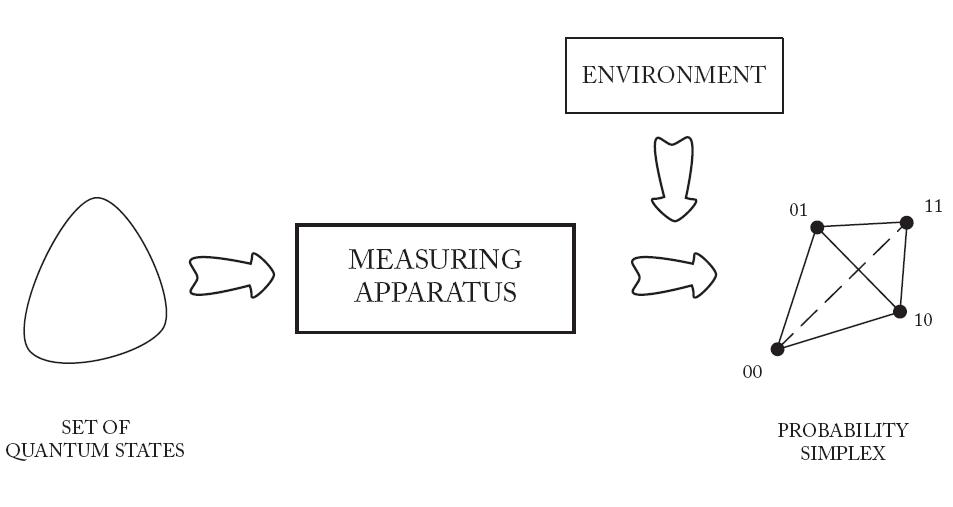}\\
  \caption{Decoherence in the measuring process: The interaction of the measuring apparatus with its environment causes the quantum correlations to dilute in the joint Hilbert space of the system, apparatus and environment. The crux of this process is that the states of the environment are by definition unaccessible, so no measurement could ever detect these correlations. Thus, locally, the joint system-apparatus state appears to be in a diagonal matrix state, as a consequence of tracing out the environment. In this picture, a measurement with four outcomes is represented.}\label{fig6}
\end{figure}

such that $[\varrho^{'A}_{j},\mathbb{U}_{AE}] = 0$, i.e. they share a diagonal basis\footnote{This means that the measuring apparatus evolves to a state which is stable against decoherence, i.e., stationary under macroscopic time evolution}. Eq. \ref{decoher2} is to be compared with eq. \ref{decoher} The quantum probabilities don't disappear, but get dispersed in the correlations between the system and its environment.

Note that the expressions \ref{usa} and \ref{uae} are not correct in general, since time evolution couples the system, the apparatus and their environment in such a way that their global state cannot be expressed as a product state: the final state may in general be \emph{entangled}. However, to illustrate the correlations that take place during the measuring process we use these partially ``allegoric" expressions. In later chapters the correctness will be restored.

\subsection{No-Cloning Theorem}\label{sec:nocloning}

Another way to see the difference between classical and quantum information is to imagine a machine capable of copying quantum states. The machine is fed at its input with an unknown quantum state and it outputs two copies of the initial state. This machine cannot exist:

\begin{thm}\textcolor[rgb]{0.00,0.00,1.00}{[No-Cloning Theorem]} It is impossible to copy unknown quantum states.
\end{thm}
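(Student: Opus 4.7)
The plan is to argue by contradiction, modeling any would-be cloning device as a unitary $U$ acting on the joint Hilbert space $\mathcal{H}_{S}\otimes\mathcal{H}_{A}$ of the system to be copied and an ancilla prepared in some fixed standard state $|0\rangle^{A}$. The hypothetical cloner would satisfy
\begin{equation}
U(|\psi\rangle^{S}\otimes|0\rangle^{A}) = |\psi\rangle^{S}\otimes|\psi\rangle^{A}
\end{equation}
for every pure $|\psi\rangle\in\mathcal{H}_{S}$. This assumption is legitimate because the global evolution of a closed quantum system is unitary, and any more general copier based on a POVM can, by Neumark's theorem invoked at the end of the previous section, be dilated to a projective (hence unitary) procedure on a still larger space.

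The key step is to feed the device two arbitrary pure states $|\psi\rangle$ and $|\phi\rangle$ and exploit the fact that unitary maps preserve inner products. Writing the cloning identity for each input and taking the inner product of the two output vectors gives
\begin{equation}
\langle\psi|\phi\rangle\,\langle 0|0\rangle \;=\; \langle\psi|\phi\rangle\,\langle\psi|\phi\rangle,
\end{equation}
that is, $\langle\psi|\phi\rangle = \langle\psi|\phi\rangle^{2}$. The only solutions of this scalar equation are $\langle\psi|\phi\rangle\in\{0,1\}$, meaning that a single $U$ can simultaneously copy only pairs of inputs that are either identical or mutually orthogonal. Since the set of pure states is strongly convex (the Bloch sphere, or its higher-dimensional analogue, contains a continuum of non-orthogonal rays), no universal $U$ can exist, contradicting the assumption and establishing the theorem for pure states. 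The mixed-state case follows immediately, since any cloner for mixed states would in particular clone pure states.

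I expect the only real subtlety to be making sure the setup is general enough that the contradiction is not an artefact of a restrictive ansatz. In particular, one should allow the ancilla to end up in a state that itself depends on the input, replacing the right-hand side by $|\psi\rangle^{S}\otimes|\psi\rangle^{A}\otimes|A_{\psi}\rangle$ for some auxiliary register. Repeating the inner-product computation then yields $\langle\psi|\phi\rangle = \langle\psi|\phi\rangle^{2}\langle A_{\psi}|A_{\phi}\rangle$, and since $|\langle A_{\psi}|A_{\phi}\rangle|\leq 1$ the same dichotomy $\langle\psi|\phi\rangle\in\{0,1\}$ survives. This robustness of the argument, together with the appeal to Neumark's theorem for non-projective implementations, closes the main loophole and delivers the no-cloning result in full generality.
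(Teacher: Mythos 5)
Your proposal is correct and follows essentially the same route as the paper: assume a unitary cloner, write the cloning identity for two arbitrary inputs, use preservation of inner products to obtain $\langle\psi|\phi\rangle=\langle\psi|\phi\rangle^{2}$, and conclude that only identical or orthogonal states can be copied, contradicting the states being unknown. The only differences are cosmetic (state vectors versus density operators and the Hilbert--Schmidt inner product) plus your welcome extra robustness checks (the input-dependent ancilla register and the Neumark dilation), which strengthen but do not change the argument.
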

\begin{proof}
Without loss of generality, we shall only consider pure states. Consider two \emph{unknown} states $\varrho_{1}$ and $\varrho_{2}$, which are fed as input into the copying machine in an initial pure state $\varrho^{CM}$. The copying process is described as a time evolution $\mathbb{U}$ of the whole system:

\begin{equation}
    \mathbb{U}(\varrho_{1}\otimes\varrho^{CM})\mathbb{U}^{\dagger} = \varrho_{1}\otimes\varrho_{1}
\end{equation}

\begin{equation}
    \mathbb{U}(\varrho_{2}\otimes\varrho^{CM})\mathbb{U}^{\dagger} = \varrho_{2}\otimes\varrho_{2}
\end{equation}

if we now take the inner product of the two equations, we have that:

\begin{equation}
    \langle\varrho_{1},\varrho_{2}\rangle=\langle\varrho_{1},\varrho_{2}\rangle^{2}
\end{equation}

thus both states must be either orthogonal, or the same. This requirement is in contradiction with the assumption of the two qubits being unknown.
\end{proof}

As expected, this is in accordance with the existence of  classical fan-out gates, which have the capacity of copying a bit as many times as desired.

\subsection{The Holevo's Bound}\label{sec:holevo}

Classically, the capacity of inference is related to the mutual information. An observer (receiver) can reliably guess the value of an experiment (or channel use) provided that $I(X;Y)$ is arbitrary close to $H(X)$. In principle, thanks to the use of better preparation and measuring devices (equivalently, coding and decoding schemes), mutual information can be brought very close to its upper bound.

Quantum mechanics prevents this fact, once again as a consequence of the Superposition Principle, because there may be states which are not orthogonal, and no measurement can, even in principle, distinguish them with 100\% reliability.

\begin{thm}\textcolor[rgb]{0.00,0.00,1.00}{[The Holevo's Bound]} Let $X\in\mathcal{X}$ be encoded in state $\varrho^{X}=\sum^{n}_{i=1}p^{X}_{i}\varrho_{i}$, where the $\varrho_{i}$ have orthogonal support, and a measurement $\mathbf{M}_{Y}(\varrho^{X})= \mathbf{p}^{Y}$, the accessible information is upper bounded by:

\begin{equation}\label{holevobound}
    I(X:Y) \leq S(\varrho^{X}) - \sum^{n}_{i=1}p^{X}_{i}S(\varrho_{i})
\end{equation}
\end{thm}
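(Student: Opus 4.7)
The plan is to introduce an auxiliary classical register that records the preparation index and then compare the quantum mutual information before and after the measurement, using the data-processing (monotonicity) property of quantum mutual information. The key observation is that the right-hand side of \eqref{holevobound} is precisely the quantum mutual information of a classical-quantum state built from the ensemble $\{p_i,\varrho_i\}$, while the left-hand side $I(X;Y)$ is the classical mutual information of the classical-classical state obtained after the measurement acts on the quantum side.

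Concretely, first I would introduce a ``preparation'' system $P$ with orthogonal pointer basis $\{|i\rangle\}$ and form the classical--quantum state
\begin{equation}
\varrho^{PQ} \;=\; \sum_{i=1}^{n} p_i^{X}\,|i\rangle\langle i|^{P}\otimes \varrho_i^{Q}.
\end{equation}
Tracing out $P$ gives $\varrho^{Q}=\varrho^{X}$, and tracing out $Q$ gives $\varrho^{P}=\sum_i p_i^X|i\rangle\langle i|^P$, whose von Neumann entropy is just $H(\mathbf{p}^{X})$. The joint entropy can be computed exactly as in the concavity proof (Theorem on concavity of $S$), yielding $S(\varrho^{PQ})=H(\mathbf{p}^{X})+\sum_i p_i^{X} S(\varrho_i)$. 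Hence the quantum mutual information of the cq-state is
\begin{equation}
S(P{:}Q)\;=\;S(\varrho^{P})+S(\varrho^{Q})-S(\varrho^{PQ})\;=\;S(\varrho^{X})-\sum_{i=1}^{n} p_i^{X} S(\varrho_i),
\end{equation}
which is exactly the right-hand side of the Holevo bound.

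Next I would model the POVM $\mathbf{M}_Y=\{M(y_j)\}$ as a quantum operation $\Lambda:Q\to M$ that writes the outcome into a classical register $M$ with orthonormal basis $\{|j\rangle\}$, producing
\begin{equation}
\varrho^{PM}\;=\;(\mathrm{id}_P\otimes\Lambda)(\varrho^{PQ})\;=\;\sum_{i,j} p_i^{X}\,\mathrm{Tr}\!\bigl(M(y_j)\varrho_i\bigr)\,|i\rangle\langle i|^{P}\otimes|j\rangle\langle j|^{M}.
\end{equation}
This is a diagonal (fully classical) bipartite state whose quantum mutual information reduces to the classical mutual information $I(X;Y)$, since the joint distribution of $(X,Y)$ is precisely $P_{ij}^{XY}=p_i^{X}\,\mathrm{Tr}(M(y_j)\varrho_i)$.

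The final step is the inequality $S(P{:}M)\le S(P{:}Q)$, i.e.\ the monotonicity of quantum mutual information under a local quantum operation on the $Q$ side. I would invoke this as the data-processing inequality, which itself follows from monotonicity of quantum relative entropy (equivalently, from strong subadditivity); having already proved nonnegativity and subadditivity of $S$, this is the natural next building block in the formalism. Combining the three steps gives
\begin{equation}
I(X;Y)\;=\;S(P{:}M)\;\le\;S(P{:}Q)\;=\;S(\varrho^{X})-\sum_{i=1}^{n} p_i^{X}\,S(\varrho_i),
\end{equation}
which is \eqref{holevobound}. The main obstacle is the monotonicity statement used in the last inequality: unlike the earlier properties, this is a genuinely nontrivial theorem (Lindblad--Uhlmann) whose proof is considerably deeper than subadditivity, so in the spirit of the thesis I would cite it rather than reprove it, and emphasise that the rest of the argument is just the bookkeeping that encodes ``preparation'' and ``measurement'' as two halves of a single cq-to-cc data-processing chain.
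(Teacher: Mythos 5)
Your proof is correct, but it takes a genuinely different route from the one in the text. The thesis exploits the hypothesis that the $\varrho_{i}$ have orthogonal support: under that assumption there is a measurement with $\langle\varrho_{i},M_{j}\rangle=\delta_{ij}$, for which $H(X|Y)=0$ and $I(X;Y)=H(\mathbf{p}^{X})$, and the block-diagonal identity $S(\sum_{i}p^{X}_{i}\varrho_{i})=H(\mathbf{p}^{X})+\sum_{i}p^{X}_{i}S(\varrho_{i})$ (the same computation you borrow from the concavity proof) shows that this choice saturates the right-hand side; any other measurement then obeys $I(X;Y)=H(\mathbf{p}^{X})-H(X|Y)\leq H(\mathbf{p}^{X})$, which is the claimed bound. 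That argument is entirely elementary, needing nothing beyond nonnegativity of conditional entropy, but it only establishes the bound in the orthogonal-support case, where the Holevo quantity collapses to $H(\mathbf{p}^{X})$ and the statement reduces to $I(X;Y)\leq H(X)$. Your construction of the cq-state $\varrho^{PQ}$, the identification of $S(P{:}Q)$ with the Holevo quantity, and the data-processing step $S(P{:}M)\leq S(P{:}Q)$ prove the bound for an arbitrary ensemble, orthogonal or not --- which is the form actually needed later when the HSW theorem is applied to non-orthogonal signal states. The price is exactly the one you flag: monotonicity of quantum mutual information under a local channel is equivalent to Lindblad--Uhlmann monotonicity of relative entropy (or to strong subadditivity), a substantially deeper fact than the subadditivity proved in this chapter, so it must be imported by citation rather than derived. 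In short, the text trades generality for self-containedness; you trade self-containedness for the bound in its full strength.
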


\begin{proof} Mutual information can be written as:

$$I(X:Y) = H(\mathbf{p}^{X}) - H(\mathbf{p}^{X}|Y)$$

Last term represents the uncertainty about $X$ provided that measurement $\mathbf{M}_{Y}$ was chosen:

$$H(\mathbf{p}^{X}|Y)) = \sum^{m}_{j=1}p(y_{j})\sum^{n}_{i=1}p(x_{i}|y_{j})\log p(x_{i}|y_{j})$$

with $p(x_{i}|y_{j}) = \langle \varrho_{i}, M_{j}\rangle$. It is easy to see that the conditional entropy will vanish if and only if $\langle \varrho_{i},M_{j}\rangle =\delta_{ij}$. Now, suppose that this is indeed the case: selected measurement scheme is optimal. Reasoning in a similar way to Theorem \ref{thm:subadd}, it is possible to write:

$$S(\sum^{n}_{i=1}p^{X}_{i}\varrho_{i}) = \sum^{n}_{i=1}p^{X}_{i}S(\varrho_{i}) + H(\mathbf{p}^{X})$$

The optimal measurement strategy yields:

$$I(X:Y) = H(\mathbf{p}^{X}) = S(\sum^{n}_{i=1}p^{X}_{i}\varrho_{i}) - \sum^{n}_{i=1}p^{X}_{i}S(\varrho_{i})$$

For measurements that are not optimal, we will have in general that:

$$I(X:Y)\leq S(\sum^{n}_{i=1}p^{X}_{i}\varrho_{i}) - \sum^{n}_{i=1}p^{X}_{i}S(\varrho_{i})$$

\end{proof}

Note if the states $\varrho_{i}$ are chosen to be pure, the upper bound in \ref{holevobound} reduces to the classical entropy. One direct conclusion to be drawn from previous Theorem is that the information contained in a qubit is, at most, one bit. This discouraging result may lead us to the opinion that quantum information has no real advantages over classical information. As we will see in next part, this belief is wrong.

\section{Experiments as Information Transfer}

Perhaps it is illuminating to see that it is possible, just with a slight change in the terminology, to compare the two main scenarios that are occupying us: a communications channel, and a physical experiment.

\begin{figure}[h]
\centering
  \includegraphics[scale=0.4]{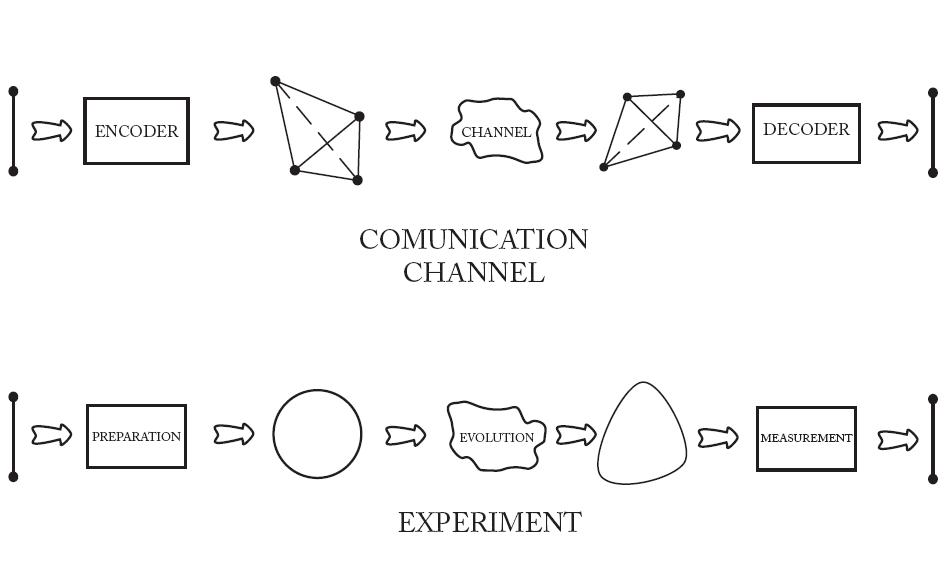}\\
  \caption{Pictorial duality between experiments and channels: In the communication channel scenario, we represent a source taking values in two symbols, encoded in a four symbol code. Then, stochastic evolution takes place and the probability simplex is distorted (dimension can increase, decrease or remain the same, and symbol frequencies may change). At the decoder, the original two symbols should be recovered with their original frequencies. In the case of experiments, a quantum state is prepared. According to a known Hamiltonian, the system undergoes a deterministic time evolution jointly with the unknown system of which information is to be obtained. Then several hypothesis are encoded in the set of states, as a result of the joint time evolution.}\label{fig7}
\end{figure}

The main goal of both operations is to gain information about the state of an unaccessible system. In a communications channel this system is the source. In the experiment, it is an unknown system that is forced to interact with another one, previously prepared in a quantum state. Thus in the experiment scenario, information enters at the evolution stage. Encoding is a procedure analogous to a preparation. Whereas temporal evolution is totally deterministic, the channel is of stochastic nature, but it still represents some sort of time flow. After undergoing a temporal evolution or a channel, the sets of states are distorted to some extent. Finally, both measurement and decoding entail the estimation of a probability distribution out of the incoming sets of states. In the case of experiments, uncertainty is introduced at this stage, if the states are not orthogonal.

\part{Quantum Entanglement: A New Resource for Communication}
\chapter{Quantum Non-locality}

To jointly describe multiple systems, an algebra is needed which contains the elements of global measurements, as well as those part of partial (marginal) measurements. For bipartite systems, the algebra of the composite system is:

$$\mathfrak{C} = \mathfrak{A}\otimes\mathfrak{B}$$

where $\mathfrak{A}$ and $\mathfrak{B}$ are the operator algebras of the subsystems, respectively. For the algebra of diagonal matrices of dimension $N$, that is, for classical distributions of $N$ different outcomes, the number of orthogonal matrices needed to form a basis is $N$. For two classical systems of the same dimension, we will need $N^{2}$ such matrices. In the quantum case, since matrices need not be diagonal, the number of matrices that form a basis is $N^{2}$, and for two systems $N^{4}$ matrices would be needed.

For classical systems, if two different observers perform a measurement, each one at a different system, they will each gain $\log N$ bits of information. If they combine their information about the subsystems, it will be possible to reconstruct the global state, as it only demands $2\log N$ bits.

In the quantum case, the Holevo's bound says that each observer can gain \emph{at most} $\log N$ bits. Thus, there is no way no learn about the global state just from the marginal measurements, for it demands $4\log N$ bits, whereas there are only $2\log N$ available.

This suggests that there is more information contained in the composite system, than in the sum of the informations contained in its components. This characteristic of quantum systems gives rise to a new fundamental phenomenon called \emph{quantum non-locality}.

\section{EPR Paradox}

In their famous paper \cite{einstein1935cqm}, Einstein, Podolsky and Rosen (EPR) came to the conclusion that Quantum Mechanics was in awkward epistemological status, due to the its lack of at least one of the properties required to any theoretical framework which intends to describe Reality. This properties can be stated as two principles:

\begin{description}
  \item[Principle of Locality] Two causally disconnected, i.e. spatially separated,  measurements cannot exert any influence on one another.
  \item[Principle of Realism] Any physical theory must account for every element of reality, this meaning that every possible outcome of an experiment should have a definite value prior to its measurement.
\end{description}

EPR showed that Quantum Mechanics violates at least one of these two principles, so a quantum description of Reality cannot be completely accurate. In a gedanktexperiment devised by Bohm \cite{bohm1952siq}, which involves particles of spin one half.

Consider the pure state of a composite system $\varrho^{AB}_{\psi}\in\mathfrak{A}\otimes\mathfrak{B}$, $\varrho^{AB}_{\psi} =|\psi\rangle\langle\psi|$, where:

\begin{equation}
    |\psi\rangle = \frac{1}{\sqrt{2}}(|0\rangle^{A}\otimes|1\rangle^{B} - |1\rangle^{A}\otimes|0\rangle^{B})
\end{equation}

is a state vector representing the preparation of two two-states systems. This state may be created as pairs of photons of opposite polarization emitted from a common source\footnote{Photons are massless spin one particles, so their polarization has just two degrees of freedom, and can be modeled as a two-state system.}. The indices A and B denote two different locations or observers, causally disconnected, where each operator algebra is defined, respectively.

Given that the source is capable of providing an unlimited number of pairs, observer at location A performs a (projective) measurements on its particles $\mathbf{M}^{A}(\varrho^{A})=\mathbf{p}^{A}$, and so does the observer at location B, obtaining the distribution $\mathbf{M}^{B}_{i}(\varrho^{B})=\mathbf{p}^{B}, i=1,2$. Observer at location B is able to choose between the different measurements:

\begin{eqnarray}
  \mathbf{M}^{B}_{1} &=& \{\left(\begin{array}{cc} 1 & 0 \\  0 & 0 \\\end{array}\right),\left(\begin{array}{cc} 0 & 0 \\ 0 & 1 \\\end{array} \right)\} \\
  \mathbf{M}^{B}_{2} &=& \{\frac{1}{2}\left(\begin{array}{cc} 1 & 1 \\  1 & 1 \\\end{array}\right),\frac{1}{2}\left(\begin{array}{cc} 1 & -1 \\ -1& 1 \\\end{array} \right)\}
\end{eqnarray}

and $\mathbf{M}^{A}=\mathbf{M}^{B}_{1}$. The two measurements correspond to different orientations of the polarized detector. From basic Quantum Mechanics it is not hard to see that these orientations are orthogonal in real space.

If A and B choose to use the same measurement setup (detectors polarized in the same direction), due to the structure of the state $\varrho^{AB}_{\psi}$, whenever A measures its particle pointing upwards, B will necessarily find it pointing downwards, and viceversa. If B uses a detector polarized in an orthogonal direction, then its outcomes will be uncorrelated to those of A, which comes from the fact that:

\begin{equation}
    \langle M^{A}_{1}, M^{B}_{2,1}\rangle = - \langle M^{A}_{1}, M^{B}_{2,2}\rangle = \langle M^{A}_{2}, M^{B}_{2,1}\rangle = - \langle M^{A}_{2}, M^{B}_{2,2}\rangle = \frac{1}{2}
\end{equation}

that is, there will always be some probability overlap between the outcomes in the different orientations.

\begin{figure}[h]
\centering
  \includegraphics[scale=0.4]{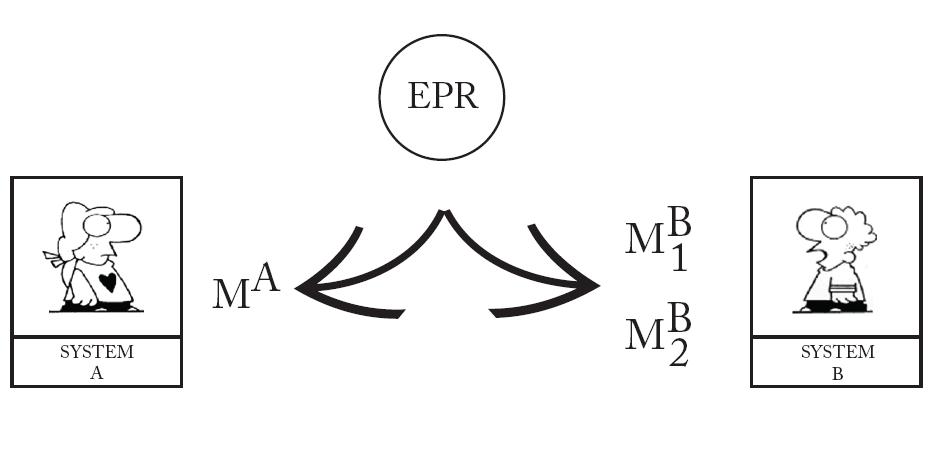}\\
  \caption{EPR paradox: Each subsystem of the EPR pair is delivered to a different observer. A and B are spatially separated, so the choice of observer B should not influence the outcomes of A.}\label{fig8}
\end{figure}

The EPR paradox can be stated as follows. Suppose that at a first stage, A and B are measuring their respective particles in different directions, i.e. using different measurement setups. Their statistics will be plain, $\mathbf{p}^{A}=\mathbf{p}^{B}=\left(\begin{array}{c}\frac{1}{2} \\\frac{1}{2} \\\end{array}\right)$ (see \ref{sec:marginal}). Now, suppose that, right before measuring its particles, B always switches to its alternative setup without letting A know about this change (they might be many lightyears apart), and measures in the same direction as A does. No matter how far apart they happen to be, if A gets $|0\rangle$, then B will get $|1\rangle$ with certainty. Their outcomes will be correlated, yet they will not be aware of this correlation unless they communicate their results, for their statistics will remain plain.

How does the particle at A learn about the change in the orientation of detector in B, despite being causally disconnected from it, is the EPR paradox. One must draw the conclusion that either:

\begin{itemize}
  \item Quantum Mechanics violates the Principle of Locality, or
  \item Quantum Mechanics is incomplete and some \emph{hidden-variable theory} that supersedes Quantum Mechanics is needed to explain these non-classical correlations.
\end{itemize}

\subsection{Marginal Measurements}\label{sec:marginal}

So far we mentioned $\varrho^{AB}$, $\varrho^{A}$ and $\varrho^{B}$ as the density operators of the whole system and of its components, respectively. The procedure to obtain the marginal density operators from the joint one is analogous as in classical probability. Let $\mathbf{p}^{AB} = \{p^{AB}_{ij}\}^{n,m}_{i=1,j=1}$ be the joint probability for variables A and B\footnote{Here we don't assume that A and B are systems with the same dimension, so n and m need not be equal}. The marginal in A is obtained via:

\begin{equation}\label{sumout}
    p^{A}_{i} = \sum^{m}_{j=1} p^{AB}_{ij}
\end{equation}

To ``sum out" one probability distribution is equivalent to ignore what is happening in the system associated to B. In Quantum Mechanics, this deliberate ignorance amounts to perform the trivial measurement, $\mathbf{M}_{B} = \{\mathbf{1}\}$, in the system we want to ignore:

\begin{equation}
    p^{A}_{i} = \langle \varrho^{AB}, M^{A}_{i}\otimes\mathbf{1}\rangle
\end{equation}

If with $\varrho^{AB}_{ab, a'b'} $ we denote the entries of $\varrho^{AB}$ corresponding to its subsystems, where $aa'$ represents the degrees of freedom localized at A and $bb'$ those localized at B, then eq. \ref{sumout} can be developed:

\begin{eqnarray}
  p^{A}_{i} &=& \sum^{n,m}_{aa',bb'}\varrho^{AB}_{ab, a'b'}M^{A}_{i,aa'}\delta_{bb'} = \nonumber\\
  &=& \sum^{n}_{aa'}(\sum^{m}_{b}\varrho^{AB}_{ab, a'b})M^{A}_{i,aa'} = \nonumber \\
  &=& \sum^{n}_{aa'} \varrho^{A}_{aa'}M^{A}_{i,aa'} = \langle \varrho^{A},M^{A}_{i} \rangle
\end{eqnarray}

where $\varrho^{A} = \sum^{m}_{b}\varrho^{AB}_{ab, a'b}$ is the so-called \emph{reduced density operator}, obtained by disregarding system B. The operation of tracing out one of the subsystems is called \emph{partial trace} of a state, and is denoted $\varrho^{A} = Tr_{B}\varrho^{AB}$.

Thus, we have seen that the subalgebras of marginal measurements can be obtained just by means of tensor-multiplying positive operators with the identity.

\section{Quantum Correlations and Bell's Inequalities}

EPR agreed on that the predictions of Quantum Mechanics were indeed correct, but ultimately explainable in terms of statistical distributions of some ``hidden variables", which would be in harmony with the principles of locality and realism. This conjecture could neither be proved nor refuted until the advent of \emph{Bell's inequalities}\cite{bell1964epr}. These inequalities have, a priori, nothing to do with Quantum Mechanics, but rather put a constraint on the correlations predictable by any theory that incorporates ``local realism" (we use this name to refer to both principles introduced above).

A bipartite system is said to be correlated if:

\begin{equation}\label{correlation}
    \mathbf{p}^{AB}\neq \mathbf{p}^{A}\mathbf{p}^{B} \Leftrightarrow \langle\varrho^{AB}, \mathbf{M}^{A}\otimes \mathbf{M}^{B}\rangle\neq\langle\varrho^{A}, \mathbf{M}^{A}\rangle\langle\varrho^{B}, \mathbf{M}^{B} \rangle
\end{equation}

for some measurements $\mathbf{M}^{A}$ and $\mathbf{M}^{B}$. Here the inner product has to be understood as componentwise products. Equation \ref{correlation} is equivalent to demand that the density operator factorize:

\begin{equation}\label{dependence}
    \varrho^{AB}\neq\varrho^{A}\otimes\varrho^{B}
\end{equation}

A quantum state may exhibit two kinds of correlations: classical and quantum. Classical correlations arise whenever a state is of the form:

\begin{equation}
    \varrho^{AB} = \sum_{k} q_{k}\varrho^{A}_{k}\otimes\varrho^{B}_{k}
\end{equation}

with $\mathbf{1}^{T}\mathbf{q} = 1$. It is straightforward to check that expectation values no longer factorize for these states. These states are known as \emph{separable} states. Quantum correlations are, once again, a consequence of the Superposition Principle applied to composite systems: a (pure) quantum-correlated state doesn't admit a convex decomposition as in the previous expression, yet it still fulfils condition expressed in eq. \ref{dependence}. One example is the state used in the EPR paradox:

\begin{equation}
    \varrho^{AB}_{\psi} = |\psi\rangle\langle\psi| = \frac{1}{2}(|0\rangle|1\rangle - |1\rangle|0\rangle)(\langle 0|\langle 1| - \langle 1|\langle 0|)\neq\varrho^{A}\otimes\varrho^{B}
\end{equation}

Such states are called \emph{entangled}. Ascertain whether it was possible or not to describe entangled states in the context of a hidden variable theory was the task of Bell's inequalities.

The assumption of local realism entails the existence of joint probability distributions of a set of measurable quantities, regardless of whether they are observed or not. Any system will be at a definite state prior to being measured, which implies that the correlations between measurements at two different locations may depend on any (in general infinite) number of hidden variables $\lambda\in\Lambda$ (with $\Lambda$ a continuous set):

\begin{equation}
    C(i,j) = \langle\varrho^{AB}, M^{A}_{i}\otimes M^{B}_{j}\rangle = \int_{\Lambda} f(M^{A}_{i},\lambda)f(M^{B}_{j},\lambda)\mathbf{p}(\lambda)d\lambda
\end{equation}

where $\mathbf{p}(\lambda)$ is a probability distribution, and $f(M^{A}_{i},\lambda)$ is the probability of measuring outcome $i$ in system A, when the unknown hidden parameter is $\lambda$.

In an experiment proposed by Clauser, Horne, Shimony and Holt (CHSH) \cite{clauser1969pet} and carried out by Aspect and coworkers \cite{aspect1982ere}, it was possible to test whether entangled states admit a hidden-variable model (the CHSH inequality applies to two-states systems, inequalities for general systems have also been found. As an interesting case see \cite{braunstein1988itb}). Consider the measurements $\mathbf{M}^{A}_{1}$, $\mathbf{M}^{A}_{2}$, $\mathbf{M}^{B}_{1}(\theta)$, and $\mathbf{M}^{B}_{2}(\theta)$. These measurements have a probability overlap which depends on the angle $\theta$ between the two different setups (see fig. \ref{fig9}). They derived the following inequality, which holds for every theory that incorporates local realism:

\begin{equation}
    -2 \leq\langle \rho^{AB} , M^{A}_{1,i}\otimes(M^{B}_{1,j}(\theta) + M^{B}_{2,j}(\theta)) + M^{A}_{2,i}(\theta)\otimes(M^{B}_{1,j}(\theta) - M^{B}_{2,j}(\theta))\rangle\leq 2, \forall i, j
\end{equation}

\begin{figure}[h]
\centering
  \includegraphics[scale=0.3]{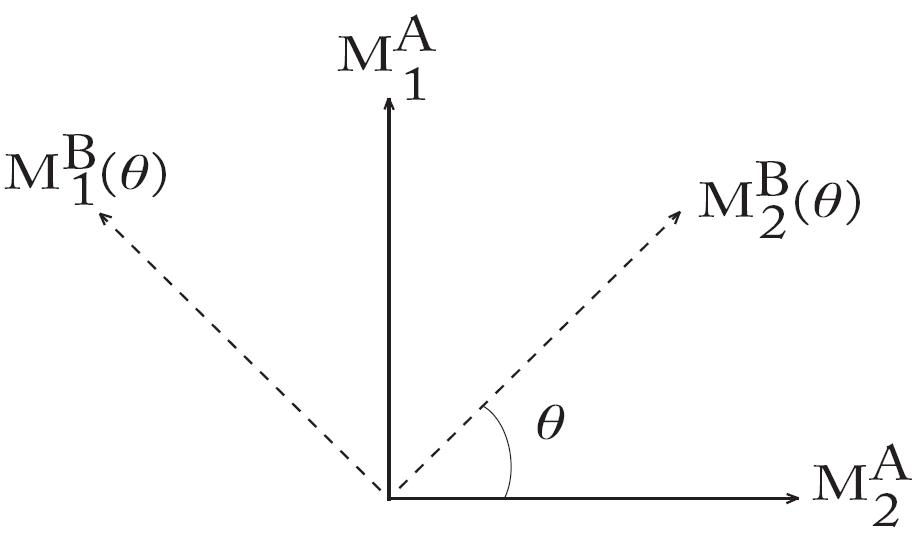}\\
  \caption{Violation of Bell's inequalities}\label{fig9}
\end{figure}

Quantum Mechanics predicts a violation of this inequality for some states. For the state $\varrho^{AB}_{\psi}$, the violation is maximum for $\theta = \frac{\pi}{4}, \frac{3\pi}{4}$, where we have that $|2\sqrt{2}|\nleq 2$. The conclusion is that EPR were in the right path: Quantum Mechanics is non-local.

\section{Information-Theoretic Considerations}

Von Neumann's Entropy is a generalization of Shannon's Entropy. It is zero for pure states, i.e., rank one density operators. In the case of bipartite systems, although the global system may be known with certainty to be in a pure state, such as for $\varrho^{AB}_{\psi}$, its marginal states, described by the reduced density operators $\varrho^{A} =Tr_{B}\varrho^{AB}_{\psi}$ and $\varrho^{B} = Tr_{A}\varrho^{AB}_{\psi}$, can be in a mixed state, so that their von Neumann entropy will be nonzero. This, once again, suggest that the whole system contains more information than the mere sum of the information contained in it parts.

As we will see, a pure state is entangled if and only if the von Neumann's entropy of any of its reduced density operators is nonzero.

One consequence of the above said is that conditional quantum information can be negative \cite{horodecki2005qic}. For pure entangled quantum states we have that $S(\varrho^{AB} = 0)$, so that:

\begin{equation}
    S(\varrho^{AB}) = S(\varrho^{A}) + S(\varrho^{B}|A)
\end{equation}

\begin{equation}
    S(\varrho^{B}|A) =  - S(\varrho^{A}) \leq 0
\end{equation}

This ``negative conditional information", with no counterpart in classical Information Theory, can be given an operational meaningful interpretation. If $S(\varrho^{B}|A)$ is negative, then A can reproduce the whole state $\varrho^{AB}$ just by means of classical communication, which is equivalent to say that quantum information can be transferred from B to A using only classical bits \cite{devetak2006omq}. Depending on its sign, conditional quantum information is the rate at which entanglement is created or consumed while transferring the state of be to state in A, and it is related to the quantum capacity of a channel.

\section{Unexpected Applications}

Given a total state that comprises locations A and B, an observer at location A can, by performing a local POVM on its subsystem, exert an influence upon the outcomes of observations at B (with A and B being spatially separated). This fact was named \emph{quantum steering} by Schr\"{o}dinger \cite{schrodinger1936prb}\cite{verstraete2002seq}. It was shown \cite{hughston1993ccq} that a local POVM at A can induce any ensemble $\{p_{k},\varrho_{k}\}$ at B provided that the reduced density operator at B admits a convex decomposition of the form $\varrho^{B} = \sum_{k}p_{k}\varrho_{k}$. If otherwise one could actually change the marginal statistics, superluminical communication would be achieved.

\subsection{Quantum Teleportation}

One of the brightest consequences of quantum steering is \emph{quantum teleportation.} Provided that observers at A and B share an EPR pair $\varrho^{AB}_{\psi}$, it is possible for them to teleport an unknown qubit\footnote{In general any qudit can be teleported. In this subsection and next one we will follow the original path of its discoverers and use qubits} in a pure state $\varphi^{C}$, initially at location A, to B using just local operations and classical communication (LOCC).

\begin{figure}[h]
\centering
  \includegraphics[scale=0.3]{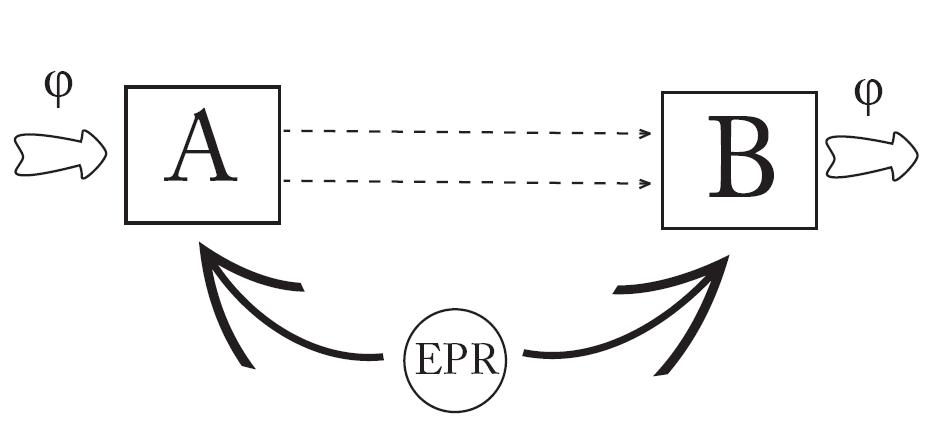}\\
  \caption{Quantum Teleportation: A quantum system in state $\varphi$ disappears at location A, and after some classical information has been sent from A to B, the system $\varphi$ appears at location B. A and B are causally disconnected. Despite its name, it is a rather prosaic effect, since it involves a protocol prescribed beforehand and requires that A and B share an EPR pair.}\label{fig10}
\end{figure}

Initially, the global system will be described by the state $\varrho^{AB}_{\psi}\otimes\varphi^{C}$. Observer at A
chooses a projective measurement $\mathbf{M}= \{M_{i}\}^{4}_{1}\in\mathfrak{M}^{A}\otimes\mathfrak{M}^{C}$ whose components are rank one operators such that:

\begin{equation}
    \langle M_{i}, \varrho^{AC}_{\psi_{j}}\rangle = \delta_{ij}
\end{equation}

where $\varrho^{AC}_{\psi_{j}} = |\psi_{j}\rangle\langle \psi_{j}|$ is any of the four states forming a \emph{Bell's basis}:

$$|\psi_{1}\rangle = \frac{1}{\sqrt{2}}(|0\rangle^{A}\otimes|0\rangle^{C} + |1\rangle^{A}\otimes|1\rangle^{C})$$
$$|\psi_{2}\rangle = \frac{1}{\sqrt{2}}(|0\rangle^{A}\otimes|0\rangle^{C} - |1\rangle^{A}\otimes|1\rangle^{C})$$
$$|\psi_{3}\rangle = \frac{1}{\sqrt{2}}(|0\rangle^{A}\otimes|1\rangle^{C} + |1\rangle^{A}\otimes|0\rangle^{C})$$
$$|\psi_{4}\rangle = \frac{1}{\sqrt{2}}(|0\rangle^{A}\otimes|1\rangle^{C} - |1\rangle^{A}\otimes|0\rangle^{C})$$

The EPR pair corresponds to the fourth vector of this basis. Once this basis has been selected, as long as the unknown system is in a pure state, it is possible to rewrite the initial state in the form \cite{bennett1993tuq}:

\begin{equation}
    \varrho^{AB}_{\psi}\otimes\varphi^{C} = \frac{1}{4}[\varrho^{AC}_{\psi_{1}}\otimes\varphi^{B} + \varrho^{AC}_{\psi_{2}}\otimes R^{\pi}_{z}(\varphi^{B}) + \varrho^{AC}_{\psi_{3}}\otimes R^{\pi}_{x}(\varphi^{B}) + \varrho^{AC}_{\psi_{4}}\otimes R^{\pi}_{y}(\varphi^{B})]
\end{equation}

Here $R^{\pi}_{k}(\varphi^{B})$ denotes a rotation of the state $\varphi^{B}$ of 180 degrees around axis k.

To teleport the unknown system, A performs the projective measurement in both systems A and C, so that its outcome determines with certainty in which state will the system in B is. Now, all A has to do is to encode its outcome in two bits and send them over to B. Once B receives the information, it will be possible to rotate the state back in the direction determined by the two bits. With 100\% accuracy, the initial unknown state will be obtained at a spatially separated location.

For mixed states, and for non-orthogonal POVM, it is still possible to teleport a system, but the process will be necessarily less efficient.

\subsection{Quantum Superdense Coding}\label{sec:superdense}

Another outstanding feat of quantum information is \emph{superdense coding} \cite{bennett1992cvo}, by which A can send to B two bits of information encoded in a single qubit\footnote{In fact, $2\log n$ bits can be sent using n-states systems}.

\begin{figure}[h]
\centering
  \includegraphics[scale=0.3]{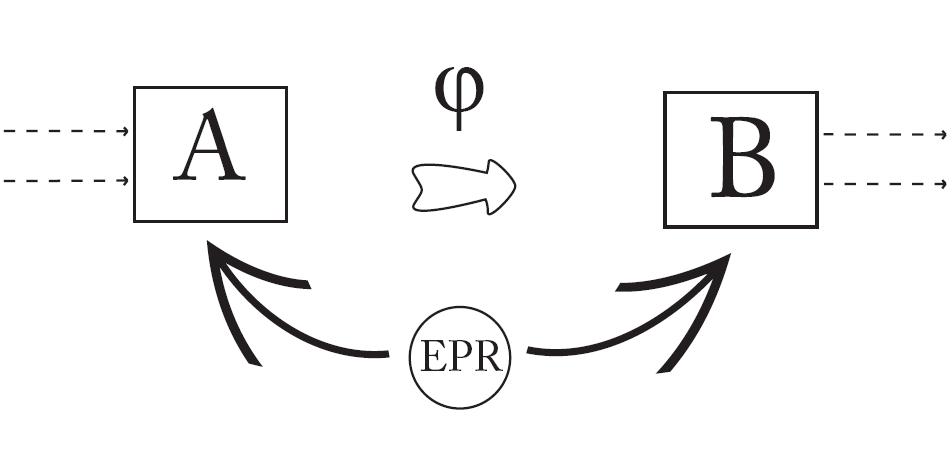}\\
  \caption{Superdense Coding: Provided that A and B share an EPR pair, A can change the global EPR state acting locally on its subsystem. Then A sends its part $\varphi =Tr_{B}\varrho^{AB}_{\psi_{i}}$ of the EPR state to B, who, measuring on a joint basis of the two subsystems, can extract two bits of information.}\label{fig11}
\end{figure}

Suppose that A and B share an EPR pair $\varrho^{AB}_{\psi_{4}}$. Observer at A causes its subsystem to evolve into one of the four possible states:

\begin{equation}
    (\mathbb{U}^{A}_{i}\otimes\mathbb{I}^{B})(\varrho^{AB}_{\psi_{4}})(\mathbb{U}^{A}_{i}\otimes\mathbb{I}^{B})^{\dagger} \Longrightarrow \varrho^{AB}_{\psi_{i}}
\end{equation}

The four possible operations that A can apply to the joint system are:

$$\mathbb{U}^{A}_{1} = \sigma^{A}_{y}$$
$$\mathbb{U}^{A}_{2} = \sigma^{A}_{x}$$
$$\mathbb{U}^{A}_{3} = \sigma^{A}_{z}$$
$$\mathbb{U}^{A}_{4} = \mathbf{1}^{A}$$

where the $\sigma$'s are Pauli matrices. After the manipulation, the system initially at A is sent to B. Observer at B chooses the projective measurement $\mathbf{M}= \{M_{i}\}^{4}_{1}\in\mathfrak{M}^{A}\otimes\mathfrak{M}^{B}$ with $\langle M_{i}, \varrho^{AB}_{\psi_{j}}\rangle = \delta_{ij}$. Thus B will gain two bits of information, while it only received one qubit.

Note that, whereas in quantum teleportation an unknown state was measured in A and the outcomes were encoded in classical bits, in superdense coding, a the ``future outcomes" of a measurement in B are encoded in a known qubit, which is later sent from A to B. First scenario is related to the capacity of transmitting quantum information through a classical channel, and second one is related to the capacity of transmitting classical information trough a quantum channel. Underlying these two processes lies the same phenomenon: Quantum Non-locality.

\chapter{Entanglement Theory}

Entanglement is a new resource for communication that lies at the very heart of lies at the very heart of Quantum Information
Theory. Thus, a theory of Entanglement which offers a qualitative description as well as quantitative measures is highly desirable.

Mainly, two difficulties surround this task. First one is to find a meaningful measure of the entanglement contained in a state. One way to obtain a suitable measure is to define beforehand some desirable properties that it should have:

\begin{description}
  \item[Scope] An entanglement measure is a map from the set of composite density operators to the positive real line:

      \begin{equation}
        E(\varrho^{AB})\in \mathbf{R}^{+}
      \end{equation}

  \item[Normalization] It should vanish only for separable states, and should be maximum for maximally entangled states:

      \begin{equation}
        0 \leq E(\varrho^{AB})\leq E(\varrho^{AB}_{\psi_{i}})
      \end{equation}

  \item[Monotonicity] $E(\varrho^{AB})$ should not increase under transformations involving only local operations and classical communication (LOCC). Consider the set of all LOCC transformations $\mathcal{T}_{LOCC}$. For any transformation $T\in\mathcal{T}_{LOCC}$:

      \begin{equation}
        E(T(\varrho^{AB}))\leq E(\varrho^{AB})
      \end{equation}

      As a consequence, one can derive the requirement of invariance under local unitary evolution. Let $\tilde{\varrho}^{AB} = (U\otimes V)\varrho^{AB}(U\otimes V)^{\dagger} $, for any two unitary operators, then:

       \begin{equation}
        E((U\otimes V)\varrho^{AB}(U\otimes V)^{\dagger}) \leq E(\varrho^{AB})
       \end{equation}

       \begin{equation}
        E((U\otimes V)^{\dagger}\tilde{\varrho}^{AB}(U\otimes V)) \leq E(\tilde{\varrho}^{AB})
       \end{equation}

      whence we obtain $E(\varrho^{AB}) = E((U\otimes V)\varrho^{AB}(U\otimes V)^{\dagger})$.

  \item[Convexity] Such a desirable property arises naturally from the reasonable assumption that mixing two states should not increase the entanglement contained in them:

        \begin{equation}
              E(\lambda \varrho^{AB} + (1 -\lambda)\varphi^{AB}))\leq \lambda E(\varrho^{AB}) + (1 - \lambda) E(\varphi^{AB})
        \end{equation}

  with $0\leq\lambda\leq1$.

  \item[Continuity] Intuitively, if $\varrho^{AB}$ is slightly perturbed into $\varphi^{AB}$, the subsequent change in the entanglement measure should be small. This is expressed as:

      \begin{equation}
        \lim_{\|\varrho - \varphi\|\rightarrow 0}E(\varrho^{AB}) - E(\varphi^{AB}) = 0
      \end{equation}

  \item[Subadditivity] The communication tasks that one is able to perform in the possession of several entangled pairs shouldn't be more than the sum of those permitted individually by each pair:

      \begin{equation}
        E(\varrho^{AB}\otimes\varphi^{AB}) \leq E(\varrho^{AB}) + E(\varphi^{AB})
      \end{equation}

    for the case when one has many copies of the same state, the demand which is often encountered is thatof weak subadditivity:

    \begin{equation}
        \frac{E(\varphi^{\otimes N})}{N} = E(\varphi)
    \end{equation}

\end{description}

The second difficulty is how to compute an entanglement measure for any given state, which as we will see, is a far from trivial task.

\section{Pure States}

For pure states, a satisfactory theory exists and there are procedures both to detect and quantify the entanglement of a given state. The von Neumann's entropy of any of the reduced density operators satisfies the requirements exposed above \cite{donald2002ute}\cite{popescu1997tam} and is directly related to the \emph{Schimdt's decomposition}\footnote{The Schmidt's decomposition of a bipartite state is the projection of the state onto an orthonormal product basis of the two Hilbert spaces. If a state is separable, its vector state will be pointing parallel to one of the vectors of this basis. If it is entangled, its vector state will have more than one component.} of the vector state. For a maximally entangled symmetric state in dimension $N$, its Schmidt's decomposition is:

\begin{equation}
    |\psi_{1}\rangle^{AB} =  \frac{1}{\sqrt{N}}\sum^{N - 1}_{i=0}|i\rangle^{A}\otimes|i\rangle^{B}
\end{equation}

and the von Neumann's entropy of its reduced density operator attains its maximum:

\begin{equation}
    S(Tr_{B}\varrho^{AB}_{\psi_{1}}) = \log N
\end{equation}

If the state is separable, it will necessarily consist of two pure states at each location, so the entropy of any of the local density operators will be zero.

\subsection{Entanglement Distillation}

The preference of von Neumann's entropy of the reduced density operator over other candidates relies also on another reason: it quantifies the amount of maximally entangled states that one can obtain from an arbitrary large number of arbitrary density operators, by some LOCC transformation \cite{bennett1996cpe}.

\begin{thm}\label{distillation}\textcolor[rgb]{0.00,0.00,1.00}{[Entanglement Distillation]} Given $m$ identical copies of one arbitrarily entangled pure state, $(\varphi^{AB})^{\otimes m}$, then there exist a LOCC scheme $T\in\mathcal{T}_{LOCC}$ such that it is possible to obtain $n \leq m$ copies of a maximally entangled state:

\begin{equation}
    \lim_{m\rightarrow\infty}\| T((\varphi^{AB})^{\otimes m}) - (\varrho^{AB}_{\psi_{1}})^{\otimes n}\| = 0
\end{equation}

and the rate at which this can be done is given by the von Neumann's entropy of any of the reduced density operators of the initial state:

\begin{equation}
    R = \lim_{m\rightarrow\infty}\frac{n}{m} = S(Tr_{B}\varphi^{AB})
\end{equation}

\end{thm}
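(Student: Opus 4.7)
The plan is to mimic Shannon's achievability for source coding, but with the Asymptotic Equipartition Property applied at the level of Schmidt coefficients rather than classical symbols. First I would write $\varphi^{AB}$ in its Schmidt form,
\begin{equation}
|\varphi\rangle^{AB} = \sum_{i=1}^{d}\sqrt{\lambda_{i}}\,|i\rangle^{A}\otimes|i\rangle^{B},
\end{equation}
so that $S(\mathrm{Tr}_{B}\varphi^{AB}) = H(\boldsymbol{\lambda})$ with $\boldsymbol{\lambda}=(\lambda_{1},\ldots,\lambda_{d})\in\mathfrak{P}_{d}$. Taking the $m$-fold tensor product gives
\begin{equation}
|\varphi\rangle^{\otimes m} = \sum_{\mathbf{i}\in\{1,\ldots,d\}^{m}}\sqrt{\lambda_{\mathbf{i}}}\,|\mathbf{i}\rangle^{A}\otimes|\mathbf{i}\rangle^{B},\qquad \lambda_{\mathbf{i}} = \prod_{k=1}^{m}\lambda_{i_{k}}.
\end{equation}
The crucial observation is that the coefficients $\lambda_{\mathbf{i}}$ are precisely the probabilities of an i.i.d.\ source over the alphabet $\{1,\ldots,d\}$ with law $\boldsymbol{\lambda}$, so the AEP applies verbatim to the strings $\mathbf{i}$.

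Next I would define the LOCC map $T$ as follows. Let $\mathcal{T}^{m}_{\epsilon}\subseteq\{1,\ldots,d\}^{m}$ be the typical set of $\boldsymbol{\lambda}$, whose cardinality satisfies $2^{m(H(\boldsymbol{\lambda})-\epsilon)}\leq \|\mathcal{T}^{m}_{\epsilon}\|\leq 2^{m(H(\boldsymbol{\lambda})+\epsilon)}$ and whose total weight is at least $1-\epsilon$. Alice performs the two-outcome projective measurement $\{\Pi_{\mathcal{T}},\mathbf{1}-\Pi_{\mathcal{T}}\}$ on her share, with $\Pi_{\mathcal{T}}=\sum_{\mathbf{i}\in\mathcal{T}^{m}_{\epsilon}}|\mathbf{i}\rangle\langle\mathbf{i}|^{A}$, and broadcasts the outcome classically. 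Conditioned on the ``typical'' outcome, which occurs with probability $1-\epsilon$, the global state collapses to
\begin{equation}
|\tilde{\varphi}\rangle^{AB} = \frac{1}{\sqrt{\sum_{\mathbf{i}\in\mathcal{T}^{m}_{\epsilon}}\lambda_{\mathbf{i}}}}\sum_{\mathbf{i}\in\mathcal{T}^{m}_{\epsilon}}\sqrt{\lambda_{\mathbf{i}}}\,|\mathbf{i}\rangle^{A}\otimes|\mathbf{i}\rangle^{B}.
\end{equation}
Since every typical $\mathbf{i}$ has $\lambda_{\mathbf{i}}\in[2^{-m(H(\boldsymbol{\lambda})+\epsilon)},2^{-m(H(\boldsymbol{\lambda})-\epsilon)}]$, all amplitudes are $\epsilon$-close to the uniform value $\|\mathcal{T}^{m}_{\epsilon}\|^{-1/2}$, so $|\tilde{\varphi}\rangle^{AB}$ is arbitrarily close in trace norm to the maximally entangled state on the $\|\mathcal{T}^{m}_{\epsilon}\|$-dimensional Schmidt subspace. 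A final step of local unitaries (permuting basis labels on each side and discarding dimensions) converts this into $n = \lfloor\log\|\mathcal{T}^{m}_{\epsilon}\|\rfloor$ copies of the qubit Bell state $\varrho^{AB}_{\psi_{1}}$, yielding
\begin{equation}
\lim_{m\rightarrow\infty}\frac{n}{m} = H(\boldsymbol{\lambda}) = S(\mathrm{Tr}_{B}\varphi^{AB}).
\end{equation}

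The main obstacle I expect is the quantitative control of the approximation: one must show simultaneously that (i) the probability of the ``atypical'' branch vanishes as $m\to\infty$ and (ii) inside the typical branch, the trace distance between $|\tilde\varphi\rangle\langle\tilde\varphi|^{AB}$ and the ideal flat state on $\mathcal{T}^{m}_{\epsilon}$ tends to zero. Both follow from the AEP bounds on $\lambda_{\mathbf{i}}$, but the accounting must be careful because the rate loss $\epsilon$ in the cardinality of $\mathcal{T}^{m}_{\epsilon}$ must be taken to zero after the limit $m\to\infty$. Finally, although only achievability is stated, optimality of the rate (the converse that no LOCC protocol can exceed $S(\mathrm{Tr}_{B}\varphi^{AB})$) is a consequence of the monotonicity axiom required of any entanglement measure, once one verifies that the reduced von Neumann entropy fulfils that axiom on pure states; I would mention this to justify calling $R=S(\mathrm{Tr}_{B}\varphi^{AB})$ the distillation rate rather than merely a lower bound.
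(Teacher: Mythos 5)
Your strategy---Schmidt-decompose $\varphi^{AB}$, apply the classical AEP to the Schmidt coefficients, and have Alice project onto the typical set---is the standard concentration protocol and is a genuinely different route from the one the text takes: the text instead invokes Nielsen's majorization lemma (a pure state $\varphi^{AB}$ is LOCC-convertible into $\varrho^{AB}$ if and only if $\lambda(\varphi^{A})\prec\lambda(\varrho^{A})$) together with the quantum AEP for the spectrum of $(Tr_{B}\varphi^{AB})^{\otimes m}$, and never constructs the measurement explicitly. However, as written your argument has a genuine gap at the step where you pass from the post-measurement state $|\tilde{\varphi}\rangle^{AB}$ to the maximally entangled state on the typical subspace. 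Weak typicality only guarantees $\lambda_{\mathbf{i}}\in[2^{-m(H(\boldsymbol{\lambda})+\epsilon)},2^{-m(H(\boldsymbol{\lambda})-\epsilon)}]$, i.e.\ each coefficient is within a \emph{multiplicative} factor $2^{\pm m\epsilon}$ of the uniform value $\|\mathcal{T}^{m}_{\epsilon}\|^{-1}$, and that factor grows exponentially in $m$ for any fixed $\epsilon>0$. Summed over the roughly $2^{mH(\boldsymbol{\lambda})}$ typical strings these deviations do not wash out: the fidelity
\begin{equation}
|\langle\Phi_{\mathcal{T}}|\tilde{\varphi}\rangle|^{2}=\frac{\bigl(\sum_{\mathbf{i}\in\mathcal{T}^{m}_{\epsilon}}\sqrt{\lambda_{\mathbf{i}}}\bigr)^{2}}{\|\mathcal{T}^{m}_{\epsilon}\|\sum_{\mathbf{i}\in\mathcal{T}^{m}_{\epsilon}}\lambda_{\mathbf{i}}}
\end{equation}
is generically exponentially small in $m$ for a non-uniform $\boldsymbol{\lambda}$, so the trace distance does \emph{not} tend to zero. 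Your item (ii) therefore does not ``follow from the AEP bounds''; this is precisely the point where the naive shortcut fails.

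There are two standard repairs, and each corresponds to one of the two proofs being compared. (a) Refine the measurement: instead of the two-outcome projection onto the whole typical set, Alice measures the \emph{value} of the product $\lambda_{i_{1}}\cdots\lambda_{i_{m}}$ (equivalently, the type of the string $\mathbf{i}$). Conditioned on any outcome, all surviving Schmidt coefficients are exactly equal, so the collapsed state is \emph{exactly} maximally entangled on a subspace whose dimension lies, with probability tending to one, between $2^{m(H(\boldsymbol{\lambda})-\epsilon)}$ and $2^{m(H(\boldsymbol{\lambda})+\epsilon)}$; your final relabelling step then goes through verbatim. (b) Keep the coarse projection but replace the trace-norm claim by the majorization criterion: the typical spectrum is majorized by the flat spectrum of $n\approx m(H(\boldsymbol{\lambda})-\epsilon)$ Bell pairs, hence LOCC-convertible into them---this is essentially what the text's proof does via the lemma quoted from Nielsen. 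Your closing remark, that the converse follows from LOCC-monotonicity of the reduced von Neumann entropy, is correct and is a welcome addition, since the text's proof addresses only achievability.
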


Our proof for this theorem will need typicality arguments, so it will be given in next chapter.

The importance of entanglement distillation is that most communication tasks rely on maximally entangled states in order to yield acceptable transmission fidelities(see teleportation and superdense coding). Due to the stochastic influence of channels on quantum information, it is very difficult to prevent a transmitted entangled particle from being corrupted with some noise, and protocols must be devised to restore the initial entanglement, at the expense of sending more particles.

The converse procedure is called \emph{entanglement dilution}, by which $n$ copies of the maximally entangled state can be used to obtain $m\geq n$ copies of an arbitrarily entangled pure state. However it doesn't seem to be as practical as entanglement distillation.

\section{Mixed States}

For the general case of mixed states, the theory of entanglement is far from complete. There are several several measures, based on inequivalent criteria. The prevalence of any of this candidates has not yet occurred. Our approach to the study of entanglement focuses on two distinct areas, detection and quantification of entanglement. These two concepts are tightly interrelated, and this scheme is just a matter of taste.

It was shown that the \emph{separability problem}, i.e. to ascertain whether a given state is separable or entangled, belongs to the class of NP-hard problems \cite{gurvits2003cdc}. This means that, on a realistic basis, we should not expect to measure (and detect) entanglement with arbitrary accuracy. Several relaxations of the separability problem have been proposed in the context of \emph{convex programming} \cite{doherty2002dsa}\cite{doherty2004cfs}\cite{brandao2004smm}\cite{branduao2004rsp}. Here we will study more in depth the approach suggested in \cite{herreramarti2008sec}, which a offers a geometric intuition of the space of composite density operators.

For a thorough review of this concepts readers might check \cite{terhal2001dqe}\cite{bruss2001ce}.

\subsection{Detection}

Several criteria have been proposed to check whether a state exhibits quantum or just classical correlations. Here we shall list some of them:

\begin{description}
  \item[Peres-Horodecki Criterion] Peres showed that \emph{Positivity under Partial Transposition} (PPT) of the density operator was a necessary condition for separability \cite{peres1996scd}. A bipartite state $\varrho^{AB}$ is separable if:

      \begin{equation}
        (\varrho^{AB})^{T_{B}} = \sum_{k} \varrho^{A}_{k}\otimes (\varrho^{B}_{k})^{T} \geq 0
      \end{equation}

      that is, if it remains positive semidefinite under transposition of just one local density operators. The Horodecki traced back this argument to the theory of positive maps \cite{horodecki1996sms}, and demonstrated that for systems of dimension $2\times2$ (two entangled qubits)and $2\times3$ (one qubit entangled with one qutrit) this criterion is also a sufficient condition. Why the separability problem is solved, despite being NP-hard in general, finds an explanation in the fact that, for $2\times2$ and $2\times3$ systems, any positive map is of the form $\mathbb{L} = \mathbb{S}_{1} + \mathbb{S}_{2}\circ\mathbb{T}$ ($\mathbb{S}_{1,2}$ are completely positive maps and $\mathbb{T}$ is the transposition map), so no further search is needed to fully characterize separable states.\footnote{Positive maps are those maps for which $\mathbb{L}(\varrho)\geq 0, \forall\varrho$. A positive map $\mathbb{L}$ is completely positive if and only if $(\mathbb{I}_{n}\otimes\mathbb{L})(\varrho)\geq 0, \forall\varrho,n$. For classification, only positive maps are interesting. It is easy to see that the PPT criterion relies on a positive map.}

  \item[Majorization] Another necessary condition for separability is the \emph{Majorization Criterion}, which although being less effective in detecting entanglement than the PPT criterion, reveals a thermodynamic aspect of non-locality. Consider the composite system $\varrho^{AB}$ and its reduced density operator $\varrho^{A}$. Denote by $\lambda(\varphi)$ the vector of the eigenvalues of $\varphi$, arranged in decreasing order.$\varrho^{AB}$ is separable if:

      \begin{equation}
        \lambda(\varrho^{AB})\prec\lambda(\varrho^{A})
      \end{equation}

      where $\prec$ is a pre-order relation meaning that the vector $\lambda(\varrho^{A}) - \lambda(\varrho^{AB})$ lies inside some \emph{positive cone} so that $S(\varrho^{AB})\geq S(\varrho^{A})$ \cite{nielsen1999cce}\cite{nielsen2001ssm}. This, in turn, implies that if the state is entangled, then $S(\varrho^{AB})\leq S(\varrho^{A})$, so once again, we see that Quantum Mechanics allows information to be stored in a composite system in a holistic manner, regardless of its parts.

  \item[Entanglement Witnesses] In \cite{horodecki1996sms}, the Horodecki introduced Entanglement Witnesses (EW). An EW is a Hermitian operator $W = W^{\dagger}$ such that:

      \begin{eqnarray}
        \langle W, \sigma \rangle  &\geq& 0 \text{, for all separable $\sigma$ } \\
        \langle W, \varrho\rangle &<& 0 \text{, for some entangled $\varrho$}
      \end{eqnarray}

       which is a consequence of the Hahn-Banach Theorem \cite{alt2002lfa} in functional analysis, since EW are directly related to positive maps \cite{jamiolkowski1972ltp} defined on a Banach space. This is valid for algebras of arbitrary dimension, so it will prove to be a very useful concept.

  \end{description}

As stated before, there is a duality between positive maps and operators. In low dimensions ($2\times2$ and $2\times3$), any positive map admits the decomposition $\mathbb{L} = \mathbb{S}_{1} + \mathbb{S}_{2}\circ\mathbb{T}$, and any EW can be written as \cite{jamiolkowski1972ltp}:

\begin{equation}\label{decomposable}
    W = (\mathbb{I}\otimes\mathbb{L})(\varrho^{AB}_{\psi_{1}}) = P + (\mathbb{I}\otimes\mathbb{T})Q
\end{equation}

Where P and Q are nonnegative operators. These EW receive the name of \emph{decomposable EW}. For $P=0$ and $Q=I$ one gets the PPT criterion. Nevertheless, for higher dimensions there exist EW which are not of the form \ref{decomposable}. A consequence is that there will be entangled states for which $\langle W, \varrho\rangle \geq 0$. These states are called \emph{Positive Partial Transposed Entangled States}(PPTES), and it was shown in \cite{horodecki1998mse}, that this kind of entangled states cannot be used in distillation procedures. For this reason, the entanglement contained in them is called \emph{Bound Entanglement}, since it cannot be extracted for communication tasks.

Other criteria to detect and quantify Bound Entanglement have been proposed, such as non-decomposable EW \cite{lewenstein2000oew}, Schmidt number Witnesses \cite{sanpera2001snw}, Robust Semidefinite Programming \cite{branduao2004rsp} and, more recently, a geometric approach based on separating hyperplanes \cite{bertlmann2008gew}. We will pursue this geometric interpretation of entanglement in next section.

\subsection{Quantification}

There exist several candidates, depending on which criterion one takes as more fundamental. Some of them have operational definitions and some have not.

\begin{description}

  \item[Entanglement cost, $E_{C}$] It quantifies how many maximally entangled pairs are needed to generate a given entangled state, minimized over all possible dilution protocols:

      \begin{equation}
        E_{C}(\varrho) = \min_{\mathcal{T}_{LOCC}} \lim_{m\rightarrow\infty} \frac{n}{m}
      \end{equation}

      where $n\leq m$ is the number of maximally entangled pairs, $\varrho^{\otimes n}_{\psi_{1}}$, whose entanglement is diluted into m copies of the original state, $\varrho^{\otimes m}$.

  \item[Distillable entanglement, $E_{D}$] It is a measure of how many maximally entangled pairs can be obtained by performing an optimal distillation protocol to an asymptotic number m of copies of the given state:

     \begin{equation}
         E_{D}(\varrho) = \max_{\mathcal{T}_{LOCC}} \lim_{m\rightarrow\infty} \frac{n}{m}
     \end{equation}

      with n and m as before.

  \item[Relative entropy of entanglement, $E_{R}$] Analogously to its classical counterpart, $E_{R}$ (\cite{vedral1997ema}) can be thought of as a measure for the extent that one can confuse two probability distribution, result known as Sanov's Theorem (see \cite{cover2006eit}). But this case, it quantifies to which amount an entangled state can be taken as separable. The \emph{relative entropy of entanglement} is:

      \begin{equation}
        E_{R}=\min_{\sigma\in\mathcal{S}}S(\varrho||\sigma) = \min_{\sigma\in\mathcal{S}} Tr(\varrho\log\varrho - \varrho\log\sigma)
      \end{equation}

  \item[Entanglement of formation, $E_{F}$] Any density operator has a non-unique convex decomposition of the form $\varrho = \sum_{k} p_{k} \varrho_{k}$, where $\varrho_{k}$ are rank one density operators. Its \emph{entanglement of formation} is the averaged von Neumann's entropy of those pure states, minimized over all possible convex decompositions:

      \begin{equation}
        E_{F} = \min_{\{p_{k},\varrho_{k}\}} \sum_{k} p_{k} S(\varrho_{k})
      \end{equation}

\end{description}

As for the detection case, it is not known whether this measures are equivalent. The values of this quantities are only known for some cases. In the $2\times2$ case, the entanglement of formation can be exactly computed thanks to a measure known as \emph{concurrence}\cite{wootters1998efa}. For any entanglement measure $E(\varrho)$, $E_{D}(\varrho)\leq E(\varrho)\leq E_{C}(\varrho)$. For bound entangled states, this is trivially satisfied.

\section{Geometric Insights into Entanglement}

The set of all density operators is a convex set, which follows from probability arguments. Mixing cannot increase entanglement, hence the set of separable density operators is also convex. We will denote this two sets by $\mathcal{D}$ and $\mathcal{S}$, respectively.

\begin{figure}[h]
\centering
  \includegraphics[scale=0.4]{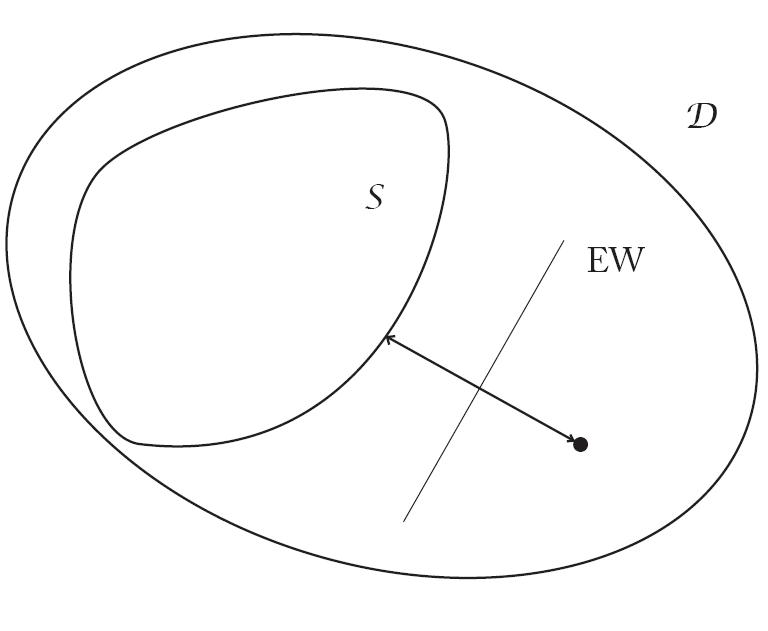}\\
  \caption{Set of composite states $\mathcal{D}$. $\mathcal{S}$ is a convex subset containing all states that exhibit classical correlations. $\mathcal{D}\setminus\mathcal{S}$ contains all states exhibiting quantum correlations. For a given density operator in $\mathcal{D}$, its distance to the separable set is a measure of the entanglement that it contains. An Entanglement Witness will separate $\mathcal{S}$ from a convex subset of $\mathcal{D}\setminus\mathcal{S}$.}\label{fig12}
\end{figure}

It is easy to see that entanglement witnesses constitute hyperplanes which split $\mathcal{D}$ into two subsets, one of which strictly contains $\mathcal{S}$. Let $W=W^{\dagger}$ be an EW, then:

\begin{equation}\label{rho}
    \langle W, \varrho   \rangle < 0 \text{, for some $\varrho \in \mathcal{D}\setminus\mathcal{S}$}
\end{equation}

\begin{equation}\label{sigma}
    \langle W, \sigma \rangle \geq 0 \text{, $\forall  \sigma \in \mathcal{S}$}
\end{equation}

Clearly this defines a hyperplane dividing $\mathcal{D}$: $\langle W, \sigma \rangle \geq \langle W, \varrho \rangle$. In our notation, an EW is optimal if (\ref{rho}) holds for the largest number of $\varrho$'s. Intuitively, such an EW will be tangent to $\mathcal{S}$ (see fig \ref{fig12}). We will illustrate this fact in a moment.

A geometric measure of  the entanglement contained in a state is its distance to the set of separable density operators. The distance of a density operator $\varrho$ to the separable set $\mathcal{S}$ is:

\begin{equation}\label{distance}
    D = \min_{\sigma\in\mathcal{S}}\|\varrho - \sigma\|
\end{equation}

\subsection{Duality between Detection and Quantification}

It is a general result from geometric optimization that the problem of finding a separating hyperplane between a point $\mathbf{p}$ and  a convex set $\mathcal{C}$ is dual to the problem of finding the distance between $\mathcal{C}$ and $\mathbf{p}$ \cite{boyd2004co}. In matrix space language, this duality can be illustrated as follows. The problem (ref{distance}) can be expressed as:

\begin{eqnarray}\label{distance2}
    \min &\| \tau \|\\
    \text{such that}&\varrho - \sigma =\tau \nonumber \\
    &\sigma \in \mathcal{S}\nonumber
\end{eqnarray}

with variables $\tau$ and $\sigma$. The Lagrangian of (\ref{distance2}) is:

\begin{equation}\label{lagrangian}
    \mathcal{L} = \| \tau \| + \langle W,\varrho - \sigma - \tau\rangle
\end{equation}

where W is the Lagrange multiplicator associated to the equality constraint. It is not hard to see that it represents a a hyperplane. Noting that $\langle W,\tau \rangle \leq \| W \|\| \tau \|$, the dual function can be written as:

\begin{equation}\label{dualfunction}
    g(W) = \min_{\sigma \in \mathcal{S},\tau} [\langle W,\varrho \rangle - \langle W,\sigma \rangle + \| \tau \|(1-\| W \| + \delta)]
\end{equation}

where the parameter $\delta\geq 0$ is related to the relative orientation between the hyperplane represented by W, and the line going from the separable set $\mathcal{S}$ to the density operator $\varrho$, and it is equal to zero if and only if they are perpendicular. For (\ref{dualfunction}) to be bounded from below in $\tau$, the additional constraint $\| W \| - \delta \leq 1$ must be included. So the dual problem of (\ref{distance2}) is:

\begin{equation}\label{dualproblem}
  \max_{\| W \| - \delta \leq 1}[ \min_{\sigma \in \mathcal{S}} [\langle W,\varrho \rangle - \langle W,\sigma \rangle]]
\end{equation}

It is straightforward to check that the optimal value of (\ref{dualproblem}) is attained if and only if W is an optimal EW. This result, also known as the Bertlmann-Narnhoffer-Thirring Theorem (see Ref. \cite{bertlmann2002gpe}), will let us trace a link between the entanglement detection and quantification problems (compare also with Refs. \cite{brandao2005qew}\cite{eisert2007qew})

\subsection{Ellipsoidal Classification}

The basic premise of this method is that the set of separable states $\mathcal{S}$ can be approximated by a \emph{Minimum Volume Covering Ellipsoid} (MVCE) of an ensemble of vectors corresponding to some separable density operators. Then, the following classification scheme can be adopted: if a vector falls inside the MVCE, it will be taken as separable, and if it falls outside, it will be regarded as entangled.

An ellipsoid centered at $\mathbf{x}_{c}$ can be expressed as:

\begin{equation}\label{ellipsoid}
    \mathcal{E}= \{ \mathbf{x} | (\mathbf{x} - \mathbf{x}_{c})^{T}A(\mathbf{x} - \mathbf{x}_{c}) \leq 1 \}
\end{equation}

where $A = A^{T}$ is a positive definite matrix of dimension $N^{2}- 1$. The volume of this ellipsoid is proportional to $\det(A^{-1/2})$.

Since in matrix space quadratic forms are not defined, one needs to work in a real vector space to build this ellipsoid. We first obtain an ensemble of ``separable vectors" by means of tensorially multiplying states along all directions specified by some canonical basis. For instance, in the $2\times 2$ case, this ensemble would be $\{\mathbf{x}^{sep}_{i}\} =\{(1,0,0)\otimes(1,0,0), (1,0,0)\otimes(-1,0,0),(1,0,0)\otimes(0,1,0), (1,0,0)\otimes(0,-1,0), ...,(0,0,-1)\otimes(0,0,-1)\}$ . Later on we will see that it is convenient to vary the norm $\|\mathbf{x}^{sep}_{i}\|_{2}$ of these vectors. This procedure ensures that all vectors will lie as spaced as possible in the separable set $\mathcal{S}$. Secondly, we minimize the volume of an ellipsoid, constrained to have all generated ``separable vectors" falling inside it. One way to obtain the MVCE of this ensemble would be to solve the following problem:

\begin{eqnarray}\label{mvce}
    \min &\log\det A^{-1}\\
    \text{such that}&(\mathbf{x}^{sep}_{i}-\mathbf{x}_{c})^{T}A(\mathbf{x}^{sep}_{i}-\mathbf{x}_{c})\leq 1 \nonumber
\end{eqnarray}

with variables A and $\mathbf{x}_{c}$. Here, logarithm was taken in order to drop off proportionality terms. Despite the exponential growth of the dimension of the associated vector space, interior point methods used for minimization still converge polynomially to a solution in dimension as large as 1000, or more \cite{boyd2004co}.

\begin{figure}[h]
\centering
  \includegraphics[width=7.5cm]{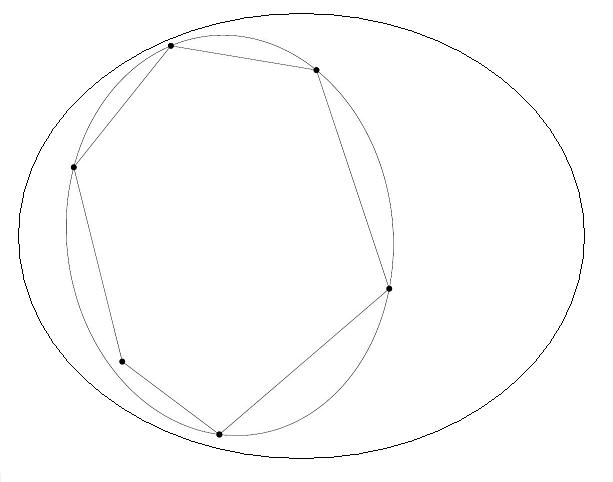}\\
  \caption{The vertices of the polytope are the generated ``separable vectors" of which the MVCE is found. The larger set corresponds to the whole space of density operators}\label{DOspace}
\end{figure}

\subsection{Results for $2\times 2$ and $2\times 3$ Systems}

The separability problem is solved for $2\times 2$ and $2\times 3$ systems, thanks to the PPT criterion. One can use this fact to benchmark the method. The original problem (\ref{distance}) casted as:

\begin{eqnarray}\label{distance3}
    \min &\|\varrho - \sigma\|_{F} \\
    \text{such that} &\sigma^{T_{A}}\geq 0 \nonumber
\end{eqnarray}

which gives the true results.  The second problem is to find the MVCE through (\ref{mvce}), and compute the distance to this ellipsoid in a similar way:

\begin{eqnarray}\label{distance4}
  \min &\|\mathbf{r} - \mathbf{s}\|_{2} \\
  \text{such that} &(\mathbf{s}-\mathbf{x}_{c})^{T}A(\mathbf{s}-\mathbf{x}_{c})\leq 1 \nonumber
\end{eqnarray}

where $\mathbf{r}$ and $\mathbf{s}$ stand for the vectorized counterparts of $\varrho$ and $\sigma$. The results obtained for pure vectors ($\|\mathbf{x}^{sep}_{i}\|_{2} = 1 $) are rather discouraging: whereas none of the generated ``separable vectors" fell outside the MVCE, only 12.7\% of the ``entangled vectors" are detected. However, the ellipsoid can be shrunk by reducing the norm of the generated ensemble $\{\mathbf{x}^{sep}_{i}\}$. At the expense of letting some ``separable vectors" fall outside the ellipsoid, the number of correctly classified ``entangled vectors" increases. The event that a true ``separable vector" falls outside the MVCE will be a \emph{false positive}, while if an ``entangled vector" falls inside the  MVCE, it will be \emph{false negative}. Stepwise reducing the norm of the vectors belonging to the separable ensemble Tables 1 and 2 are obtained.

\begin{table}
  \centering
\begin{tabular}{|c||c|c|}
  \hline
  \multicolumn{3}{|c|}{2 x 2 Systems}\\
  \hline
  Norm & False Positives & False Negatives \\
  \hline \hline
  0.1 & 962 & 0 \\
  0.2 & 868 & 0 \\
  0.3 & 687 & 0 \\
  0.4 & 484 & 0 \\
  0.5 & 287 & 15 \\
  0.6 & 180 & 184 \\
  0.7 & 92 & 410 \\
  0.8 & 32 & 600 \\
  0.9 & 5 & 755 \\
  1.0 & 0 & 873 \\
  \hline
\end{tabular}
\caption{Number of misclassified vectors in a sample of 1000 ``separable vectors" and 1000 ``entangled vectors", as a function of the Euclidean norm of the vectors of the separable ensemble}
\end{table}

\begin{table}
  \centering
\begin{tabular}{|c||c|c|}
  \hline
  \multicolumn{3}{|c|}{2 x 3 Systems}\\
  \hline
  Norm & False Positives & False Negatives \\
  \hline \hline
  0.1 & 949 & 0 \\
  0.2 & 812 & 0 \\
  0.3 & 597 & 0 \\
  0.4 & 427 & 52 \\
  0.5 & 269 & 196 \\
  0.6 & 160 & 389 \\
  0.7 & 80 &  572 \\
  0.8 & 34 & 699 \\
  0.9 & 11 & 807 \\
  1.0 & 0 & 900 \\
  \hline
\end{tabular}
\caption{Number of misclassified vectors in a sample of 1000 ``separable vectors" and 1000 ``entangled vectors", as a function of the Euclidean norm of the vectors of the separable ensemble}
\end{table}

\begin{figure}[h]
\centering
  \includegraphics[width=9cm]{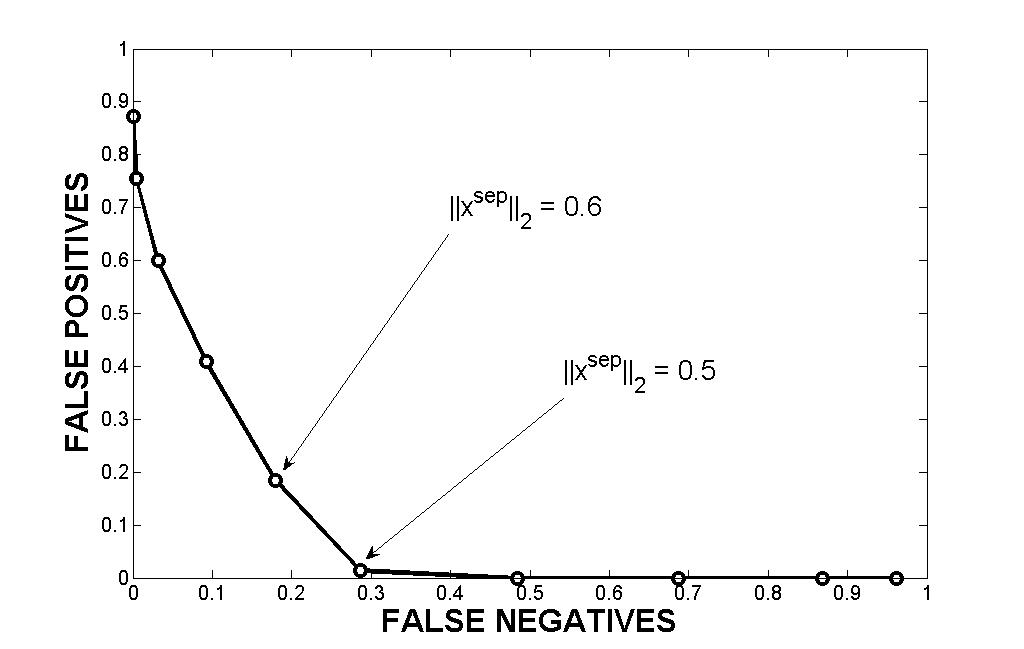}\\
  \caption{False Negatives versus False Positives for $2 \times 2$ systems, showing that there exists an area where the probability of wrongly classifying a vector can be brought down to 15.1\%, between $\|\mathbf{x}^{sep}_{i}\|_{2} = 0.6$ and $\|\mathbf{x}^{sep}_{i}\|_{2} = 0.5$}\label{fnfp22}
\end{figure}

\begin{figure}[h]
\centering`
  \includegraphics[width=9cm]{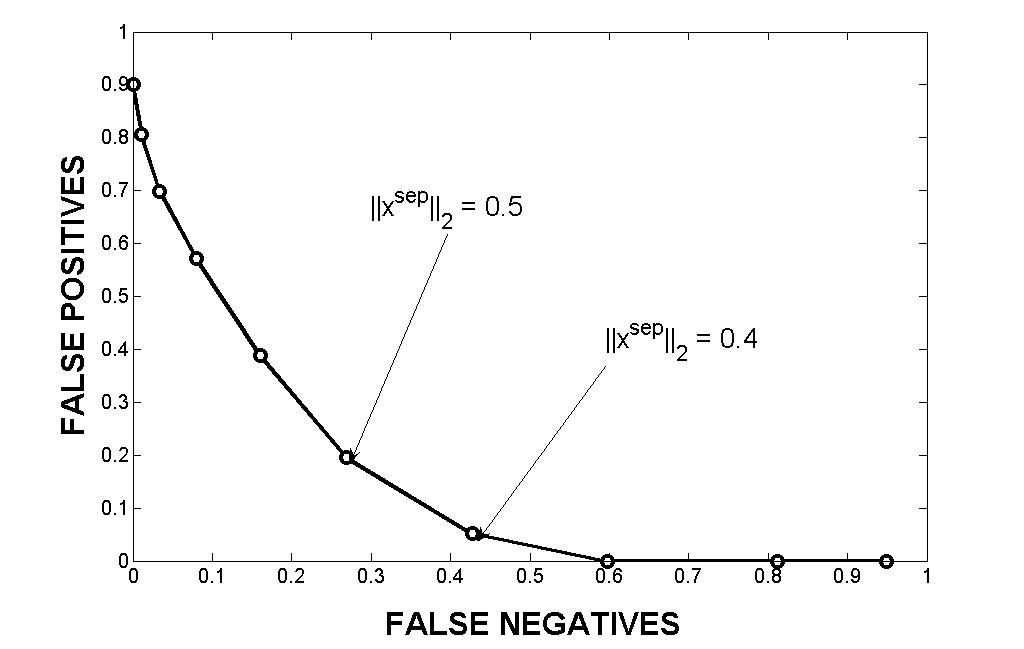}\\
  \caption{False Negatives versus False Positives for $2 \times 3$ systems. The error probability can be reduced to 23.2\%, between $\|\mathbf{x}^{sep}_{i}\|_{2} = 0.5$ and $\|\mathbf{x}^{sep}_{i}\|_{2} = 0.4$}\label{fnfp23}
\end{figure}

There is a trade-off between the number of correctly classified states and non-ambiguousness of the test. The relevant area of $2\times 2$ systems is between norms  $\|\mathbf{x}^{sep}_{i}\|_{2} = 0.6$ and $\|\mathbf{x}^{sep}_{i}\|_{2} = 0.5$, as can be seen in Fig. \ref{fnfp22}. A measure of entanglement ought to  be as unambiguous as possible, and thus the best choice is $\|\mathbf{x}^{sep}_{i}\|_{2} = 0.5$, since for this case only about 1.5\% of the ``entangled vectors" are misclassified. For this choice, in general, a vector will be misclassified 15.1\% of the time. For $2\times 3$ systems (see fig. \ref{fnfp23}), the MCVE approximates somewhat less efficiently the separable set. However, still 76.8\% of the vectors are correctly classified, in the area comprised between $\|\mathbf{x}^{sep}_{i}\|_{2} = 0.5$ and $\|\mathbf{x}^{sep}_{i}\|_{2} = 0.4$. In these systems, it misclassifies at least 5.2\% of the ``separable vectors".

\subsection{Pseudo-Entanglement Witnesses}

For a vector space endowed with the Euclidean norm, there is a simple way to construct a tangent hyperplane to a given ellipsoid. We can use this fact to build realistic observables amenable to a laboratory setting.

The tangent hyperplane to the ellipsoid can be expressed as:

\begin{equation}\label{tangent1}
    \nabla_{\mathbf{x}}[(\mathbf{x}-\mathbf{x}_{c})^{T}A(\mathbf{x}-\mathbf{x}_{c}) - 1]_{s_{0}}(\mathbf{r} - \mathbf{s}_{0}) = 0
\end{equation}

where $\mathbf{s}_{0} = P_{\mathcal{E}}(\mathbf{r})$ is the projection of the vectorized density operator under study onto the MVCE. It can be expressed in affine form as:

\begin{equation}\label{tangent2}
    (\mathbf{s}_{0} - \mathbf{x}_{c})^{T}{A(\mathbf{r} - \mathbf{x}_{c}) = 1}
\end{equation}

(compare with Ref. \cite{bertlmann2005oew}). It is important to keep in mind that, although the hyperplanes introduced in (\ref{tangent2}) very much resemble an Entanglement Witness, they are not so in general. This is because the MCVE may in general be a proper subset of the separable set $\mathcal{S}$, and no tangent hyperplane to this MVCE will strictly separate $\mathcal{S}$ from any entangled state. Nevertheless, these \emph{Pseudo}-EW can be used to estimate the amount of entanglement contained in a given entangled matrix $\varrho$ via (\ref{dualproblem}), which at the optimal value will be equal to (\ref{distance2})\cite{boyd2004co}. For an illustration of entanglement estimation see fig. \ref{bounds}.

\begin{figure}[h]
\centering
  \includegraphics[width=9cm]{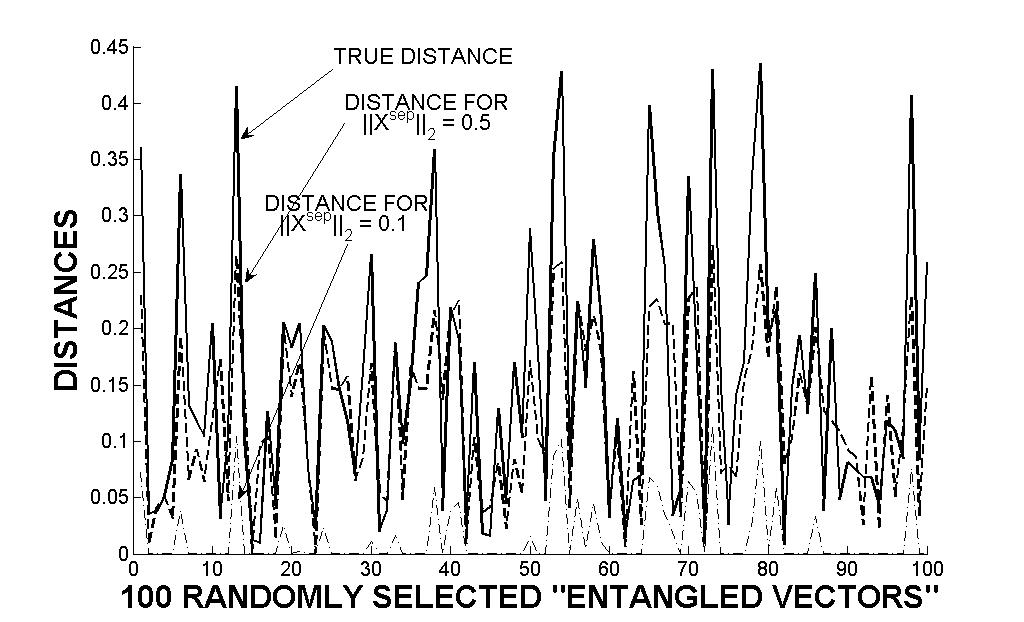}\\
  \caption{For 100 random ``entangled vectors" of a $2 \times 2$ system, the continuous black line is the true distance to the separable set $\mathcal{S}$, while the dashed line stands for the distance of the vectors for a MVCE of ``separable vectors" of norm $\|\mathbf{x}^{sep}_{i}\|_{2} = 0.5$. At the bottom, the pointed line represents the distances obtained for norm $\|\mathbf{x}^{sep}_{i}\|_{2} = 1$}\label{bounds}
\end{figure}

\subsection{Bound Entanglement Detection}

For composite systems of dimension higher than 6, there is a special kind of entangled states that cannot be used, in principle, to enhance communication. The entanglement contained in these states cannot be distilled to obtain pure entangled states \cite{horodecki1998mse}, and it receives the name of \emph{Bound Entanglement} (BE). The PPT criterion fails to detect this kind of states, and it just becomes a necessary condition for quantum correlations to arise. Other criteria to detect and quantify BE have been proposed, such as non-decomposable EW \cite{lewenstein2000oew}, Schmidt number Witnesses \cite{sanpera2001snw}, and, more recently, a geometric approach based on separating hyperplanes \cite{bertlmann2008gew}.

The MVCE approach is in the spirit of the latter of the aforementioned methods, but instead of hyperplanes, we shall use the MVCE in order to detect BE. Intuitively, the ellipsoid covering a set of ``separable vectors" should leave bound entangled states on its outside. This fact is studied in 3X3 systems, where a parametrization of bound entangled states, due to P. Horodecki, is available \cite{horodecki1997sca}. These states $\varrho_{BE}$ depend on a scalar $a \in [0,1]$, and are given by:

$$\varrho_{BE}(a) = \frac{1}{8a + 1}
\left(
  \begin{array}{ccccccccc}
    a & 0 & 0 & 0 & a & 0 & 0 & 0 & a \\
    0 & a & 0 & 0 & 0 & 0 & 0 & 0 & 0 \\
    0 & 0 & a & 0 & 0 & 0 & 0 & 0 & 0 \\
    0 & 0 & 0 & a & 0 & 0 & 0 & 0 & 0 \\
    a & 0 & 0 & 0 & a & 0 & 0 & 0 & a \\
    0 & 0 & 0 & 0 & 0 & a & 0 & 0 & 0 \\
    0 & 0 & 0 & 0 & 0 & 0 & \frac{1 + a}{2} & 0 & \frac{\sqrt{1 - a^{2}}}{2} \\
    0 & 0 & 0 & 0 & 0 & 0 & 0 & a & 0 \\
    a & 0 & 0 & 0 & a & 0 & \frac{\sqrt{1 - a^{2}}}{2} & 0 & \frac{1 + a}{2} \\
  \end{array}
\right)
$$

Surprisingly, for norms of the generated separable ensemble of 0.6 and below, all bound entangled states are detected. The obtained results are shown in Table 3.

\begin{table}
  \centering
\begin{tabular}{|c||c|}
  \hline
  \multicolumn{2}{|c|}{3 x 3 Systems}\\
  \hline
  Norm &  Detected States\\
  \hline \hline
  0.1 & 1000\\
  0.2 & 1000\\
  0.3 & 1000\\
  0.4 & 1000\\
  0.5 & 1000\\
  0.6 & 1000\\
  0.7 & 226\\
  0.8 & 149\\
  0.9 & 107\\
  1.0 & 79\\
  \hline
\end{tabular}
\caption{1000 bound entangled states were generated, with parameter ``a" running from 0.001 to 1. The distances of the associated vectors to the different MVCEs were obtained. For norms of the separable ensemble of 0.6 and below, all bound entangled states were detected}
\end{table}

As expected, the distance to the MVCE of the detected states linearly depends on the norm of the associated density operator. This interdependence is depicted in Fig. \ref{BEdist}

\begin{figure}[h]
\centering
  \includegraphics[width=9cm]{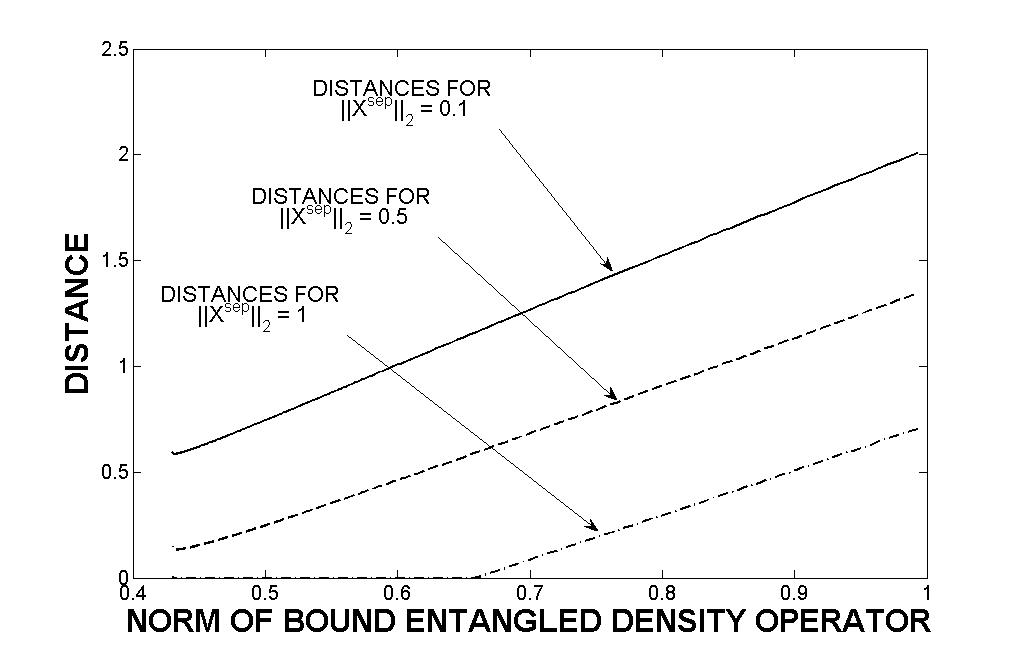}\\
  \caption{There is a linear dependence between the distance to the MVCE and the norm of the associated density operator}\label{BEdist}
\end{figure}

\part{Quantum Information Theory}

\chapter{Classical Information over Quantum Channels}

Quantum Information Theory harbors the possibility of enhancing communication with the help of an genuinely quantum resource: quantum entanglement. Entanglement thus should be regarded as new kind of information. Hence it is natural to expect that different capacities can be defined, depending on which kind of information (classical or quantum) is to be sent over a channel.

The \emph{classical capacity} $C$ is the asymptotical rate at which classical information can be transmitted through a quantum channel. Depending on whether we allow for quantum or classical coding and decoding, the classical capacity unfolds into four different capacities \cite{bennett1998qit} (see fig. \ref{fig13}). We will obviate this fact and consider only a general $C$, which will be the largest capacity among all coding and decoding possibilities. There is another classical capacity, called \emph{entanglement-assisted classical capacity} $C_{E}$. It is the rate at which classical information can be sent over a quantum channel provided that sender and receiver share an unlimited amount of entangled pairs. We will see that this capacity is larger than $C$.

There is also a \emph{quantum capacity} $Q_{1}$ for transmitting \emph{intact} quantum general states. Still there is a \emph{classically-assisted quantum capacity}, $Q_{2}$, for transmitting intact quantum states in parallel with a classical feedback channel, which permits sender and receiver to perform coordinated local operations to fight noise. The characterization of these two quantum capacities is important because it determines how much entanglement can be conveyed through a channel.

Here we will focus only on the former capacities, $C$ and $C_{E}$, since they dwell more in the spirit of classical information theory, and are also better understood. The results of quantum source coding \cite{schumacher1995qc}\cite{jozsa1994npq}, will also not be treated here for conciseness

\section{Quantum Asymptotic Equipartition Property}

In analogy with i.i.d. processes in Chapter 2, it is possible to derive typicality results for quantum  systems. Suppose that a device outputs a system in state $\varphi_{i}$ with probability $p_{i}$, with $\varphi_{i}$ necessarily orthogonal rank one density operators\footnote{This can always be done by diagonalizing the density operator}. The entropy of the system will be $S(\varrho=\sum_{i} p_{i} \varphi_{i}) = H(\mathbf{p})$. Now consider the m-fold tensor product $\varphi_{I} = \varphi_{i_{1}}\otimes\varphi_{i_{2}}\otimes\ldots\otimes\varphi_{i_{m}}$, and call it \emph{sequence density operator}. The eigenvectors of this sequence density operator live in the space $\mathcal{H}^{\otimes m} = \mathcal{H}_{1}\otimes \mathcal{H}_{2}\otimes\ldots\otimes \mathcal{H}_{m}$. Denote by $p_{I} = p_{i_{1}}p_{i_{2}}\ldots p_{i_{m}}$ the product of all probabilities corresponding to a given sequence. The sequence $\varphi_{I}$ will be typical if:

\begin{equation}
    | - \frac{1}{m}\log p_{I} - S(\varrho) | \leq \epsilon, \forall\epsilon
\end{equation}

Likewise, it is possible to define the \emph{typical subspace} of $\mathcal{H}^{\otimes m}$ as the subspace $\Lambda_{\mathcal{T}}$ spanned by the eigenvectors of all typical sequences. The orthogonal projector onto this subspace, $\Pi_{\Lambda}$, has the following properties:

\begin{equation}
    \langle\varphi_{I}, \Pi_{\Lambda}\rangle \geq 1 - \delta
\end{equation}

\begin{equation}
    \langle\varphi_{I} \Pi^{\perp}_{\Lambda}\rangle\leq\delta, \forall\delta
\end{equation}

that is, for sufficiently large m, almost all the probability is contained in the typical subspace. The dimension of the typical subspace will be bounded  by:

\begin{equation}\label{typical0}
    2^{m(S(\varrho) - \epsilon)}\leq Tr(\Pi_{\Lambda}) \leq 2^{m(S(\varrho) + \epsilon)}
\end{equation}

or equivalently,

\begin{equation}\label{QAEP}
    2^{-m(S(\varrho) + \epsilon)}\leq p_{I} \leq 2^{-m(S(\varrho) - \epsilon)}
\end{equation}

which means that as m grows, all typical sequences will tend to be equiprobable.

\subsection{Entanglement Distillation}

These typicality results will allow us to prove Theorem \ref{distillation}. For this we will state the following lemma, whose proof can be found in \cite{nielsen1999cce}:

\begin{lem} A bipartite pure state $\varphi^{AB}$ can be transformed into another pure state $\varrho^{AB}$ by LOCC if and only if the eigenvalues of their reduced density operators satisfy $\lambda(\varphi^{A})\prec\lambda(\varrho^{A})$
\end{lem}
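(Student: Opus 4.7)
The plan is to prove the two implications separately, using the Schmidt decomposition as the main reduction tool and the Hardy--Littlewood--P\'olya / Birkhoff characterization of majorization as the algebraic bridge between LOCC protocols and eigenvalue relations. I would first observe that, by local unitaries (which are LOCC and preserve all Schmidt coefficients), one may assume without loss of generality that both $|\varphi\rangle$ and $|\varrho\rangle$ are already in Schmidt form with respect to a common product basis, so that $\lambda(\varphi^{A})$ and $\lambda(\varrho^{A})$ are literally the squared Schmidt coefficients of the respective vectors.

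For the sufficiency direction ($\lambda(\varphi^{A})\prec\lambda(\varrho^{A})\Rightarrow$ LOCC exists), I would invoke the classical fact that $x\prec y$ iff $x=Dy$ for some doubly stochastic $D$, and then Birkhoff's theorem to write $D=\sum_{k}p_{k}P_{k}$ with $P_{k}$ permutation matrices and $p_{k}\geq 0$, $\sum_{k}p_{k}=1$. I would then construct an explicit one-round protocol: Alice applies a POVM with Kraus operators $A_{k}$ designed so that, conditioned on outcome $k$, the (unnormalised) post-measurement global state is, up to a local permutation on Bob's side, a rescaled copy of $|\varrho^{AB}\rangle$ (with probability $p_{k}$ of occurring). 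Alice communicates $k$ classically and Bob applies the corresponding permutation unitary $P_{k}^{B}$, yielding $|\varrho^{AB}\rangle$ deterministically. The verification amounts to checking $\sum_{k}A_{k}^{\dagger}A_{k}=\mathbf{1}$ using $\sum_{k}p_{k}P_{k}\lambda(\varrho^{A})=\lambda(\varphi^{A})$, which is just the majorization hypothesis rewritten.

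For the necessity direction, I would use the structural fact that any LOCC map taking a pure state to a pure state can be reduced (by grouping rounds and absorbing Bob's conditional corrections) to the form: Alice applies Kraus operators $\{A_{k}\}$, communicates outcome $k$, and Bob applies a conditional unitary $U_{k}^{B}$, with the final state equal to $|\varrho^{AB}\rangle$ in every branch. This forces, for each branch, the reduced state $\varrho^{A}_{k}=A_{k}\varphi^{A}A_{k}^{\dagger}/p_{k}$ to be unitarily equivalent to $\varrho^{A}$, hence to share its spectrum $\lambda(\varrho^{A})$. Averaging over outcomes gives $\varphi^{A}=\sum_{k}p_{k}\varrho^{A}_{k}$, i.e.\ $\varphi^{A}$ is a convex combination of density operators all isospectral with $\varrho^{A}$. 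The eigenvalue vector of such a convex combination is known to be majorized by the common spectrum of the summands (this follows from Ky Fan's maximum principle applied to partial sums of eigenvalues), yielding $\lambda(\varphi^{A})\prec\lambda(\varrho^{A})$.

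The main obstacle is the sufficiency direction: getting the Kraus operators right so that (i) every branch really produces the target Schmidt vector after Bob's permutation, and (ii) the completeness relation $\sum_{k}A_{k}^{\dagger}A_{k}=\mathbf{1}$ holds. Both requirements are simultaneously captured by the doubly stochastic matrix $D$, but writing the $A_{k}$'s explicitly in a way that handles possibly vanishing Schmidt coefficients (so that one avoids dividing by zero when inverting diagonal matrices of $\sqrt{\lambda_{i}^{\varrho}}$) requires some care; restricting the construction to the common support of the spectra, and treating the zero eigenvalues separately, should resolve this technicality.
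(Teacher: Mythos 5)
The paper does not actually prove this lemma---it is stated and immediately deferred to Nielsen's original article---so there is no in-text proof to compare against; I can only assess your argument on its own terms. Your overall architecture (Schmidt form via local unitaries, Hardy--Littlewood--P\'olya plus Birkhoff for sufficiency, the Lo--Popescu one-way reduction for necessity) is exactly the standard route, and the sufficiency half is sound modulo the support technicalities you already flag. Note only that you are leaning on two nontrivial imported facts, the Lo--Popescu reduction of a general LOCC pure-state protocol to ``Alice measures, Bob rotates'' and the Ky Fan argument, and a complete write-up would have to prove or precisely cite both.

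The necessity direction, however, contains a genuine error. You claim that ``averaging over outcomes gives $\varphi^{A}=\sum_{k}p_{k}\varrho^{A}_{k}$'' with $\varrho^{A}_{k}=A_{k}\varphi^{A}A_{k}^{\dagger}/p_{k}$. This identity is false: $\sum_{k}A_{k}\varphi^{A}A_{k}^{\dagger}$ is the \emph{post}-measurement average state of the measured subsystem, which does not equal the pre-measurement state $\varphi^{A}$ for a general POVM ($\sum_{k}A_{k}^{\dagger}A_{k}=\mathbf{1}$ does not imply $\sum_{k}A_{k}(\cdot)A_{k}^{\dagger}=\mathrm{id}$). Indeed, in your own setup each branch ends in $\varrho^{AB}$ up to a unitary on B alone, so each $\varrho^{A}_{k}$ is not merely isospectral with $\varrho^{A}$ but \emph{equal} to it, and your identity would read $\varphi^{A}=\varrho^{A}$, i.e.\ that LOCC cannot change the state at all. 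Two standard repairs are available. Either run the averaging argument on Bob's (passive) side, where no-signalling does give $\varphi^{B}=\sum_{k}p_{k}\,U_{k}^{\dagger}\varrho^{B}U_{k}$, and then use that the two reduced operators of a bipartite pure state share their nonzero spectrum; or follow Nielsen's polar-decomposition route: from $A_{k}\varphi^{A}A_{k}^{\dagger}=p_{k}\varrho^{A}$ write $A_{k}\sqrt{\varphi^{A}}=\sqrt{p_{k}}\sqrt{\varrho^{A}}\,V_{k}$ with $V_{k}$ unitary, so that $\varphi^{A}=\sum_{k}\sqrt{\varphi^{A}}A_{k}^{\dagger}A_{k}\sqrt{\varphi^{A}}=\sum_{k}p_{k}V_{k}^{\dagger}\varrho^{A}V_{k}$, and only then apply your Ky Fan step to this random-unitary mixture. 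With either substitution the proof closes correctly.
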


\begin{proof}[Proof of Theorem \ref{distillation}] Suppose we have $(\varphi^{AB})^{\otimes m}$ such that its reduced density operator is $(Tr_{B}\varphi^{AB})^{\otimes m} = \sum_{I} p_{I} \varphi_{I}$. As m grows, the eigenvalues of this density operator will satisfy eq. \ref{QAEP}.

The reduced density operator of a maximally entangled state has maximum entropy $S(Tr_{B}\varrho^{AB}_{\psi_{1}}) = \log 2 = 1$. Now consider n copies of the maximally entangled state, $(\varrho^{AB}_{\psi_{1}})^{\otimes n}$. As n grows, its eigenvalues of $(Tr_{B}\varrho^{AB}_{\psi_{1}})^{\otimes n}$ will also satisfy eq.\ref{QAEP}, so they will be constrained to take values arbitrarily close to $2^{-n}$.

We have the inequality:

\begin{equation}
    2^{-m (S(\varphi^{A}) \pm \epsilon)} \leq 2^{-n \pm \epsilon'}
\end{equation}

and by previous lemma, if the entropy $S(Tr_{B}\varphi^{AB}) \approx \frac{n}{m} $, then it will be possible to transform m copies of the arbitrarily entangled state $\varphi^{AB}$ into n copies of $\varrho^{AB}_{\psi_{1}}$. This argument can be carried out symmetrically considering the other subsystem.
\end{proof}

\section{Quantum Channels}

A quantum channel $\mathcal{C}$ is a map from one algebra into another:

$$\mathcal{C}: \mathfrak{A} \longrightarrow \mathfrak{B}$$

Classically, a channel induces some \emph{noise} due to the stochastic nature of its associated transition matrix. In Quantum Mechanics time evolution of a closed system is completely deterministic. However, the system will generally couple to unaccessible degrees of freedom corresponding to dynamical variables of its environment. At the end, only the system will be observed. Tracing out the environment can introduce noise in the resulting density operator. The effect of a channel $\mathcal{C}$ onto $\varrho$ can be expressed as:

\begin{equation}
    \mathcal{C}(\varrho) = Tr_{E}(\mathbb{U}(\varrho\otimes\varrho^{E}_{1})\mathbb{U}^{\dagger})
\end{equation}

where $\varrho^{E}_{1}$ is the initial pure state of the environment, and $\mathbb{U}$ is some time evolution operator acting on the global algebra, which may, or may not, couple the system and its environment. Some physical requirements are:

\begin{itemize}
  \item It should be a \emph{completely positive} map. This stems from the fact that if $\varrho^{AB}$ is the state of a composite system, and only one of the subsystems is sent over the channel, the result still should be a density operator. This has profound consequences, such as the duality between channels and entangled states.
  \item It should be a \emph{trace preserving} map, $Tr(\mathcal{C}(\varrho)) = 1$. This is the demand that any POVM on the channel's output is normalized to one\footnote{Non-trace preserving maps are interesting to describe measurements as a channel from algebra of quantum systems to the algebra of classical systems. Non-trace preserving maps are also interesting when for some reason there is a probability leakage in the channel, i.e. when sometimes the channel produces no output at all, so $Tr(\mathcal{C}(\varrho)) \leq 1$}.
  \item It should be a \emph{convex-linear} map, $\mathcal{C}(\sum_{i}p_{i}\varrho_{i}) = \sum_{i} p_{i} \mathcal{C}(\varrho_{i}) $. In other words, channel effects should be regardless of the convex decomposition of the input density operator, pretty much in the same way that classically a transition matrix doesn't depend on the input probability distribution.
\end{itemize}

It turns out \cite{nielsen2000qca} that this requirements are necessary and sufficient to come to the \emph{operator sum} representation of channels:

\begin{equation}\label{operatorsum}
    \mathcal{C}(\varrho) = \sum^{N}_{i=1} A_{i} \varrho A^{\dagger}_{i}
\end{equation}

with $N \leq dim(\mathfrak{A}) dim(\mathfrak{B})$ and $A_{i}: \mathcal{H}_{A} \rightarrow \mathcal{H}_{B}$ are called \emph{Kraus operators}. The cannel is trace preserving, hence:

\begin{equation}
    \sum^{N}_{i=1} A^{\dagger}_{i} A_{i} = \mathbf{1}
\end{equation}

Consider an orthogonal resolution of the identity $\{\varrho^{E}_{i}=|e_{i}\rangle\langle e_{i}|\}^{N}_{i=1}$ for the environment\footnote{Here we assume implicitly that at most $d^{2}$ dimensions are necessary to model the environment for one system of dimension d} and suppose for simplicity that $\varrho = |\chi\rangle\langle\chi|$. Time evolution will correlate the state with its environment. Since the environment cannot be measured, this correlation cannot be exploited. Tracing out the environment it is easy to check that:

\begin{eqnarray}
  &&Tr_{E}(\mathbb{U}(|\chi\rangle\langle\chi|\otimes|e_{1}\rangle\langle e_{1}|)\mathbb{U}^{\dagger}) =\nonumber\\
  &=& \sum^{N}_{i=1} \langle e_{i}|[\mathbb{U}(|\chi\rangle\langle\chi|\otimes|e_{1}\rangle\langle e_{1}|)\mathbb{U}^{\dagger}]|e_{i}\rangle = \nonumber\\
  &=& \sum^{N}_{i=1} A_{i}\varrho A^{\dagger}_{i}
\end{eqnarray}

with $A_{i} = \langle e_{i}|\mathbb{U}|e_{1}\rangle$. There is a straightforward interpretation of eq. \ref{operatorsum}. From linearity and the assumption that the channel is trace preserving, $Tr(A_{i}\varrho A^{\dagger}_{i}) = p_{i}$ is the probability that the channel outputs $\varrho_{i}$. Using Bayes' rule we have that $\varrho_{i} = \frac{A_{i}\varrho A^{\dagger}_{i}}{Tr(A_{i}\varrho A^{\dagger}_{i})}$, so:

\begin{equation}
    \mathcal{C}(\varrho) = \sum_{i} p_{i} \varrho_{i}
\end{equation}

Thus, the stochastic behavior of the channel comes from the interaction of the sent system with its environment. In fact, for ideal channels, i.e. for cases where the system and its environment don't interact, evolution is unitary and only one Kraus operator is needed to define the channel. However this is not of much relevance, since the mere fact of observing a system can be described as a highly non-ideal channel(see \ref{sec:decoherence}).

\section{Classical Capacity of a Quantum Channel}

In a classical communication setting, the inference capability of the receiver is related to the mutual information between the sender and receiver's probability distributions. This capability will depend on the nature of the channel: for a noiseless channel the mutual information attains a maximum $I(\mathbf{p}^{X};\mathbf{p}^{\hat{X}}) = H(\mathbf{p}^{X})$, with $\mathbf{p}^{\hat{X}}=f(\mathbf{q}^{Y})$. For a noisy channel, mutual information can be maximized by finding the optimal probabilities of the input code.

\begin{figure}[h]
\centering
  \includegraphics[scale=0.5]{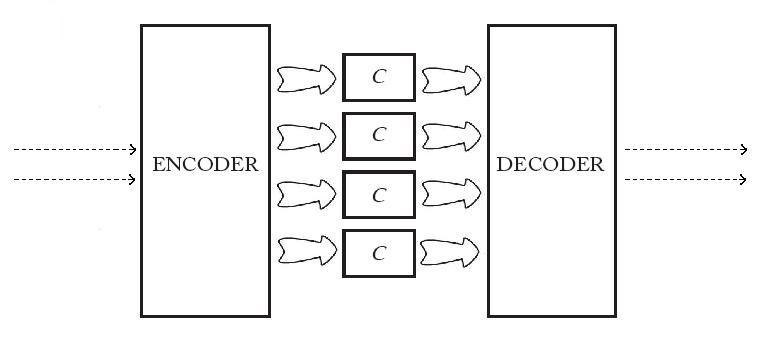}\\
  \caption{Quantum Channel. Two classical bits are encoded in four qubits and sent over the channel. Channel inputs can be entangled or not. The measurement at the decoder usually is over the joint qubit sequence.}\label{fig13}
\end{figure}

For some reason, one might want to use quantum states to encode classical bits. What makes this scenario interesting is that the quantum states states may not in general be orthogonal. In fact, this can be desirable in some cases \cite{fuchs1997nqs}\cite{peres1991odq}. Then, the Holevo's bound (see section \ref{sec:holevo}) is telling us that the maximum accessible information is bounded by:

\begin{equation}\label{holev2}
I(X:Y) \leq S(\varrho^{X}) - \sum^{n}_{i=1}p^{X}_{i}S(\varrho_{i})
\end{equation}\

This suggests that, in contrast to classical channel coding, quantum channel coding demands that two optimizations be carried out to find the optimal performance of a channel. First, the optimal measurement strategy that maximizes \ref{holev2} must be found, this is a search in the set $\mathfrak{M}$. In second place, the optimal input probability is to be found in the classical probability simplex $\mathfrak{P}_{n}$, just as in the classical case. We can now define the classical capacity of a quantum channel as:

\begin{equation}
    C= \max_{\mathbf{p}^{X}} \max_{\mathbf{M}} I(X:Y)
\end{equation}

such that $\mathbf{M}(\varrho^{X})= \mathbf{q}^{Y}$. A direct consequence of the No-Cloning Theorem (see \ref{sec:nocloning}) is that, when the system appears at the receiver's side, it must have disappeared at the sender's side. Since the random variables $X$ and $Y$ are directly related to the same system $\varrho^{X}$ at \emph{different} times, a joint probability distribution is lacking a true consistent meaning. Hence, the concept of mutual information is meaningless, as long as it refers to the information that one systems contains about itself prior to having been sent through a channel. As we will see, this subtlety precludes the use of joint typicality arguments in coding theorems of quantum channels.

The optimal measurement strategy is given by the Holevo's bound, so we have:

\begin{equation}\label{capacityquantum}
    C= \max_{\mathbf{p}^{X}} S(\varrho^{X}) - \sum^{n}_{i=1}p^{X}_{i}S(\varrho_{i})
\end{equation}

Following \cite{hausladen1996cic} \cite{holevo1998cqc} \cite{schumacher1997sci}, we state a coding theorem for quantum noisy channels:

\begin{thm}\textcolor[rgb]{0.00,0.00,1.00}{[Holevo-Schumacher-Westmoreland Theorem]} A quantum noisy channel $\mathcal{C}: \mathfrak{A} \longrightarrow \mathfrak{B}$ can be used to transmit information reliably if and only if $R\leq C$, with capacity defined as:

\begin{equation}
    C= \max_{\mathbf{p}^{X}} S(\mathcal{C}(\varrho^{X})) - \sum^{n}_{i=1}p^{X}_{i}S(\mathcal{C}(\varrho_{i}))
\end{equation}

where $\varrho_{i}$ are the input states and $\sigma_{i} = \mathcal{C}(\varrho_{i})$ represent the states to be measured by the decoder.
\end{thm}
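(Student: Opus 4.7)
The plan is to mimic the structure of Shannon's channel coding theorem proved earlier (random coding plus Fano--Holevo converse), making two quantum-specific substitutions: classical weak typicality of sequences is replaced by the typical subspaces of equation~\ref{typical0}--\ref{QAEP}, and the classical mutual-information bound is replaced by the Holevo bound~\ref{holevobound}. As the text emphasises, the No-Cloning Theorem forbids a genuine joint input--output state, so the achievability argument cannot lean on joint typicality; it must lean on conditional and average typical subspace projectors instead.

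For achievability I would fix an input ensemble $\{p_i^X,\varrho_i\}$ attaining $C$, set $\sigma_i = \mathcal{C}(\varrho_i)$ and $\bar\sigma = \sum_i p_i^X\sigma_i$, and generate a random code by drawing, for each message $w\in\{1,\ldots,2^{mR}\}$, a sequence $X^m(w)$ i.i.d.\ from $\mathbf{p}^X$. The sender transmits $\varrho_{X^m(w)} = \varrho_{X_1(w)}\otimes\cdots\otimes\varrho_{X_m(w)}$ and the receiver obtains $\sigma_{X^m(w)}$. Let $\Pi$ project onto the typical subspace of $\bar\sigma^{\otimes m}$, and $\Pi_{X^m(w)}$ onto the conditional typical subspace of $\sigma_{X^m(w)}$. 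The decoder uses the square-root (``pretty-good'') POVM built from $A_w = \Pi\,\Pi_{X^m(w)}\,\Pi$, namely $M_w = S^{-1/2}A_wS^{-1/2}$ with $S=\sum_{w'}A_{w'}$; this is the quantum analogue of ``look for a codeword jointly typical with the output''. The error analysis then hinges on four consequences of the Quantum AEP: (i) $Tr(\sigma_{X^m(w)}\Pi)\geq 1-\delta$; (ii) $Tr(\sigma_{X^m(w)}\Pi_{X^m(w)})\geq 1-\delta$; (iii) $Tr(\Pi_{X^m(w)})\leq 2^{m(\bar S+\epsilon)}$ with $\bar S=\sum_i p_i^X S(\sigma_i)$; and (iv) the compression inequality $\Pi\,\bar\sigma^{\otimes m}\,\Pi\leq 2^{-m(S(\bar\sigma)-\epsilon)}\Pi$. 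The expected error per message splits into a ``correct-codeword'' term, handled by (i)--(ii) via a gentle-measurement estimate, and a ``cross-term'' $\sum_{w'\neq w}Tr(A_{w'}\sigma_{X^m(w)})$. Averaging the latter over the random code and exploiting independence of distinct codewords yields a bound of the form $(2^{mR}-1)\,Tr(\bar\Pi_{X^m}\Pi\,\bar\sigma^{\otimes m}\Pi)\leq 2^{-m(C-R-2\epsilon)}$ via (iii)--(iv), which vanishes when $R<C$.

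For the weak converse I would, as in Shannon's proof, use Fano's inequality (eq.~\ref{fano}) to write $mR = I(W;\hat W) + H(W|\hat W) \leq I(W;\hat W) + m\epsilon_m$. The data-processing inequality gives $I(W;\hat W)\leq I(W;Y^m)$ where $Y^m$ denotes the classical measurement outcome, and the Holevo bound (Theorem in Section~\ref{sec:holevo}) then yields $I(W;Y^m)\leq \chi(\{2^{-mR},\mathcal{C}^{\otimes m}(\varrho_{X^m(w)})\}) = S(\bar\sigma^{(m)}) - \sum_w 2^{-mR}S(\sigma_{X^m(w)}^{(m)})$. Restricted to product-state encodings $\varrho_{X^m(w)}=\bigotimes_k\varrho_{X_k(w)}$, the second term factorises exactly into $m\sum_i p_i^X S(\sigma_i)$, while subadditivity of von Neumann's entropy (Theorem~\ref{thm:subadd}) forces $S(\bar\sigma^{(m)})\leq mS(\bar\sigma)$. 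Combining gives $\chi\leq mC$ and therefore $R\leq C + \epsilon_m \to C$.

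The main obstacle is the construction and analysis of the decoding POVM. The operators $A_w$ are not themselves a resolution of the identity, and the square-root trick is exactly what turns them into a legitimate POVM while preserving the operator inequalities that drive (i)--(iv); the passage from the ``raw'' $A_w$ error estimate to the bound on the actual $M_w$ typically requires the Hayashi--Nagaoka operator inequality, which is the quantum substitute for the union bound. Equally non-trivial is bounding the cross-terms without joint typicality: this is where property (iv) enters, squeezing the average output state to nearly uniform spectrum on a subspace of dimension $2^{mS(\bar\sigma)}$, and playing the role of the classical counting ratio $2^{mH(X|Y)}/2^{mH(X)}=2^{-mI(X;Y)}$ that appeared in Shannon's proof. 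A final caveat worth flagging is that the converse above is tight only for product-state encodings; allowing entangled codewords yields the regularised capacity $\lim_{n\to\infty}\frac{1}{n}C(\mathcal{C}^{\otimes n})$, which need not equal $C$ since $\chi$ is not additive in general.
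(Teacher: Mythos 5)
Your proposal follows essentially the same route as the text: a random product-state code, the square-root decoding POVM built from $\Pi_{\Lambda}\Pi_{w}\Pi_{\Lambda}$, the same four typical-subspace estimates controlling the correct-codeword term and the cross terms, and a Fano--Holevo--subadditivity converse. The only cosmetic difference is that you invoke the Hayashi--Nagaoka operator inequality to handle the square-root measurement where the text expands the Gram matrix $\Gamma^{\frac{1}{2}}$ directly via elementary polynomial bounds; your closing caveat about entangled codewords and the regularised capacity is correct but goes beyond what the text addresses.
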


\begin{proof}[$\Rightarrow$Proof of Achievability] Suppose that the message $w = (i_{1}, i_{2}, ..., i_{m})$ is to be sent. The sender will construct $\varrho_{w} = \varrho_{i_{1}}\otimes\varrho_{i_{2}}\otimes....\otimes\varrho_{i_{m}}$ and the channel's output will be $\sigma_{w} = \mathcal{C}^{\otimes m}(\varrho_{w})$.

The probability of successfully identifying $\sigma_{w}$ is $\langle\sigma_{w}, M_{w}\rangle$, where $M_{w}$ is a measurement for index $w$. An error will be declared with probability:

\begin{equation}\label{Perror}
    p_{e} = 2^{-mR}\sum^{2^{mR}}_{w=1} (1 - \langle\sigma_{w}, M_{w}\rangle)
\end{equation}

Classically, one could resort to joint typicality arguments to build a proof. Since quantum physics prevents us from considering the mutual information of two distributions that exist at \emph{different} times, we cannot follow this way. Instead we will consider two different applications of typicality, one concerning which sequence density operators will be more likely (much like in the classical setting), and the other sort of quantifying how many sequences can be considered to be ``close" to a \emph{fixed} sequence density operator.

Let $\bar{\sigma} = \sum^{n}_{i} p_{i} \sigma_{i}$ be an average output of the channel with spectral decomposition $\bar{\sigma} = \sum_{j} \lambda_{j}|e_{j}\rangle\langle e_{j}|$. Consider the m-fold tensor product $\bar{\sigma}^{\otimes m}=\sum_{J} \lambda_{J}|e_{J}\rangle\langle e_{J}|$, with $J=(j_{1}, j_{2}, ..., j_{m})$. Define $\Pi_{\Lambda} = \sum_{J\in\mathcal{T}} |e_{J}\rangle\langle e_{J}|$ as the projector onto the typical subspace $\Lambda_{\mathcal{T}} \in H^{\otimes m}$ spanned by the eigenvectors of all typical sequence density operators:

\begin{equation}
\mathcal{T} =\{J:  2^{-m(S(\bar{\sigma}) + \epsilon)}\leq \lambda_{J} \leq 2^{-m(S(\bar{\sigma}) - \epsilon)}\}
\end{equation}

Then:

\begin{equation}\label{typical1}
    \langle \bar{\sigma}^{\otimes m}, \Pi_{\Lambda}\rangle \geq 1 -\delta
\end{equation}

Now, let $\sigma_{w}$ be the output sequence of the channel. It has a spectral decomposition:

\begin{equation}
    \sigma_{w} = \sum_{J} \lambda^{w}_{J}|e^{w}_{J}\rangle\langle e^{w}_{J}|
\end{equation}

$\sigma_{w}$ will be the tensor product of about $mp_{1}$ copies of $\sigma_{1}$, $mp_{2}$ copies of $\sigma_{2}$, and so on... Define the \emph{average per symbol entropy} of the sequence as $\bar{S}(\sigma_{w}) = \sum_{i}p_{i}S(\sigma_{i})$. Interchanging the two definitions of entropy, we build the projector $\Pi_{w} = \sum_{J\in\mathcal{T}_{w}} |e_{J}\rangle\langle e_{J}|$, where:

\begin{equation}
\mathcal{T}_{w} =\{J:  2^{-m(\bar{S}(\sigma_{w}) + \epsilon)}\leq \lambda^{w}_{J} \leq 2^{-m(\bar{S}(\sigma_{w}) - \epsilon)}\}
\end{equation}

Then:

\begin{equation}\label{typical2}
    \langle \sigma_{w}, \Pi_{w}\rangle \geq 1 -\delta
\end{equation}

Next thing to do is to define the POVM associated to the decoder $\mathbf{M} = \{M_{w}\}^{2^{mR}}_{w=1}$. Each component of the POVM should be very close to the typical projector $\Pi_{w}$. Since only typical sequences will be assigned a codeword:

\begin{equation}
M_{w}= (\sum_{w'}\Pi_{\Lambda}\Pi_{w'}\Pi_{\Lambda})^{-\frac{1}{2}} \Pi_{\Lambda}\Pi_{w}\Pi_{\Lambda} (\sum_{w'}\Pi_{\Lambda}\Pi_{w'}\Pi_{\Lambda})^{-\frac{1}{2}}
\end{equation}

which makes sure that no non-typical sequence will be considered inside the typical subspace of the sequence $\sigma_{w}$. The generalized inverse square roots \footnote{The operator $X^{-\frac{1}{2}}$ is equal to 0 on $Ker X$ and equal to $(X^{\frac{1}{2}})^{-1}$ on $Ker X^{\perp}$} are introduced for normalization.

Now that the two concepts of typicality are introduced, return to eq. \ref{Perror}:

\begin{eqnarray}\label{Perror2}
  p_{e}&=&2^{-mR}\sum^{2^{mR}}_{w=1}(1-\sum_{J}\sum_{J'\in\mathcal{T}_{w}}\lambda^{w}_{J}\alpha^{2}_{(w,J),(w,J')}) \nonumber\\
  &\leq&2^{-mR}\sum^{2^{mR}}_{w=1}(\sum_{J\in\mathcal{T}_{w}}\lambda^{w}_{J}(1 - \alpha^{2}_{(w,J),(w,J)}) + \sum_{J\not\in\mathcal{T}_{w}}\lambda^{w}_{J})  \nonumber\\
  &\leq&2^{-mR}\sum^{2^{mR}}_{w=1}(2\sum_{J\in\mathcal{T}_{w}}\lambda^{w}_{J}(1 - \alpha_{(w,J),(w,J)}) + \sum_{J\not\in\mathcal{T}_{w}}\lambda^{w}_{J})
\end{eqnarray}

where

$$\alpha_{(w,J),(w',J')} = \langle e^{w}_{J}|\Pi_{\Lambda} (\sum_{w''}\Pi_{\Lambda}\Pi_{w''}\Pi_{\Lambda})^{-\frac{1}{2}} \Pi_{\Lambda}|e^{w'}_{J'}\rangle$$

The first inequality follows from omitting some non-positive cross terms and the relation $\sum_{J}\lambda^{w}_{J} =1$. The second inequality comes from considering the componentwise inequality $(1 - x)^{2}\leq (1 + x)(1 -x)\leq 2(1-x), x\in[0,1]$.

Once we realize that the $\alpha_{(w,J),(w',J')}$ are the entries of the square root of the Gram matrix $\Gamma = [\langle e^{w}_{J}|\Pi_{\Lambda}|e^{w'}_{J'}\rangle ] = [\gamma_{(w,J),(w',J')}] $, it is possible to express first term of the last member in eq. \ref{Perror2} as:

\begin{eqnarray}\label{Perror3}
   &&2\sum^{2^{mR}}_{w=1}\sum_{J\in\mathcal{T}_{w}}\lambda^{w}_{J}(1 - \alpha_{(w,J),(w,J)})\nonumber\\ &=&2Tr_{(w,J)}(\Lambda(\mathbb{I} - \Gamma^{\frac{1}{2}}))\nonumber\\
&=&Tr_{(w,J)}(\Lambda(\mathbb{I} - \Gamma^{\frac{1}{2}})^{2}) + Tr_{(w,J)}(\mathbb{I} - \Gamma)\nonumber\\
&\leq&Tr_{(w,J)}(\Lambda(\mathbb{I} - \Gamma)^{2}) + Tr_{(w,J)}(\mathbb{I} - \Gamma)\nonumber\\
&=& \sum^{2^{mR}}_{w=1}\sum_{J\in\mathcal{T}_{w}}\lambda^{w}_{J}[2 - 3\gamma_{(w,J),(w,J)} + \sum^{2^{mR}}_{w'=1}\sum_{J'\in\mathcal{T}_{w}}(\gamma_{(w,J),(w',J')}\gamma_{(w',J'),(w,J)})]\nonumber\\
&=&\sum^{2^{mR}}_{w=1}\sum_{J\in\mathcal{T}_{w}}\lambda^{w}_{J}[2 - 3\gamma_{(w,J),(w,J)} + \gamma^{2}_{(w,J),(w,J)} + \sum_{J'\neq J}\gamma^{2}_{(w,J),(w,J')} \nonumber\\
&&+ \sum_{w'\neq w}\sum_{J'\in\mathcal{T}_{w'}}\gamma^{2}_{(w,J),(w',J')}]
\end{eqnarray}

where $\Lambda = diag(\lambda^{w}_{J})$, and ``$Tr_{(w,J)}$" denotes the trace with respect to this joint index of the Gram matrices, instead of the usual trace over the dimension of density operators. Using the fact that $2 - 3x + x^{2}\leq 2 - 2x, x\in[0,1]$, we see that that \ref{Perror3} is upper-bounded by:

\begin{eqnarray}
  p_{e} &\leq& 2^{-mR}\sum^{2^{mR}}_{w=1}\{\sum_{J}\lambda^{w}_{J}[2 - 2\gamma_{(w,J),(w,J)}+ \sum_{J'\neq J}\gamma^{2}_{(w,J),(w,J')} \nonumber\\
&& + \sum_{w'\neq w}\sum_{J'\in\mathcal{T}_{w'}}\gamma^{2}_{(w,J),(w',J')}]+ \sum_{J\not\in\mathcal{T}_{w}}\lambda^{w}_{J}\}
\end{eqnarray}

Note that we expanded the range of the sum from $J\in\mathcal{T}_{w}$ to all $J$. Some algebra shows that it is equivalent to:

\begin{eqnarray}
    p_{e} &\leq& 2^{-mR}\sum^{2^{mR}}_{w=1}\{2Tr(\sigma_{w}(\mathbf{1} - \Pi_{\Lambda})) + Tr(\sigma_{w}(\mathbf{1} - \Pi_{\Lambda})\Pi_{w}(\mathbf{1} - \Pi_{\Lambda}))\nonumber\\
&& + \sum_{w'\neq w}Tr(\Pi_{\Lambda}\sigma_{w}\Pi_{\Lambda}\Pi_{w'}) + Tr(\sigma_{w}(\mathbf{1} - \Pi_{w}))\}\nonumber\\
&\leq&2^{-mR}\sum^{2^{mR}}_{w=1}\{3Tr(\sigma_{w}(\mathbf{1} - \Pi_{\Lambda}))+ \sum_{w'\neq w} Tr(\Pi_{\Lambda}\sigma_{w}\Pi_{\Lambda}\Pi_{w'}) + Tr(\sigma_{w}(\mathbf{1} - \Pi_{w}))\}\nonumber\\
\end{eqnarray}

here we used that $\langle e^{w}_{J}|e^{w}_{J'}\rangle = 0$ to introduce the tautology $\Pi_{\Lambda} = \mathbf{1} - \mathbf{1} + \Pi_{\Lambda}$. The last inequality follows from the fact that $Tr(\sigma_{w}(\mathbf{1} - \Pi_{\Lambda})\Pi_{w}(\mathbf{1} - \Pi_{\Lambda}))\leq Tr(\sigma_{w}(\mathbf{1} - \Pi_{\Lambda})\Pi_{w})$.

Finally, applying the concept of random coding to symmetrize over all codewords, we see that:

\begin{eqnarray}
  p_{e} &=& \sum_{\mathcal{C}}p(\mathcal{C})p_{e}(\mathcal{C}) \nonumber\\
  &\leq& \sum_{\mathcal{C}}p(\mathcal{C})\{2^{-mR} \sum^{2^{mR}}_{w=1} [3Tr(\bar{\sigma}^{\otimes m}(\mathbf{1} - \Pi_{\Lambda}))+ \nonumber\\
  &&\sum_{w'\neq w} Tr(\Pi_{\Lambda}\bar{\sigma}^{\otimes m}\Pi_{\Lambda}\Pi_{1'}) + Tr(\sigma_{1}(\mathbf{1} - \Pi_{1}))]\}\nonumber\\
  &\leq& 4\delta + (2^{mR} - 1)Tr(2^{-m(S(\bar{\sigma}) - \epsilon)}\Pi_{1'})\nonumber\\
  &\leq& 4\delta + (2^{mR} - 1)2^{-m(S(\bar{\sigma}) - \epsilon)}2^{m(\bar{S}(\sigma) + \epsilon)}
\end{eqnarray}

where we used typicality arguments of eqs. \ref{typical0}, \ref{typical1}, \ref{typical2}, and the fact that $\Pi_{\Lambda}\bar{\sigma}^{\otimes m}\Pi_{\Lambda} \leq 2^{-m(S(\bar{\sigma}) - \epsilon)}\mathbf{1}$. This proves that whenever $R\leq S(\bar{\sigma}) - \bar{S}$, the probability of error goes to zero as $m$ grows.
\end{proof}

\begin{proof}[$\Leftarrow$Weak Converse]To prove that if $R > S(\bar{\sigma}) - \bar{S}$, the error is bounded away from zero, we will use Fanno's inequality (see eq. \ref{fanno2}), as in the classical case:

\begin{equation}\label{fanno3}
H(W|Y^{m})\leq mP_{e}R + 1 = m\epsilon_{m}
\end{equation}

Technically, we are going to prove that when the error goes to zero, then the rate must necessarily be less than the capacity. Assuming that messages are equiprobable:

\begin{eqnarray}
  mR &=& H(\mathbf{p}^{W}) \nonumber\\
   &=& I(W,Y^{m}) + H(W|Y^{m})\nonumber\\
   &\leq& I(W,Y^{m}) + m\epsilon_{m}\nonumber\\
   &\leq& S(\sum^{2^{mR}}_{w=1}\frac{\sigma_{w}}{2^{mR}}) - \sum^{2^{mR}}_{w=1}\frac{S(\sigma_{w})}{2^{mR}} + m\epsilon_{m}\nonumber\\
   &\leq& \sum^{m}_{i = 1}[S(\sum^{2^{mR}}_{w=1}\frac{\sigma^{w}_{i}}{2^{mR}}) - \sum^{2^{mR}}_{w=1}\frac{S(\sigma^{w}_{i})}{2^{mR}}] + m\epsilon_{m}\nonumber\\
   &\leq& m[S(\sum^{2^{mR}}_{w=1}\frac{\sigma^{w}_{i}}{2^{mR}}) - \sum^{2^{mR}}_{w=1}\frac{S(\sigma^{w}_{i})}{2^{mR}}]  + m\epsilon_{m}\nonumber\\
   &\leq& mC + m\epsilon_{m}\nonumber
\end{eqnarray}

Second and third inequality follow from the Holevo's bound and the subadditivity of von Neumann's entropy, respectively. Last inequality is due to the definition of the capacity, since all terms in the sum are no greater than the capacity as defined in eq. \ref{capacityquantum}. Thus, we proved that if $R\geq C$ as $m$, then the error must be bounded away from zero as $m$ grows.

\end{proof}

For trace preserving channels (the ones being considered here), it was found that transmitting entangled states does not increase the capacity \cite{king2001mes}.

\section{Entanglement-Enhanced Classical Communication}

The main goal of this dissertation was to argue that entanglement can be used to increase the classical capacity of information transfer. As an example, consider the \emph{Quantum Erasure Channel} (QEC). The QEC is a map from an algebra of dimension $N$ to an algebra of dimension $N+1$. It maps an input state to itself with probability $1-\epsilon$. With probability $\epsilon$ the channel maps its input state to an \emph{erasure symbol} state, orthogonal to all input states. For the qubit case, the QEC would take as input states $\varrho_{0}=|0\rangle\langle 0|$, $\varrho_{1}=|1\rangle\langle 1|$ to $\varrho_{0}=|0\rangle\langle 0|$, $\varrho_{1}=|1\rangle\langle 1|$, and $\varrho_{2}=|2\rangle\langle 2|$, with $\langle 0,2\rangle = \langle 1,2\rangle = 0$. The classical capacity of this binary erasure channel is given by \cite{cover2006eit}:

\begin{equation}
    C = 1 - \epsilon
\end{equation}

It was already shown in \ref{sec:superdense} that sharing a maximally entangled pair permits to send two classical bits encoded in just one qubit. If the sender and receiver share an unlimited amount of maximally entangled pairs, it is possible for the sender to pre-process its entangled subsystem in such a way that the total entropy of the state will be:

\begin{equation}
    S_{T} = S(Tr_{B}\varrho^{AB}_{\psi}) + \log 2
\end{equation}

where the first term is the von Neumann's entropy of the reduced density operator, which is at a maximum for EPR pairs, and zero for separable states.  The second term is due to the choice of sender between $\varrho_{0}$ and $\varrho_{1}$\footnote{It assumes that both symbols are equiprobable, which maximizes capacity.}. It will always be larger than the entropy of any classical state. The $C_{E}$ of the QEC is given by \cite{bennett1997cqe}:

\begin{equation}
    C = 2(1 - \epsilon)
\end{equation}

\begin{figure}[h]
\centering
  \includegraphics[width=7.5cm]{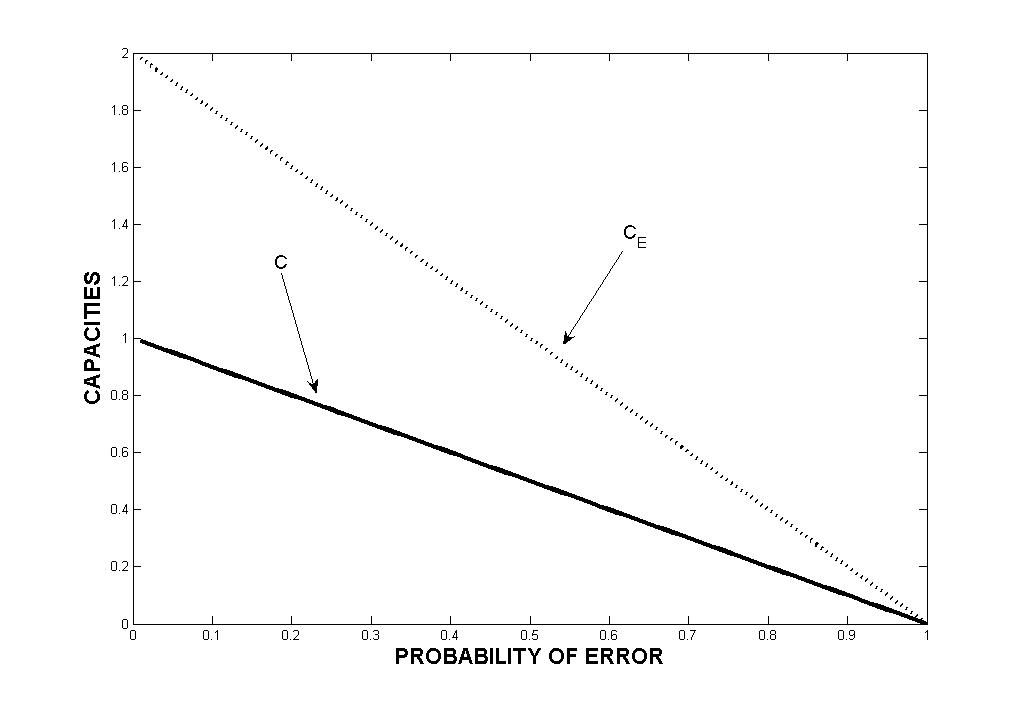}\\
  \caption{Both Capacities $C$ and $C_{E}$ versus the probability of error, $\epsilon$}\label{capacities}
\end{figure}

Now we come to the point where it is possible for us to state that \cite{bennett1999eac}:

\begin{center}
\textbf{The entanglement-assisted classical capacity will be always larger than the unassisted classical capacity.}
\end{center}

\chapter*{Conclusions and Final Words}

I feel somewhat ashamed of not having included important topics such as quantum source coding, or the duality between channels and entangled states. Moreover, at early stages of this thesis I intended to include, as well, an introduction to \emph{multiple user quantum information theory}, so at the end this work has fallen short of what it was meant to be. A decision was made on a basis of time budget, and I hope that this lack will not preclude a self-contained read of the text.

Besides the aforementioned subjects, I would find it interesting to study more in depth some other topics such as \emph{quantum rate distortion theory}, \emph{quantum signal processing} or \emph{quantum cryptography}. However this would demand too much more efforts than those expected in a master's thesis.

Remarkably, the most technical part of this work was done in the scope of \emph{convex optimization}, which is not to surprise anyone in the Information Theory community. A method for classifying entangled and separable states based on a Minimum Volume Covering Ellipsoid was devised by myself (to my knowledge, no one had done this before) for a class project. In a sense, this constitutes the \emph{state-of-the-art} part of the thesis. At the time of writing, a document explaining the method can be found in the \emph{arxiv.org} database.

A word is to be said about my previous knowledge of Quantum Information Theory, concerning the fact that I first read a paper on this subject exactly one year ago. Since then, most of my efforts have been aimed at reaching a positive semidefinite level of expertise in this field. Thus I would say that my contribution amounts to a compilation of the knowledge which I judged essential to understand the role of entanglement in classical channel capacities.

\clearpage

\end{document}